\newcommand{\ketbra}[2]{\ket{#1}\!\bra{#2}}
\newtheorem{theorem}{Theorem}[section]
\newtheorem{lemma}[theorem]{Lemma}
\newtheorem{proposition}[theorem]{Proposition}
\newtheorem{corollary}[theorem]{Corollary}
\theoremstyle{plain}
\newtheorem{definition}[theorem]{Definition}
\newtheorem*{remark}{Remark}
\newtheoremstyle{MyNonumberplain}%
  {\item[\theorem@headerfont\hskip\labelsep ##1\theorem@separator]}%
  {\item[\theorem@headerfont\hskip\labelsep ##3\theorem@separator]}
\theoremstyle{MyNonumberplain}
\newtheorem{proof}{Proof}
\newcommand{\ubar}[1]{\underaccent{\bar}{#1}}
\newcommand{\cA}{\mathcal{A}}
\newcommand{\cH}{\mathcal{H}}
\newcommand{\cD}{\mathcal{D}}
\newcommand{\cX}{\mathcal{X}}
\newcommand{\cF}{\mathcal{F}}
\newcommand{\cB}{\mathcal{B}}
\newcommand{\Bsa}{\mathcal{B}_\text{sa}}
\newcommand{\one}{\mathds{1}}
\newcommand{\eps}{\varepsilon}
\newcommand{\RR}{\mathbb{R}}
\newcommand{\R}{\mathbb{R}}
\newcommand{\rhomaxeps}{\rho^*_\eps}
\newcommand{\rhomineps}{\rho_{*,\eps}}
\newcommand{\pmaxeps}{p^*_\eps}
\newcommand{\pmineps}{p_{*,\eps}}
\renewcommand{\i}{\mathrm{i}}
\DeclareMathOperator{\conv}{conv}
\DeclareMathOperator{\AF}{AF}
\DeclareMathOperator{\pow}{pow}
\DeclareMathOperator{\tr}{Tr}
\DeclareMathOperator{\re}{Re}
\DeclareMathOperator{\dom}{dom}
\DeclareMathOperator{\argmin}{argmin}
\DeclareMathOperator{\argmax}{argmax}
\DeclareMathOperator{\Eig}{Eig}
\DeclareMathOperator{\diag}{diag}
\renewcommand{\epsilon}{\varepsilon}
\newcommand{\Be}{B_\varepsilon}
\newcommand{\Bep}{B_\varepsilon^+}
\renewcommand{\one}{\mathds{1}}
\renewcommand{\H}{\mathcal{H}}
\DeclareMathOperator{\spec}{spec}
\newcommand{\inv}{^{-1}}
\newcommand{\C}{\mathbb{C}}
\newcommand{\D}{\cD}
\renewcommand{\d}{\operatorname{d}\!}
\begin{document}
\title{Maximum and minimum entropy states yielding local continuity bounds}

\author{Eric P. Hanson\thanks{Email: \texttt{ephanson@damtp.cam.ac.uk}} }
\author{Nilanjana Datta\thanks{Email: \texttt{n.datta@damtp.cam.ac.uk}}}
\affil{\small Department of Applied Mathematics and Theoretical Physics, Centre for Mathematical Sciences\\ University of Cambridge, Cambridge~CB3~0WA, UK}

\maketitle
\begin{abstract}
Given an arbitrary quantum state ($\sigma$), we obtain an explicit construction of a state $\rho^*_\eps(\sigma)$ (resp.~$\rho_{*,\eps}(\sigma)$) which has the maximum (resp.~minimum) entropy among all states which lie in a specified neighbourhood ($\eps$-ball) of $\sigma$. Computing the entropy of these states leads to a local strengthening of the continuity bound of the von Neumann entropy, i.e., the Audenaert-Fannes inequality. Our bound is local in the sense that it depends on the spectrum of $\sigma$. The states $\rho^*_\eps(\sigma)$ and $\rho_{*,\eps}(\sigma)$ depend only on the geometry of the $\eps$-ball and are in fact optimizers for a larger class of entropies. These include the R\'enyi entropy and the min- and max- entropies, providing explicit formulas for certain smoothed quantities. This allows us to obtain local continuity bounds for these quantities as well. In obtaining this bound, we first derive a more general result which may be of independent interest, namely a necessary and sufficient condition under which a state maximizes a concave and G\^ateaux-differentiable function in an $\eps$-ball around a given state $\sigma$. Examples of such a function include the von Neumann entropy, and  the conditional entropy of bipartite states. Our proofs employ tools from the theory of convex optimization under non-differentiable constraints, in particular Fermat's Rule, and majorization theory.

\end{abstract}

\section{Introduction}

An important class of problems in quantum information theory concerns the determination of optimal rates of information-processing tasks such as storage and transmission of information, and entanglement manipulation. These optimal rates can be viewed as the {\em{operational quantities}} in quantum information theory. They include the data compression limit of a quantum information source, the various capacities of a quantum channel, and the entanglement cost and distillable entanglement of a bipartite state. The aim in quantum information theory is to express these operational quantities in terms of suitable {\em{entropic quantities}}. Examples of the latter include the von Neumann entropy, R\'enyi entropies, conditional entropies, coherent information and mutual information. 

Tools from the field of convex optimization theory often play a key role in the analysis of the above-mentioned quantities. This is because the relevant operational quantity of an information-processing task is often expressed as a convex optimization problem: one in which an entropic quantity is optimized over a suitable convex set. Convex optimization theory is also useful in studying various properties of the entropic quantities themselves. As we show in this paper, an important property of entropic functions which is amenable to a convex analysis is their continuity.

An explicit continuity bound for the von Neumann entropy was obtained by Fannes~\cite{Fannes1973} and later strengthened by Audenaert~\cite{Audenaert07}. It is often referred to as the Audenaert-Fannes (AF) inequality and is an upper bound on the difference of the entropies of two states in terms of their trace distance. Alicki and Fannes found an analogous continuity bound for the conditional entropy for a bipartite state \cite{AlickiFannes04}, which was later strengthened by Winter~\cite{Winter16}. Continuity bounds for R\'enyi entropies were obtained in \cite{Renyi-CMNF,Ras2011}, although neither bound is known to be sharp. These bounds are ``global'' in the sense that they only depend on the trace distance between the states (and a dimensional parameter) and not on any other property of the states in question. In \cite{Winter16}, Winter also obtained continuity bounds for the entropy and the conditional entropy
for infinite-dimensional systems under an energy constraint. His analysis was extended by Shirokov to the
quantum conditional information  for finite-dimensional systems in \cite{Shirokov15}, and for infinite-dimensional tripartite systems (under an energy constraint) in~\cite{Shirokov29}. In \cite{Shirokov15}, Shirokov also established continuity bounds for the Holevo quantity, and refined the continuity bounds on the classical and quantum capacities obtained by Leung and Smith~\cite{LS09}. Continuity bounds have also been obtained for various other entropic quantities (see e.g.~\cite{AE_relative_ent_1,KE_relative_ent_2,RW15_relative_ent_diff}
and references therein).

In this paper, we obtain a local strengthening of the continuity bounds for both the von Neumann entropy as well as the R\'enyi entropy for $\alpha \in (0,1)\cup (1,\infty)$. Given a $d$-dimensional state $\sigma$, with von Neumann entropy $S(\sigma)$, we obtain an upper bound on $|S(\rho) - S(\sigma)|$ for any state $\rho$ whose trace distance from $\sigma$ is at most $\eps$ (for a given $\eps \in (0,1)$). Our bound is {\em{local}} in the sense that it depends on the spectrum of the state $\sigma$. We prove that our bound is tighter than the Audenaert-Fannes (AF) bound and reduces to the latter only when either $\sigma$ is a pure state, $\sigma =\diag(1- \epsilon, \frac{\epsilon}{d-1},\dotsc,\frac{\epsilon}{d-1})$ when $\eps < 1 - \frac{1}{d} $, or $\sigma = \frac{1}{d}\one$ when $\epsilon\geq 1 - \frac{1}{d}$. We prove an analogous bound for the R\'enyi entropy. \Cref{fig:local-VN-vs-AF} gives a comparison of our local bound for the von Neumann entropy with the (AF)-bound, and \Cref{fig:local-Renyi-vs-global} provides a comparison of the analogous local bound for the $1/2$-R\'enyi entropy with the corresponding global bounds for R\'enyi entropies derived in~\cite{Renyi-CMNF,Ras2011}.

In order to obtain the above results, we first explicitly construct states in the $\eps-$ball, $B_\eps(\sigma)$, around $\sigma$ (i.e., the set of states which are at a trace distance of at most $\eps$ from $\sigma$) which have the maximum and minimum entropy, respectively. The construction of the maximum entropy state is described and motivated by a result from convex optimization theory. Maximizing the entropy over states in  $B_\eps(\sigma)$ can be transcribed into an optimization problem over all states involving a non-differentiable objective function. To solve the problem we employ the notion of subgradients and  {\em{Fermat's Rule}} of convex optimization, which was first described by Fermat~(see e.g. \cite{Bauschke}) in the 17th century! In fact, we first derive the following, more general, result which might be of independent interest. We consider a particular class of functions $\cF$, and derive a necessary and sufficient condition under which a state in $B_\eps(\sigma)$ maximizes any function in this class. The von Neumann entropy, the conditional entropy, and the $\alpha$-R\'enyi entropy (for $\alpha \in (0,1)$) are examples of functions in $\cF$. The precise mathematical definition of the class $\cF$ is given in \Cref{sec:classF}. The maximum entropy state is unique, and satisfies a semigroup property.

In fact, we find that the state which maximizes the entropy over $\Be(\sigma)$ can be obtained from the geometry of $\Be(\sigma)$ via majorization theory, using in particular the Schur concavity of the von Neumann entropy (more generally, this state maximizes a larger class of generalized entropies which are Schur concave). Similarly, motivated by a minimum principle for concave functions, we construct a minimum entropy state in $\Be(\sigma)$. These states provide explicit formulas for smoothed entropies\footnote{smoothed with respect to the trace distance} relevant for one shot information theory \cite{RennerThesis}.

The paper is organized as follows. We start with some mathematical preliminaries in \Cref{sec:math_prelim}. In \Cref{sec:main-results} we state our main results (see \Cref{thm:max-min-ball}, \Cref{prop:local_cont_bound} and \Cref{thm:main_convex_result}). The proof of \Cref{thm:max-min-ball} entails explicit construction of maximum and minimum entropy states in the $\eps$-ball, which are given in \Cref{sec:geometry_trace_ball}. The proof of our main mathematical result \Cref{thm:main_result} is given in \Cref{sec:prove_convex_result}. We end the paper with a Conclusion. 

\section{Mathematical tools and notation \label{sec:math_prelim}}
\subsection{Quantum states and majorization}
	We restrict our attention to finite-dimensional Hilbert spaces. For Hilbert spaces $\H_1,\H_2$, we denote the set of linear maps from $\H_1$ to $\H_2$ by $\cB(\H_1,\H_2)$, and write $\cB(\H) \equiv \cB(\H,\H)$ for the algebra of linear operators on a single Hilbert space $\H$. We denote the set of self-adjoint linear operators on $\H$ by $\Bsa(\H) \subset \H$. Upper case indices label quantum systems: for a Hilbert space $\H_A$ corresponding to a quantum system $A$, we write $\ket{\psi}
	_A\in \H_A$ and $\rho_A\in \cB(\H_A)$, and we use the notation $\H_{AB} = \H_A\otimes \H_B$. A \emph{quantum state} (or simply state) is a density matrix, i.e.~an operator $\rho_A\in \cB(\H_A)$ with $\rho_A\geq 0$ and $\tr \rho_A=1$. We denote the set of states on $\H_A$ by $\D(\H_A)$. We say a state $\rho$ is \emph{faithful} if $\rho>0$, and denote by $\D_+(\H_A)$ the set of faithful states on $\H_A$.  

	For results which only involve a single Hilbert space, $\H$, we set $\Bsa = \Bsa(\H)$, $\D = \D(\H)$, $\D_+ = \D_+(\H)$ and $d := \dim \H$. Let $\tau := \frac{1}{d}\one$ denote the completely mixed state. For any $A\in \Bsa$, let $\lambda_+ (A)$ and $\lambda_-(A)$ denote the maximum and minimum eigenvalues of $A$, respectively, and $\vec\lambda(A)$ the vector in $\R^d$ consisting of eigenvalues of $A$ in non-increasing order, counted with multiplicity. We denote the spectrum of an operator $A \in \Bsa$ by $\spec(A)$ and its kernel by $\ker A$.

	Note that $\Bsa$ is a real vector space of dimension $d^2$, which is a (real) Hilbert space when equipped with the Hilbert-Schmidt inner product $\Bsa\times \Bsa\ni (A,B) \mapsto \braket{A,B}_\text{HS} := \tr(AB)$, which induces the norm $\|A\|_2 = \sqrt{\tr(A^2)}$ for $A\in \Bsa$. We also consider the trace norm on $\Bsa$, defined by $\|A\|_1 = \tr|A|$ for $A\in \Bsa$. The fidelity between two quantum states $\rho$ and $\sigma$ is defined as 
	\[
	F(\rho,\sigma):= \|\sqrt{\rho}\sqrt{\sigma}\|_1.
	\]
	Given a quantum state $\sigma$, we define the $\eps$-ball around $\sigma$ as the closed set of states 
	\begin{equation}
	\Be(\sigma) = \{ \omega\in \D: \frac{1}{2}\| \omega - \sigma\|_1 \leq \eps \}. \label{def:Beps}
	\end{equation}
	We also define the $\eps$-ball of positive-definite states: $\Bep(\sigma) := \Be(\sigma)\cap \D_+$.
	The following inequality, which follows from  \cite[eq.~(IV.62)]{bhatia97}, will be useful.
	\begin{lemma} \label{lem:eig_upperbounded_trace-dist}
	For any $\rho,\sigma \in \cD$,
	\begin{equation} \label{eq:diag_down_in_ball_VN}
		 \|\Eig^\downarrow(\rho)-\Eig^\downarrow(\sigma)\|_1 \leq \|\rho -\sigma\|_1
		\end{equation}
		where for $A\in\Bsa$, $\Eig^\downarrow(A)$  is the diagonal matrix of eigenvalues of $A$ arranged non-increasing order. 
	\end{lemma}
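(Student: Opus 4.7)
The statement is essentially a direct specialisation to the trace norm of a Lidskii--Wielandt type inequality for unitarily invariant norms. Bhatia's (IV.62) says that for any two self-adjoint operators $A, B \in \Bsa$ and any unitarily invariant norm $\tnorm{\cdot}$ on $\cB(\cH)$,
\[
\tnorm{\Eig^\downarrow(A) - \Eig^\downarrow(B)} \leq \tnorm{A - B}.
\]
The trace norm $\|\cdot\|_1$ is unitarily invariant, so the plan is simply to set $A = \rho$, $B = \sigma$ and invoke this inequality with $\tnorm{\cdot} = \|\cdot\|_1$. Since $\Eig^\downarrow(\rho) - \Eig^\downarrow(\sigma)$ is a diagonal real matrix, its trace norm equals $\sum_i |\lambda_i^\downarrow(\rho) - \lambda_i^\downarrow(\sigma)|$, and the conclusion \eqref{eq:diag_down_in_ball_VN} follows at once.

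If one prefers to avoid citing the full strength of Lidskii--Wielandt, the alternative plan is to establish the inequality directly from the Ky Fan $k$-norm characterization. One would show that the vector $\vec\lambda(\rho) - \vec\lambda(\sigma)$ (with entries in decreasing order) is weakly majorized, after rearrangement of its absolute values in decreasing order, by the vector of absolute eigenvalues of $\rho - \sigma$. This amounts to verifying that for each $k = 1, \dots, d$,
\[
\sum_{i=1}^k \bigl|\lambda_i^\downarrow(\rho) - \lambda_i^\downarrow(\sigma)\bigr|^{\text{sorted}\downarrow} \;\leq\; \sum_{i=1}^k \sigma_i(\rho - \sigma),
\]
where $\sigma_i$ denotes the singular values; taking $k = d$ then yields the trace-norm bound. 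This intermediate step is essentially Lidskii's theorem in the Bhatia--Mirsky formulation, proved via min-max characterizations of eigenvalues and a clever rearrangement argument.

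The main (and only) obstacle is correctly matching the version of (IV.62) cited in Bhatia to the form required here: one has to recognise that $\|\Eig^\downarrow(\rho) - \Eig^\downarrow(\sigma)\|_1$ is exactly the unitarily invariant norm associated with the symmetric gauge function $g(x) = \sum_i |x_i|$ applied to the difference of sorted eigenvalue vectors. Once that identification is made the proof is a one-line citation; no further calculation is required.
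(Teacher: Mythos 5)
Your proposal matches the paper's approach exactly: the paper simply cites Bhatia's (IV.62), i.e.\ the Lidskii--Wielandt inequality for unitarily invariant norms, and specialises to the trace norm, which is precisely your first plan. The alternative Ky Fan/weak-majorization route you sketch is the standard proof of that cited inequality and is a fine fallback, but the direct citation suffices.
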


	The notion of \emph{majorization} of vectors will be useful as well. Given $x\in \R^d$, write $x^\downarrow = (x^\downarrow_j)_{j=1}^d$ for the permutation of $x$ such that $x^\downarrow_1 \geq x^\downarrow_2 \geq \dotsm \geq x^\downarrow_d$. Given $x,y\in \R^d$, we say $x$ majorizes $y$, written $x \succ y$, if 
	\begin{equation} \label{def:majorize}
	 \sum_{j=1}^k x^\downarrow_j \geq \sum_{j=1}^k y^\downarrow_j \quad \forall k=1,\dotsc,d-1, \quad \text{and}\quad \sum_{j=1}^d x^\downarrow_j = \sum_{j=1}^d y^\downarrow_j.
	 \end{equation}
	 If $x \prec y \prec x$, then $x$ and $y$ are equal up to a permutation (see e.g.~\cite[p. 18]{marshall2011inequalities}).
	 A function $f: \R^d\to \R^d$ is \emph{Schur concave} if $f(x)\leq f(y)$ whenever $x\succ y$. The function $f$ is \emph{strictly Schur concave} if $f(x) < f(y)$ whenever $x\succ y$ and $x^\downarrow \neq y^\downarrow$. If $f:\R^d\to \R^d$ is symmetric and concave, then it is Schur concave; similarly if $f$ is symmetric and strictly concave, then it is strictly Schur concave \cite[p.~97]{marshall2011inequalities}.

	Given two states $\rho,\sigma\in \cD$, if $\vec\lambda(\rho) \prec \vec\lambda(\sigma)$, we write $\rho\prec\sigma$. Note if $\rho \prec \sigma \prec \rho$, then $\rho$ and $\sigma$ must have the same eigenvalues with the same multiplicities, and therefore must be unitarily equivalent. We say that $\varphi: \D \to \R$ is \emph{Schur-concave} if $\varphi(\rho)\geq \varphi(\sigma)$ for any $\rho,\sigma\in \cD$ with $\rho \prec \sigma$. If $\varphi(\rho) > \varphi(\sigma)$ for any $\rho,\sigma\in \cD$ such that  $\rho \prec \sigma$, and $\rho$ is not unitarily equivalent to $\sigma$, then $\varphi$ is \emph{strictly Schur-concave}.

	 It can be shown that $\rho \prec \sigma$ if and only if $\rho = \sum_{i} p_i U_i \sigma U_i^*$ for some $n\in \mathbb{N}$,  $p_i\geq 0$, $\sum_{i} p_i=1$, and $U_i\in \cB(\cH)$  unitary for $i=1,\dotsc,n$ \cite[Theorem 2-2]{alberti-uhlmann1982}. Hence, if $\varphi: \cD\to \R$ is unitarily invariant and concave, and $\rho\prec \sigma$ , then
	\[
	\varphi(\rho) \geq  \sum_{i\in I_L} p_i \varphi(U_i \sigma U_i^*) = \varphi(\sigma).
	\]
	Hence any unitarily-invariant concave function $\varphi$ is Schur concave. Moreover, by the same argument, if $\varphi$ is strictly concave and $\rho \prec \sigma$ is such that $\rho$ is not unitarily equivalent to $\sigma$, we have
	\begin{equation} \label{eq:strict_schur_phi}
	\varphi(\rho) > \varphi(\sigma)
	\end{equation}
	and $\varphi$ is strictly Schur concave.

\subsection{Entropies}
The von Neumann entropy of a quantum state $\rho\in \cD$ is defined as
\begin{align}\label{vN}
S(\rho) &= - \tr (\rho \log \rho),
\end{align}
which is the (classical) Shannon entropy of its eigenvalues, and a strictly concave function of $\rho$. We define all entropies with $\log$ base 2. For a bipartite state $\rho_{AB} \in {\mathcal{H}}_{AB}$, the conditional entropy $S(A|B)_\rho$ is given by
\begin{align}\label{cond-vN}
S(A|B)_\rho &= S(\rho_{AB}) - S(\rho_B),
\end{align}
where $\rho_B = \tr_A \rho_{AB}$ denotes the reduced state on the system $B$.
\smallskip

An important property of the von Neumann entropy, of particular relevance to us, is its continuity. It is described by the so-called Audenaert-Fannes (AF) bound which is stated in the following lemma \cite{Audenaert07} (see also Theorem 3.8 of \cite{PetzQITbook}). 

\begin{lemma}[Audenaert-Fannes bound] \label{lem:AF_equality}
	Let $\epsilon \in [0,1]$, and $\rho,\sigma \in \cD(\H)$ such that $\frac{1}{2}\|\rho - \sigma\|_1 \leq \epsilon$, and let $d = \dim \H$. Then
	\begin{equation}
 |S(\rho) - S(\sigma) | \leq \begin{cases}
 \epsilon \log (d-1) + h(\epsilon) & \text{if } \epsilon < 1 - \tfrac{1}{d} \\
	 \log d & \text{if } \epsilon \geq 1 - \tfrac{1}{d},
 \end{cases} \label{eq:Audenaert-Fannes_bound}
 \end{equation}
where $h(\eps) := - \eps \log \eps - (1-\eps) \log (1-\eps)$ denotes the binary entropy.

 Without loss of generality, assume $S(\rho) \geq S(\sigma)$. Then equality in \eqref{eq:Audenaert-Fannes_bound} occurs if $\sigma$ is a pure state, and either
\begin{enumerate}
	\item $\epsilon < 1 - \frac{1}{d}$ and $\vec \lambda (\rho) = (1- \epsilon, \frac{\epsilon}{d-1},\dotsc \frac{\epsilon}{d-1})$, or
	\item $\epsilon \geq 1 - \frac{1}{d}$ and $\rho= \tau:= \frac{1}{d}\one$.
\end{enumerate}

	\end{lemma}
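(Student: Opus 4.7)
The plan is to follow Audenaert's original strategy \cite{Audenaert07}, which reduces the quantum inequality to a classical optimization problem over probability vectors and then analyzes the extremizers directly.

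\emph{Step 1: Commutative reduction.} Both the von Neumann entropy and the trace norm of $\rho - \sigma$ are unitarily invariant, but more is true: by \Cref{lem:eig_upperbounded_trace-dist}, replacing $\rho$ and $\sigma$ by $\Eig^\downarrow(\rho)$ and $\Eig^\downarrow(\sigma)$ (i.e.\ simultaneously diagonalizing them in a common basis with non-increasing eigenvalues) preserves both entropies while the trace distance can only decrease. It therefore suffices to prove, for probability vectors $p, q \in \R^d$ with $\tfrac12 \|p - q\|_1 \le \eps$, the Shannon-entropy inequality
\[
|H(p) - H(q)| \le \eps \log(d-1) + h(\eps) \text{ if } \eps < 1 - \tfrac1d, \qquad |H(p) - H(q)| \le \log d \text{ otherwise.}
\]
Since the right-hand side is monotone in $\eps$, one may further assume $\tfrac12\|p - q\|_1 = \eps$.

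\emph{Step 2: Extremal analysis.} Assume without loss of generality $H(p) \ge H(q)$. Fix $q$ and maximize $H(p)$ over $p$ subject to $\tfrac12\|p - q\|_1 = \eps$, $\sum_i p_i = 1$, $p \ge 0$. Since $H$ is strictly concave and the feasible region is convex and compact, a unique maximizer $p^*$ exists. A KKT analysis for the non-smooth $\ell^1$ constraint reveals that $p^*$ is obtained from $q$ by removing mass $\eps$ from the largest coordinate and spreading it uniformly over the remaining $d - 1$ coordinates, provided this is feasible (which holds when $\eps < q^\downarrow_1$). The difference $H(p^*) - H(q)$ grows as $q$ becomes more peaked, so a second maximization over $q$ yields the global extremum at $q$ pure:
\[
p^* = (1 - \eps,\; \tfrac{\eps}{d-1},\; \dotsc,\; \tfrac{\eps}{d-1}), \qquad H(p^*) - H(q) = \eps \log(d-1) + h(\eps).
\]
When $\eps \ge 1 - \tfrac1d$, the prescribed redistribution becomes inconsistent with $p^* \ge 0$, and the extremal configuration collapses to $p^* = \tau$ with $q$ still pure, giving $\log d$. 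Reading off these extremal pairs yields the equality conditions (i) and (ii).

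\emph{Main obstacle.} The delicate point is the KKT/subdifferential analysis in Step 2: the constraint $\tfrac12\|p-q\|_1 = \eps$ is non-smooth, so optimality must be expressed via the subdifferential of $\|\cdot\|_1$ rather than a classical gradient. Handling such non-smooth constraints systematically---indeed in a unitarily invariant, dimension-agnostic manner---is precisely the motivation for \Cref{thm:main_convex_result} later in this paper, which invokes Fermat's rule. A secondary subtlety is confirming that the stationary point identified above is globally optimal once one also maximizes over $q$; this follows from the strict Schur concavity of $H$ together with the observation that moving mass away from the largest coordinate of $q$ strictly increases $H$ while simultaneously decreasing $H(q)$.
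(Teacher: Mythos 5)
Your Step 1 (the commutative reduction via \Cref{lem:eig_upperbounded_trace-dist}) is exactly what the paper does. Step 2, however, departs from the paper's route and contains a concrete error. The paper does \emph{not} prove the classical inequality by optimizing $H(p)$ over $p$ and then over $q$; instead, it introduces a maximal coupling $\omega^*$ of $(p,q)$, sets $(X,Y)\sim\omega^*$ so that $\Pr[X\neq Y]=\tfrac12\|p-q\|_1$, and then bounds $H(p)-H(q)\le H(X|Y)$ via Fano's inequality (\Cref{lem:Fano}). This sidesteps the extremal analysis of $p^*$ entirely, and it also makes the equality conditions fall out cleanly from the equality conditions in Fano's inequality and in $H(X,Y)\ge H(X)$.

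The concrete gap in your Step 2 is the claimed form of the inner maximizer $p^*$. For fixed $q$, the maximizer of $H$ over $\Be(q)$ is \emph{not} obtained by removing $\eps$ from the largest coordinate and spreading it uniformly across the other $d-1$ coordinates. The true maximizer is the ``flattest'' vector $p^*_\eps(q)$ of \Cref{lem:char_of_alpha1} and \eqref{eq:def_rho_eps_sigma}: the largest few coordinates are lowered to a common value $\alpha_2$ and the smallest few are raised to a common value $\alpha_1$, with middle coordinates untouched. A counterexample: take $d=3$, $q=(0.6,0.3,0.1)$, $\eps=0.1$. Your rule gives $p^*=(0.5,0.35,0.15)$ with $H\approx 1.441$ bits, whereas the flattest state $(0.5,0.3,0.2)$ has $H\approx 1.486$ bits and indeed $(0.5,0.3,0.2)\prec(0.5,0.35,0.15)$. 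Your formula happens to coincide with the correct one when $q$ is pure (or more generally when $q_2=\dotsb=q_d$), which is why it ``reads off'' the right extremal pair, but the argument that the extremum over $q$ is attained at a pure state requires the correct $p^*$ for general $q$ and hence has no foundation as written.

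A secondary gap is the outer maximization over $q$. The sentence ``moving mass away from the largest coordinate of $q$ strictly increases $H$ while simultaneously decreasing $H(q)$'' is not coherent: moving mass away from the largest coordinate makes $q$ less peaked, which \emph{increases} $H(q)$. Even after fixing the sign, monotonicity of $H(p^*_\eps(q))-H(q)$ in the ``peakedness'' of $q$ is a nontrivial claim (it is a statement about how the flattening operation interacts with majorization) and is not implied by Schur concavity of $H$ alone. If you want to pursue the direct extremal route, you would need to carry out the construction of \Cref{lem:char_of_alpha1} in full and then establish this monotonicity, which is considerably more work than the paper's coupling-plus-Fano argument.

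Finally, your proposal does not establish the converse direction of the equality statement, i.e.\ that equality \emph{forces} $\sigma$ pure and $\rho$ of the stated form; the paper obtains this by noting that $\eps\mapsto\eps\log(d-1)+h(\eps)$ is strictly increasing on $[0,1-\tfrac1d)$ so that equality forces $\tfrac12\|p-q\|_1=\eps$, and then tracing equality through the coupling and Fano steps.
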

\medskip

There are many generalizations of the von Neumann entropy. Perhaps the most general of these are the $(h,\phi)$-entropies, first studied in the quantum case by \cite{Bosyk2016}. Given $h: \R\to \R$ and $\phi: [0,1]\to \R$ with $\phi(0) = 0$ and $h(\phi(1)) = 0$, such that either $h$ is strictly increasing and $\phi$ strictly concave, or $h$ is strictly decreasing and $\phi$ strictly convex, one defines
\[
H_{(h,\phi)}(\rho) := h( \tr [\phi(\rho)])
\]
where $\phi$ is defined on $\cD$ by functional calculus, i.e., given the eigen-decomposition $\rho = \sum_i \mu_i P_i$, we have $\phi(\rho) = \sum_i \phi(\mu_i) P_i$. Particular choices of $h$ and $\phi$ yield different entropies:
\begin{itemize}
	\item For $h (x)= x$ and $\phi(x) = - x \log x$, we recover $H_{(h,\phi)} (\rho)= S(\rho)$, the von Neumann entropy.
	\item For $\alpha\in (0,1)\cup(1,\infty)$, by choosing $h(x) = \frac{1}{1-\alpha}\log(x)$ and $\phi = x^\alpha$, we find $H_{(h,\phi)}(\rho)$ reduces to the \emph{$\alpha$-R\'enyi entropy}: 
	\[
	S_\alpha(\rho) := \frac{1}{1-\alpha}\log \tr[\rho^\alpha].
	\]
	 For $\alpha\in (0,1)$, $h$ is strictly increasing, and $\phi$ is strictly concave, and for $\alpha \in (1,\infty)$, $h$ is strictly decreasing and $\phi$ strictly convex.
	\item For $\alpha \in (0,1)\cup (1,\infty)$ and $s>0$, choosing $h(x) = \frac{x^s - 1}{(1-\alpha)s}$ and $\phi(x) = x^\alpha$ yields the unified entropy $E_\alpha^{(s)}(\rho) = \frac{1}{(1-\alpha) s}[\tr(\rho^\alpha)^s-1]$. As with the R\'enyi entropies, for $\alpha\in (0,1)$, $h$ is strictly increasing, and $\phi$ is strictly concave, and for $\alpha \in (1,\infty)$ $h$ is strictly decreasing and $\phi$ is strictly concave. Note that the previous entropies are given by limits of the unified entropies \cite{KimSanders11}: $\lim_{s\to 0} E_\alpha^{(s)}(\rho) = S_\alpha(\rho)$, and $\lim_{s\to 1} E_\alpha^{(s)} = \frac{1}{1-\alpha}[\tr (\rho^\alpha) -1]$ which is called the \emph{$\alpha$-Tsallis} entropy. Lastly, $\lim_{\alpha\to 1} E_\alpha^s(\rho) = S(\rho)$ for any $s>0$.
\end{itemize}
Let us briefly summarize some of the properties of the $(h,\phi)$-entropies, as proven in \cite{Bosyk2016}:
\begin{itemize}
	\item If $\rho$ has eigenvalues $\mu_1,\dotsc,\mu_d$, counted with multiplicity, then
	\[
	H_{(h,\phi)}(\rho) = h \left( \sum_{i=1}^d \phi(\mu_i) \right)
	\]
	which is the classical $(h,\phi)$-entropy of the eigenvalues of $\rho$.
	\item Strict Schur concavity: If $\rho \prec \rho'$, then $H_{(h,\phi)} (\rho) \geq H_{(h,\phi)}(\rho')$, with equality if and only if $\rho$ is unitarily equivalent to $\rho'$. 
	\item  Bounds: $0 \leq H_{(h,\phi)}(\rho)  \leq h( d\, \phi( \tfrac{1}{d}))$. 
	\item If $h$ is concave, then $H_{(h,\phi)}$ is concave.
	\item If $\rho = \sum_{i=1}^k p_i \ketbra{\psi_i}{\psi_i}$ is an arbitrary statistical mixture of pure states, with $p_i\geq 0$ and $\sum_{i=1}^k p_i = 1$, then $H_{(h,\phi)}(\rho) \leq h \left( \sum_{i=1}^k \phi(p_i) \right)$.

\end{itemize}

We will also also consider the min- and max-entropy introduced by Renner~\cite{RennerThesis} which play an important role in one-shot information theory (see e.g.~\cite{Tom-book} and references therein):
\[
H_\text{min}(\rho) := - \log \lambda_+(\rho), \qquad H_\text{max}(\rho) := S_{1/2}(\rho) = 2 \log \tr [\sqrt{\rho}].
\]
Note $\lambda_+(\rho) = \|\rho\|_\infty$ is unitarily invariant and convex, and therefore Schur convex: if $\rho \prec \sigma$, then $\lambda_+(\rho) \leq \lambda_+(\sigma)$. Since $x\mapsto - \log (x)$ is decreasing, $H_\text{min}(\rho) \geq H_\text{min}(\sigma)$;  therefore, $H_\text{min}$ is Schur concave. As a R\'enyi entropy, $H_\text{max}$ is Schur concave as well.

The Shannon entropy of a classical random variable $X$ (taking values in a discrete alphabet ${\mathcal{X}}$) with probability mass function (p.m.f.) $(p(x))_{x \in {\mathcal{X}}}$ is given by $H(X):= - \sum_{x\in {\mathcal{X}}}p(x) \log p(x)$. The joint- and conditional entropies of two random variables $X$ and $Y$, with joint probability distribution $\left(p(x,y)\right)_{x,y}$, are respectively given by \begin{align}
H(X,Y) &:= - \sum_{x,y} p(x,y) \log p(x,y)\nonumber\\
H(X|Y) &:=H(X,Y) - H(X).
\nonumber 
\end{align}
Fano's inequality (\cite[p.~187]{Fano1961}) with equality conditions (\cite[p.~41]{han2007mathematics}), stated as \Cref{lem:Fano} below, provides an upper bound on $H(X|Y)$ and will be employed in our proofs.
\begin{lemma}[Fano's inequality] \label{lem:Fano}
For two random variables $X,Y$ which take values on $\{1,\dotsc,d\}$, we have, for $\epsilon:= \Pr(X\neq Y)$,
\[
H(X|Y)\leq \epsilon \log(d-1) + h(\epsilon),
\]
with equality if and only if 
\[
\Pr(X=i | Y = j) = \begin{cases}
1 - \epsilon & \text{if } i=j\\
\frac{\epsilon}{d-1} & \text{if } i\neq j
\end{cases}
\]
for each $j$ such that $\Pr(Y=j) \neq 0$.
\end{lemma}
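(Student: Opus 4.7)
The plan is to reduce Fano's inequality to the entropy chain rule applied to the indicator of error $E := \one[X\neq Y]$, which is a Bernoulli random variable with parameter $\epsilon = \Pr(E=1)$. Since $E$ is a deterministic function of $(X,Y)$, we have $H(E\mid X,Y) = 0$, so expanding $H(E,X\mid Y)$ in two ways via the chain rule yields
\[
H(X\mid Y) = H(E\mid Y) + H(X\mid E,Y).
\]
First I would bound $H(E\mid Y) \leq H(E) = h(\epsilon)$, using that conditioning cannot increase entropy. Then I would split
\[
H(X\mid E,Y) = (1-\epsilon)\, H(X\mid Y, E=0) + \epsilon\, H(X\mid Y, E=1).
\]
The first summand vanishes because on $\{E=0\}$ we have $X=Y$, and the second satisfies $H(X\mid Y=j, E=1) \leq \log(d-1)$, because conditional on $Y=j$ and $E=1$, the variable $X$ takes values in the $(d-1)$-element set $\{1,\dotsc,d\}\setminus\{j\}$. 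Averaging over $Y$ gives $H(X\mid E,Y) \leq \epsilon \log(d-1)$, and combining the two bounds yields the stated inequality.

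For the equality case, I would trace back the two inequalities. Equality in $H(E\mid Y) \leq H(E)$ holds iff $E$ and $Y$ are independent, which means $\Pr(E=1\mid Y=j) = \epsilon$ for every $j$ with $\Pr(Y=j)\neq 0$, i.e.\ $\Pr(X=j\mid Y=j) = 1-\epsilon$. Equality in $H(X\mid Y=j, E=1) \leq \log(d-1)$ holds iff, conditional on $Y=j$ and $E=1$, $X$ is uniform on $\{1,\dotsc,d\}\setminus\{j\}$, which translates to $\Pr(X=i\mid Y=j) = \epsilon/(d-1)$ for $i\neq j$. These two conditions together are exactly the conditional distribution stated in the lemma; the converse direction is a direct calculation verifying that this conditional distribution renders $E$ independent of $Y$ and $X$ uniform on the complement of $\{Y\}$ given $\{E=1\}$.

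The argument is essentially classical, so I do not expect a serious technical obstacle; the only point requiring care is bookkeeping in the equality analysis, namely ensuring that the conditions are imposed only on those $j$ with $\Pr(Y=j)\neq 0$ (as stated) and that the two sources of tightness are handled simultaneously rather than sequentially, since a choice of conditional distribution of $X$ given $Y$ determines both $H(E\mid Y)$ and $H(X\mid E,Y)$ at once.
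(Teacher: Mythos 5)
The paper does not prove \Cref{lem:Fano}; it states Fano's inequality together with its equality conditions as a cited black box (Fano for the inequality, Han for the equality case), so there is no internal proof to compare against. Your proposal is the standard textbook proof via the error indicator $E$: the chain rule gives $H(X\mid Y)=H(E\mid Y)+H(X\mid E,Y)$, the first term is bounded by $h(\epsilon)$ because conditioning cannot increase entropy, and the second by $\epsilon\log(d-1)$ because on the error event $X$ is supported on a set of size $d-1$; the equality analysis then traces back both inequalities. This argument is correct. One point to be explicit about when writing it out: extracting $\Pr(X=i\mid Y=j)=\epsilon/(d-1)$ from uniformity of $X$ on $\{1,\dotsc,d\}\setminus\{j\}$ given $\{Y=j,\ E=1\}$ uses the first equality condition $\Pr(E=1\mid Y=j)=\epsilon$, so the two conditions are interlocked rather than independent --- you already flag this, and it is the right instinct. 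The degenerate endpoints $\epsilon\in\{0,1\}$ deserve a sentence (one of the two contributions vanishes and the corresponding equality condition becomes vacuous), though the stated conditional distribution still holds there and this is routine bookkeeping.
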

\section{Main results and applications}\label{sec:main-results}
	Fix $\eps\in(0,1]$ and a state $\sigma\in \D(\H)$, with $d:=\dim \H$. Our first result is that the $\eps$-ball around $\sigma$, $\Be(\sigma)$ defined by \eqref{def:Beps}, admits a minimum and maximum in the majorization order. Note that since majorization is a partial order, a priori one does not know that there are states in $\Be(\sigma)$ comparable to every other state in $\Be(\sigma)$.
	\begin{theorem}\label{thm:max-min-ball}
	Let $\sigma\in \cD$ and $\epsilon\in (0,1]$. Then there exists two states $\rhomaxeps(\sigma)$ and $\rhomineps(\sigma)$ in $\Be(\sigma)$ (which are defined by \Cref{eq:def_rho_eps_sigma} in \Cref{sec:const_min_state_maj_order} and \Cref{eq:def_rho_eps_lowerstar} in \Cref{sec:const_max_state_maj_order} respectively) such that for any $\omega\in \Be(\sigma)$,
	\begin{equation}
	\rhomaxeps(\sigma) \prec \omega \prec \rhomineps(\sigma). \label{eq:min-max-maj-cond}
	\end{equation}
	$\rhomaxeps(\sigma)$ is the unique state in $\Be(\sigma)$ satisfying the left-hand relation of \eqref{eq:min-max-maj-cond}, and $\rhomineps(\sigma)$ is unique as an element of $\Be(\sigma)$ satisfying the right-hand relation of \eqref{eq:min-max-maj-cond} up to unitary equivalence. Furthermore, $\rhomineps(\sigma)$ lies on the boundary of $\Be(\sigma)$, and either $\rhomaxeps(\sigma) = \tau:= \frac{1}{d}\one$ or $\rhomaxeps(\sigma)$ lies on the boundary of $\Be(\sigma)$. Additionally, $\rhomaxeps(\sigma)$ satisfies the following semigroup property. If $\eps_1,\eps_2\in (0,1]$, we have
	\[
	\rho^*_{\eps_1 + \eps_2}(\sigma) = \rho^*_{\eps_1} ( \rho^*_{\eps_2}(\sigma)).
	\]
	 The state $\rhomaxeps(\sigma)$ also saturates the triangle inequality with respect to the states $\sigma$ and $\tau$, in that
	\[
	 \frac{1}{2}\|\sigma-\tau\|_1 =\frac{1}{2}\| \tau -  \rho_\eps^*(\sigma) \|_1+\frac{1}{2}\| \rho_\eps^*(\sigma) - \sigma\|_1  .
	\]
	
	\end{theorem}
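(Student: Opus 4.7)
The plan is to reduce the extremization over $\Be(\sigma)$ to a classical problem on sorted probability vectors, construct the extremal spectra explicitly, and then verify majorization by direct analysis of partial sums. By \Cref{lem:eig_upperbounded_trace-dist}, any $\omega \in \Be(\sigma)$ satisfies $\tfrac12\|\vec\lambda^\downarrow(\omega) - \vec\lambda^\downarrow(\sigma)\|_1 \leq \eps$, and conversely any sorted probability vector $p$ in the classical $\ell_1$-ball of radius $2\eps$ around $s := \vec\lambda^\downarrow(\sigma)$ is realized as the spectrum of $\sum_i p_i \ketbra{\psi_i}{\psi_i} \in \Be(\sigma)$ when $(\ket{\psi_i})$ is an eigenbasis of $\sigma$ sorted compatibly with $s$. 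Since the majorization order on $\cD$ only depends on spectra, \eqref{eq:min-max-maj-cond} reduces to a purely classical statement about partial sums in this ball.

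Next I would exhibit the two extremizers. The majorization-maximum $p_{*,\eps}$ is built by greedy concentration: raise $s_1$ to $\min(1,s_1+\eps)$ and deplete mass starting from the smallest entries of $s$ to preserve normalization. The majorization-minimum $p^*_\eps$ is obtained by a two-sided water-fill: there exist indices $k^* \leq m^*$ and levels $s_{k^*+1} \leq a \leq s_{k^*}$ and $s_{m^*} \leq b \leq s_{m^*-1}$ so that the top $k^*$ entries equal $a$, the bottom $d-m^*+1$ entries equal $b$, the middle block is left unchanged, and the full budget $\eps$ is consumed, unless the water-fill reaches the uniform distribution, in which case $p^*_\eps = \tau$. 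These are precisely the constructions of \Cref{sec:const_min_state_maj_order} and \Cref{sec:const_max_state_maj_order}; each is a sorted probability vector at $\ell_1$ distance $\leq 2\eps$ from $s$, so pairing with the eigenbasis of $\sigma$ produces the announced $\rhomaxeps(\sigma), \rhomineps(\sigma) \in \Be(\sigma)$.

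The main step — and the part I expect to be the principal obstacle — is the simultaneous partial-sum optimality: for any sorted $q$ in the classical ball and every $k$, one must show $\sum_{i=1}^k p^*_{\eps,i} \leq \sum_{i=1}^k q_i \leq \sum_{i=1}^k p_{*,\eps,i}$. The upper inequality is straightforward: writing $\delta := q - s$, one has $\sum_i \delta_i = 0$ and $\sum_i |\delta_i| \leq 2\eps$, which gives $\sum_{i=1}^k \delta_i \leq \eps$ by splitting into positive and negative parts, and this bound is saturated for all $k$ simultaneously by the greedy concentration $p_{*,\eps}$. The lower inequality is more delicate because the sort constraint forces a coupled top-and-bottom water-fill rather than a single extremal push; I would prove it by a local exchange argument, showing that any candidate $q$ violating the bound at some $k$ can be locally modified by moving mass from entries above the $p^*$-level $a$ to entries below $b$ in a way that preserves sortedness and the $\ell_1$ budget, producing a contradiction with either the defining water-fill equations for $(a,b,k^*,m^*)$ or with the budget constraint.

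Uniqueness of $\rhomaxeps(\sigma)$ then follows from the strict Schur-concavity of $S$, invoking \eqref{eq:strict_schur_phi}, together with antisymmetry of $\prec$ on spectra, while uniqueness of $\rhomineps(\sigma)$ up to unitary equivalence is an immediate consequence of antisymmetry alone. The boundary statements are read directly off the constructions: both the concentration and the water-fill consume the full $\eps$ budget, except in the degenerate case $\rhomaxeps(\sigma) = \tau$. The semigroup identity $\rho^*_{\eps_1+\eps_2}(\sigma) = \rho^*_{\eps_1}(\rho^*_{\eps_2}(\sigma))$ follows by direct inspection of the water-fill formula: the flat-top/flat-bottom spectrum produced by budget $\eps_2$ has the same block structure as $s$ with respect to further flattening, and water-filling it with budget $\eps_1$ yields the same $(\eps_1+\eps_2)$-water-fill of $s$. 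Finally, the triangle saturation $\tfrac12\|\sigma-\tau\|_1 = \tfrac12\|\tau - \rhomaxeps(\sigma)\|_1 + \tfrac12\|\rhomaxeps(\sigma) - \sigma\|_1$ holds because $\rhomaxeps(\sigma) - \sigma$ and $\tau - \rhomaxeps(\sigma)$ have spectrally non-cancelling supports — by the water-fill both perturbations move mass in the same direction, from entries of $s$ above $1/d$ towards entries below $1/d$ — so the trace norms add rather than lose a triangle-inequality gap.
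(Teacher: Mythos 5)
Your reduction to sorted classical probability vectors, the greedy-concentration construction for $\rhomineps$, and the two-sided water-fill construction for $\rhomaxeps$ all match the paper's, and the upper partial-sum inequality (that $\sum_{i\le k} q_i \le \sum_{i\le k} s_i + \eps$, saturated by the greedy construction) is the same computation as the paper's Lemma~\ref{lem:pstar_succ_p}. The genuine gap is in the lower inequality $\sum_{i\le k} p^*_{\eps,i} \le \sum_{i\le k} q_i$, which you correctly identify as the hard step but only gesture at with a ``local exchange argument.'' As stated it is not a proof: you would still need to specify the exchange move, verify that it preserves both sortedness and the $\ell_1$ budget, and show that the terminal configuration of the exchange process produces a contradiction. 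The paper instead proceeds by a direct four-case analysis on the block index $k$ (within the top flat block, the middle block, and the bottom flat block), each case reducing the assumed violation to the inequality $\sum_{i\le k}(q_i - p_i)_+ > \eps$, contradicting $p\in \Be(q)$ (this is \Cref{lem:n2sum_contradiction} and the recursion around \eqref{eq:n2-assump} and \eqref{eq:contradiction>n1}). This is considerably more self-contained than the exchange approach, which implicitly needs a termination and monotonicity argument you have not supplied.

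Two secondary issues. First, for uniqueness of $\rhomaxeps$ you invoke strict Schur-concavity and antisymmetry of $\prec$, but antisymmetry only gives unitary equivalence; to promote that to literal equality the paper uses strict \emph{concavity} via a midpoint argument ($\hat\rho = \tfrac12(\tilde\rho+\rhomaxeps)\in\Be(\sigma)$ has strictly larger entropy unless $\tilde\rho=\rhomaxeps$), which is the step your sketch omits. Second, your justification of the semigroup property asserts that the $\eps_2$-water-fill of $s$ ``has the same block structure as $s$ with respect to further flattening.'' That is not true: applying budget $\eps_1$ to $\rho^*_{\eps_2}(\sigma)$ generally enlarges the flat blocks (pulls more middle eigenvalues into them), so the block structures differ. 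The paper's proof of \Cref{prop:VN_semigroup_property} has to establish the inequalities $m \le m' \le d-n$ and then verify $\beta_2 = \gamma_2(m')$ by substitution; the identity holds, but not for the reason you give. Your intuition for the triangle-saturation claim (both perturbations move mass in the same direction relative to $1/d$, so the norms add) is correct but also needs the explicit computation of \Cref{prop:VN_opt_saturates_triangle} to close.
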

\smallskip

	\paragraph{An application.} Quantum state tomography is the process of estimating a quantum state $\sigma$ by performing measurements on copies of $\sigma$. The so-called MaxEnt (or maximum-entropy) principle gives that an appropriate estimate of $\sigma$ is one which is compatible with the constraints on $\sigma$ which can be determined from the measurement results and which has maximum entropy subject to those constraints \cite{BDDAW99,Jaynes57}. 

	 Given a target pure state $\sigma_\text{target}$, one can estimate the fidelity between $\sigma$ and $\sigma_\text{target}$ efficiently, using few Pauli measurements of $\sigma$ \cite{FL11,dSLCP11}. Using the bound $\frac{1}{2}\|\sigma - \sigma_\text{target}\|_1 \leq \eps:=\sqrt{1 - F(\sigma,\sigma_\text{target})^2}$, one therefore obtains a bound on the trace distance between $\sigma$ and $\sigma_\text{target}$. \Cref{thm:max-min-ball} gives that the state with maximum entropy in $\Be(\sigma_\text{target})$ is $\rho^*_\eps(\sigma_\text{target})$. Using that $\sigma_\text{target}$ is a pure state, one can check using \Cref{lem:char_of_alpha1} that
	\begin{equation} \label{eq:def_rho_eps_sigma_target}
	\rho^*_\eps(\sigma_\text{target}) = \begin{cases}
		\diag(1-\eps, \frac{\eps}{d-1},\dotsc,\frac{\eps}{d-1})  & \text{ if } \eps \leq 1 - \frac{1}{d}\\
		\tau:=\frac{\one}{d} & \text{else}
		\end{cases}
	\end{equation}
	in the basis in which $\sigma_\text{target} = \diag(1,0,\dotsc,0)$, where $d$ is the dimension of the underlying Hilbert space. The maximum-entropy principle therefore implies that $\rho^*_\eps(\sigma_\text{target})$ defined by \eqref{eq:def_rho_eps_sigma_target} is the appropriate estimate of $\sigma$, when given only the condition that $\frac{1}{2}\|\sigma - \sigma_\text{target}\|_1 \leq \eps$.
	\begin{itemize}
		\item it may be possible to determine additional constraints on $\sigma$ by the Pauli measurements performed to estimate $F(\sigma,\sigma_\text{target})$. In that case, MaxEnt gives that the appropriate estimate of $\sigma$ is the state with maximum entropy subject to these constraints as well, and not only the relation $\frac{1}{2}\|\sigma - \sigma_\text{target}\|_1 \leq \eps$.
		\item it may be possible to devise a measurement scheme to estimate $\frac{1}{2}\|\rho-\sigma_\text{target}\|_1$ directly, which could be more efficient than first estimating the fidelity and then employing the Fuchs-van de Graaf inequality to bound the trace distance.
	\end{itemize}
	
	\medskip
	\Cref{thm:max-min-ball} immediately yields maximizers and minimizers over $\Be(\sigma)$ for any Schur concave function $\varphi$. Note that, as stated in the following corollary, the minimizer $\rhomaxeps(\sigma)$ (resp.~the maximizer $\rhomineps(\sigma)$) in the majorization order (\ref{eq:min-max-maj-cond}) is the maximizer (resp.~minimizer) of $\varphi$ over the $\eps$-ball $\Be(\sigma)$.

	\begin{corollary}\label{cor:max-min-ball-Schurconcave}
	Let $\varphi: \cD \to \R$ be Schur concave. Then
	\[
	\rhomaxeps(\sigma) \in \argmax_{\Be(\sigma)}\varphi, \qquad \text{and} \qquad \rhomineps(\sigma) \in \argmin_{\Be(\sigma)}\varphi.
	\]
	If $\varphi$ is strictly Schur concave, any other state $\rho' \in \argmax_{\Be(\sigma)} \varphi$ (resp.~ $\rho'\in \argmin_{\Be(\sigma)}\varphi$) is unitarily equivalent to $\rho^\eps_*(\sigma)$ (resp.~$\rhomineps(\sigma)$). If $\varphi$ is strictly concave, then $\argmax_{\Be(\sigma)} \varphi = \{\rhomaxeps(\sigma)\}$.
	\end{corollary}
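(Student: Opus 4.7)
The corollary follows almost directly from \Cref{thm:max-min-ball} together with the definitions of (strict) Schur concavity recorded in \Cref{sec:math_prelim}; the plan is essentially just to chase through these definitions, with no significant technical obstacle.

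First, for the main claim I would invoke \Cref{thm:max-min-ball} to get that every $\omega \in \Be(\sigma)$ satisfies $\rhomaxeps(\sigma) \prec \omega \prec \rhomineps(\sigma)$. Applying the Schur concavity of $\varphi$ to each side gives $\varphi(\rhomaxeps(\sigma)) \geq \varphi(\omega) \geq \varphi(\rhomineps(\sigma))$ for every $\omega \in \Be(\sigma)$, and since $\rhomaxeps(\sigma), \rhomineps(\sigma)$ themselves belong to $\Be(\sigma)$ (again by \Cref{thm:max-min-ball}), they are respectively a maximizer and a minimizer of $\varphi$ over $\Be(\sigma)$.

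Next, for the strict Schur concave case, suppose $\rho' \in \argmax_{\Be(\sigma)}\varphi$. Then $\rhomaxeps(\sigma) \prec \rho'$ by \Cref{thm:max-min-ball}, and if $\rho'$ were not unitarily equivalent to $\rhomaxeps(\sigma)$, strict Schur concavity would yield $\varphi(\rhomaxeps(\sigma)) > \varphi(\rho')$, contradicting that $\rho'$ is a maximizer. The same argument applied with $\rho' \prec \rhomineps(\sigma)$ handles the minimizer.

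Finally, for the strict concavity statement, I would use that $\Be(\sigma)$ is a convex set (the trace distance ball is convex as the sublevel set of a norm) and that a strictly concave function on a convex set admits at most one maximizer: given two maximizers $\rho_1 \neq \rho_2$, the midpoint $\tfrac{1}{2}(\rho_1+\rho_2)$ still lies in $\Be(\sigma)$ and strict concavity gives $\varphi(\tfrac{1}{2}(\rho_1+\rho_2)) > \tfrac{1}{2}\varphi(\rho_1) + \tfrac{1}{2}\varphi(\rho_2) = \max_{\Be(\sigma)} \varphi$, a contradiction. Combining this uniqueness with the fact (noted in \Cref{sec:math_prelim}, via $\rho \prec \sigma$ being a mixture of unitary conjugates) that every unitarily invariant strictly concave $\varphi$ is strictly Schur concave, and recalling that $\rhomaxeps(\sigma)$ is already known to be a maximizer, we conclude $\argmax_{\Be(\sigma)} \varphi = \{\rhomaxeps(\sigma)\}$. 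The only subtlety worth flagging is the implicit use of unitary invariance of $\varphi$ to pass from strict concavity to strict Schur concavity; this is harmless in all the intended applications (von Neumann, R\'enyi, $(h,\phi)$-entropies), but strictly speaking should be stated as part of the hypothesis.
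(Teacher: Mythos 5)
Your proof is correct and matches how the paper handles this corollary, which is stated as an immediate consequence of \Cref{thm:max-min-ball} with no separate proof given: the main claim and the strict-Schur-concave claim follow exactly as you argue, by composing the two majorization relations in \eqref{eq:min-max-maj-cond} with (strict) Schur concavity, and the strict-concavity claim uses the standard midpoint argument on the convex set $\Be(\sigma)$, the same device the paper uses in \Cref{sec:max_of_pmaxeps} to establish uniqueness of the majorization minimizer.

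One small remark on the ``subtlety'' you flag at the end. You do not actually need to pass from strict concavity to strict Schur concavity via unitary invariance. The corollary's base hypothesis is that $\varphi$ is Schur concave; the clause ``If $\varphi$ is strictly concave'' is an additional hypothesis layered on top. So the argument is simply: Schur concavity (the standing hypothesis) gives $\rhomaxeps(\sigma)\in\argmax_{\Be(\sigma)}\varphi$, and strict concavity plus convexity of $\Be(\sigma)$ gives uniqueness of the maximizer; hence $\argmax_{\Be(\sigma)}\varphi=\{\rhomaxeps(\sigma)\}$. No appeal to unitary invariance is required, and the hypothesis need not be amended.
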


	In particular, \Cref{cor:max-min-ball-Schurconcave} yields maximizers and minimizers of any $(h,\phi)$-entropy. This allows computation of (trace-ball) ``smoothed'' Schur-concave functions. Given $\varphi: \cD\to \R$, we define 
	\begin{align}
	\bar\varphi^{(\epsilon)}(\sigma) &:= \max_{\omega\in \Be(\sigma)} \varphi(\sigma), \qquad \ubar \varphi^{(\epsilon)}(\sigma) := \min_{\omega\in \Be(\sigma)} \varphi(\sigma).
	\label{eq-x}
\end{align}
	By \Cref{cor:max-min-ball-Schurconcave}, we therefore obtain explicit formulas: $\bar \varphi^{(\epsilon)}(\sigma) = \varphi(\rhomaxeps(\sigma))$, and $\ubar \varphi^{(\epsilon)}(\sigma) = \varphi(\rhomineps(\sigma))$. In particular, this provides an exact version of Theorem 1 of \cite{Skorski16}, which formulates approximate maximizers for the smoothed $\alpha$-R\'enyi entropy $\bar S_\alpha^{(\eps)}$.

	 Note that by setting $\varphi = H_\text{min}$ or $H_\text{max}$, the min- and max-entropies, in (\ref{eq-x}), yields explicit expressions for the min- and max- entropies smoothed over the $\eps$-ball. These choices are of particular interest, due to their relevance in one-shot information theory (see e.g.~\cite{RennerThesis, Tom-book} and references therein). In particular, let us briefly consider one-shot classical data compression. Given a source (random variable) $X$, one wishes to encode output from $X$ using codewords of a fixed length $\log m$, such that the original message may be recovered with probability of error at most $\eps$. It is known that the minimal value of $m$ at fixed $\eps$, denoted $m_*(\eps)$, satisfies
	 \begin{equation}
	  \ubar H_\text{max}^{(\eps)}(X) \leq \log m_*(\eps) \leq \inf_{\delta\in(0,\eps)} [ \ubar H_\text{max}^{(\eps)}(X) + \log \frac{1}{\delta}]
	 \end{equation}
	 as shown by \cite{Tomamichel2016,RR12}. Equation~\eqref{eq-x} provides the means to explicitly evaluate the quantity $ \ubar H_\text{max}^{(\eps)}(X)$ in this bound.

\begin{remark}
Interestingly, the states $\rho^*_\eps(\sigma)$ and  $\rho_{*,\eps}(\sigma)$ were derived independently by  Horodecki and Oppenheim~\cite{MHJO}, and in the context of thermal majorization by van der Meer and Wehner~\cite{Remco}. They referred to these as the flattest and steepest states. Horodecki and Oppenheim also used these states to obtain expressions for smoothed Schur concave functions.
\end{remark}

	By applying \Cref{cor:max-min-ball-Schurconcave} to the von Neumann entropy, we obtain a local continuity bound given by inequality (\ref{eq:local_cont_bound}) in the following proposition. We also compare it to the Audenaert-Fannes bound (stated in \Cref{lem:AF_equality}), which we include as inequality (\ref{eq:AF_from_local}) below. In addition, we establish that the sufficient condition for equality in the (AF)-bound (see \Cref{lem:AF_equality}) is also a necessary one. 
	\begin{proposition}[Local continuity bound] \label{prop:local_cont_bound}
	Let $\sigma\in \D$ and $\eps\in (0,1]$. Then for any state $\omega \in \Be(\sigma)$,
	\begin{align}	
	|S(\omega) - S(\sigma)| &\leq \max \{  S(\rho_\eps^*(\sigma)) - S(\sigma), S(\sigma) - S(\rho_{*,\eps}(\sigma)) \}\label{eq:local_cont_bound}\\
	&\leq \begin{cases}
 \epsilon \log (d-1) + h(\epsilon) & \text{if } \epsilon < 1 - \tfrac{1}{d} \\
	 \log d & \text{if } \epsilon \geq 1 - \tfrac{1}{d}.
 \end{cases} \label{eq:AF_from_local}
	\end{align}
	for $h(\eps) = - \eps\log \eps - (1 -\eps)\log (1- \eps)$ the binary entropy. Moreover, equality holds in \eqref{eq:AF_from_local} if and only if one of the following holds:
	\begin{enumerate}
		\item $\sigma$ is a pure state; in this case $\rho_\eps^*(\sigma) = \begin{cases}
		\diag(1- \epsilon, \frac{\epsilon}{d-1},\dotsc \frac{\epsilon}{d-1}) & \text{if } \eps < 1 - \frac{1}{d},\\
		\one/d & \text{if } \eps \geq 1 - \frac{1}{d}.
		\end{cases}$
		\item $\epsilon < 1 - \frac{1}{d}$, and $\sigma = \diag(1- \epsilon, \frac{\epsilon}{d-1},\dotsc \frac{\epsilon}{d-1})$; in this case $\rho_{\eps, *}(\sigma)$ is a pure state,
		\item $\epsilon \geq 1 - \frac{1}{d}$ and $\sigma= \tau:= \one/d$; in this case $\rho_{\eps, *}(\sigma)$ is a pure state.
	\end{enumerate}
	\end{proposition}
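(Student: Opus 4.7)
The first inequality \eqref{eq:local_cont_bound} is immediate from \Cref{cor:max-min-ball-Schurconcave}: the von Neumann entropy is a symmetric strictly concave function of the eigenvalues, hence strictly Schur concave, so $\rho^*_\eps(\sigma)$ and $\rho_{*,\eps}(\sigma)$ attain the maximum and the minimum of $S$ on $\Be(\sigma)$. Since $\sigma\in \Be(\sigma)$, both $S(\rho^*_\eps(\sigma)) - S(\sigma)$ and $S(\sigma) - S(\rho_{*,\eps}(\sigma))$ are nonnegative, and $|S(\omega) - S(\sigma)|$ is controlled by the larger of the two. The second inequality \eqref{eq:AF_from_local} then follows by applying the Audenaert-Fannes bound (\Cref{lem:AF_equality}) to each of the pairs $(\rho^*_\eps(\sigma), \sigma)$ and $(\sigma, \rho_{*,\eps}(\sigma))$, both of which have half-trace distance at most $\eps$ from $\sigma$.

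For the sufficient direction of the equality conditions, each case reduces to a direct computation. In Case~1, $\sigma$ pure combined with \Cref{lem:char_of_alpha1} and \Cref{thm:max-min-ball} identifies $\rho^*_\eps(\sigma)$ with the displayed spectrum, so $S(\rho^*_\eps(\sigma))$ equals $\eps \log(d-1) + h(\eps)$ or $\log d$ according to the range of $\eps$, while $S(\sigma) = 0$. In Cases~2 and~3, $S(\sigma)$ itself already equals the AF bound, and one checks that a rank-one projector lies in $\Be(\sigma)$ (at half-trace distance exactly $\eps$ in Case~2, and exactly $1 - 1/d \leq \eps$ in Case~3); hence by \Cref{cor:max-min-ball-Schurconcave} the state $\rho_{*,\eps}(\sigma)$ is unitarily equivalent to this pure state, and $S(\sigma) - S(\rho_{*,\eps}(\sigma)) = S(\sigma)$ saturates the AF bound.

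The necessary direction is the main obstacle, because \Cref{lem:AF_equality} only furnishes sufficient conditions for equality in AF. The plan is to reduce saturation of AF to a classical equality statement and then analyse it via Fano's inequality. Assuming equality in \eqref{eq:AF_from_local}, one of the two differences inside the maximum saturates AF for the corresponding pair of states. Since $S$ depends only on the spectrum, \Cref{lem:eig_upperbounded_trace-dist} lets us pass to the two decreasing eigenvalue distributions $p$ and $q$; saturation forces $\tfrac{1}{2}\|p-q\|_1 = \eps$ together with saturation of the classical Fannes inequality between $p$ and $q$. The classical equality is then extracted by coupling $p$ and $q$ through their pointwise minimum to obtain random variables $(X,Y)$ with $X \sim p$, $Y \sim q$, and $\Pr(X\neq Y) = \eps$, so that $H(p) - H(q) \leq H(X,Y)-H(Y) = H(X|Y)$ and \Cref{lem:Fano} applies; tightness of this chain, combined with the equality clause of \Cref{lem:Fano} pinning down the conditional distribution, forces $Y$ to be a constant random variable, i.e.\ $q$ is a point mass, while $p$ is forced into the Fano equality profile. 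Translating back yields Case~1 in the sub-case where the maximum-entropy difference saturates AF, and Cases~2 or~3 (according to whether $\eps < 1 - 1/d$ or $\eps \geq 1 - 1/d$) in the sub-case where the minimum-entropy difference saturates AF.
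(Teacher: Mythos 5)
Your proof follows essentially the same route as the paper: \eqref{eq:local_cont_bound} from the majorization order and Schur concavity, \eqref{eq:AF_from_local} by applying the Audenaert--Fannes bound to the two pairs $(\rho^*_\eps(\sigma),\sigma)$ and $(\sigma,\rho_{*,\eps}(\sigma))$, sufficiency by direct evaluation, and necessity by reducing to the classical statement via \Cref{lem:eig_upperbounded_trace-dist}, constructing a maximal coupling, and invoking Fano's inequality with its equality clause. The one point where you diverge slightly from the paper's argument for \Cref{lem:AF_classical_H} is in how you pin down that $q$ is a point mass: you use tightness of $H(X,Y)\geq H(X)$ (i.e.\ $H(Y|X)=0$) together with the strict positivity of the Fano-equality conditional probabilities to conclude that at most one value of $Y$ can have positive mass, whereas the paper derives $q(k)=1$ algebraically from the explicit formula for the coupling $\omega^*$ and the Fano-equality profile, never invoking $H(Y|X)=0$ directly. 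Both are valid and about equally short. One small presentational gap: your coupling/Fano argument only addresses $\eps<1-\tfrac1d$; the regime $\eps\geq 1-\tfrac1d$ of the necessary direction must be handled separately by the trivial observation that $\AF(\eps)=\log d$ forces one state to have entropy $\log d$ (hence $=\tau$) and the other entropy $0$ (hence pure), as the paper does at the start of its proof of \Cref{lem:AF_equality}.
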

\begin{remark}
See \Cref{fig:local-VN-vs-AF} for a comparison between the right-hand side of our bound \eqref{eq:local_cont_bound} and the right-hand side of the (AF)-bound \eqref{eq:AF_from_local} for 500 random choices of $\sigma \in \cD$ and $\epsilon\in (0,1]$. The figure shows the merit of the local continuity bound.
\end{remark}
\begin{figure}[ht]
	\begin{center}
	\centering
	\includegraphics{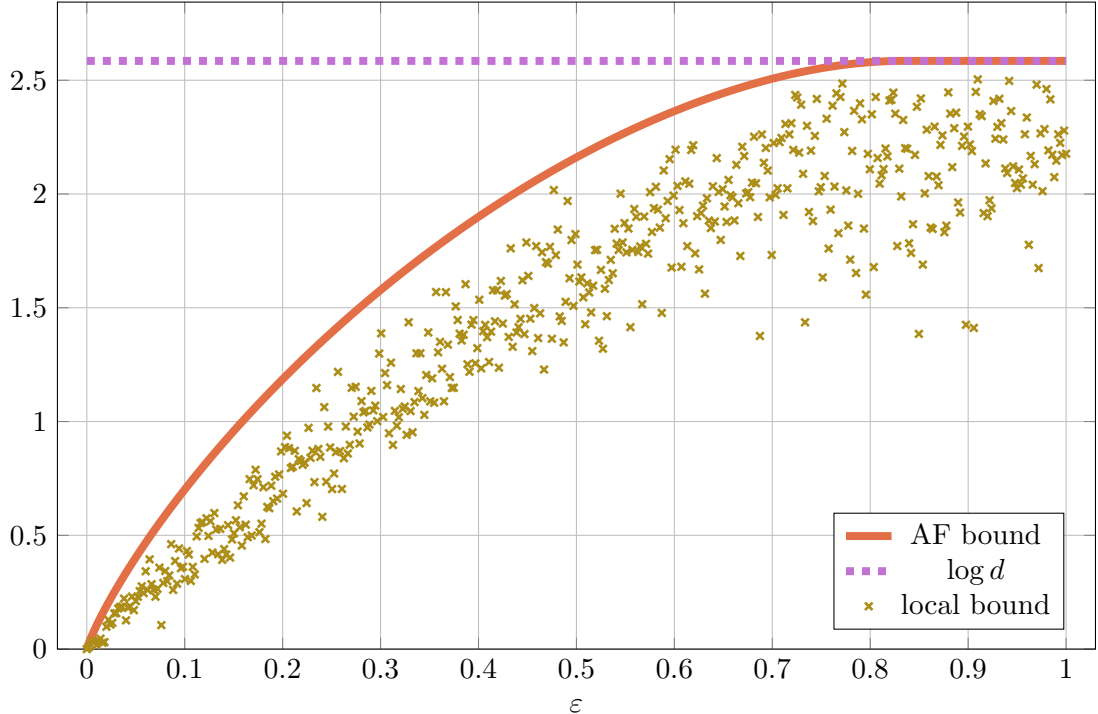}
	\end{center}
	\caption{In dimension $d=6$, the eigenvalues of 500 quantum states $\sigma$ and $\eps\in (0,1]$ were chosen uniformly randomly. For each pair $(\sigma, \eps)$, the local bound given in the right-hand side of \eqref{eq:local_cont_bound} is plotted as a cross. The Audenaert-Fannes bound given by the right-hand side of \eqref{eq:AF_from_local} is plotted as a function of $\eps$.  \label{fig:local-VN-vs-AF}} 
	\end{figure}

We can similarly find a local continuity bound for the R\'enyi entropies. This is given by the following proposition (whose proof is analogous to that of \Cref{prop:local_cont_bound}). We compare our bound \eqref{eq:local_cont_bound_Renyi} to the global bound obtained by Rastegin~\cite{Ras2011} (see also~\cite{Renyi-CMNF}), which is given by
the inequality (\ref{eq:Ras_Renyi_bound}) below.
\begin{proposition}[Local continuity bound for R\'enyi entropies] \label{prop:local_cont_bound_Renyi}
	Let $\alpha\in(0,1)\cup(1,\infty)$. Let $\sigma\in \D$ and $\eps\in [0,1]$. Then for any state $\omega \in \Be(\sigma)$,
	\begin{align}	
	|S_\alpha(\omega) - S_\alpha(\sigma)| &\leq \max \{  S_\alpha(\rho_\eps^*(\sigma)) - S_\alpha(\sigma), S_\alpha(\sigma) - S_\alpha(\rho_{*,\eps}(\sigma)) \}\label{eq:local_cont_bound_Renyi}\\
	&\leq \begin{cases}
	(2 \epsilon)^\alpha g_\alpha(d) + \eta_\alpha(2 \epsilon) &  \alpha < 1 \text{ and }  \epsilon < \alpha^{1/(1-\alpha)} \\
	d^{2(\alpha-1)} [g_\alpha(d-1) + r_\alpha(\epsilon)] & \alpha > 1.
	\end{cases} \label{eq:Ras_Renyi_bound}
	\end{align}
	where $r_\alpha( \epsilon) := c_\alpha [\epsilon^\alpha + (1 - \epsilon)^\alpha -1]$, $g_\alpha(x) := c_\alpha(x^{1-\alpha} - 1) $, and $\eta_\alpha(x) :=c_\alpha (x^\alpha - x)$, where $c_\alpha := [\ln(2)(1-\alpha)]\inv$.
	\end{proposition}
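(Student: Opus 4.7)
The plan is to mirror the proof of \Cref{prop:local_cont_bound}, substituting $S_\alpha$ for the von Neumann entropy. Two ingredients do all the work: (i) the Schur-concavity of $S_\alpha$ on $\cD$ for every $\alpha\in(0,1)\cup(1,\infty)$, which is one of the bulleted properties of $(h,\phi)$-entropies recorded in \Cref{sec:math_prelim}, and (ii) Rastegin's global continuity bound, namely the right-hand side of \eqref{eq:Ras_Renyi_bound}, quoted from~\cite{Ras2011}.

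For the first (local) inequality \eqref{eq:local_cont_bound_Renyi}, I would apply \Cref{cor:max-min-ball-Schurconcave} with $\varphi = S_\alpha$, which gives the sandwich
\[
S_\alpha(\rhomineps(\sigma)) \leq S_\alpha(\omega) \leq S_\alpha(\rhomaxeps(\sigma))
\]
for every $\omega\in \Be(\sigma)$. Subtracting $S_\alpha(\sigma)$ and taking absolute values yields precisely the bound by the maximum of the two differences. For the second (global) inequality, I would apply Rastegin's bound to each of the pairs $(\sigma,\rhomaxeps(\sigma))$ and $(\sigma,\rhomineps(\sigma))$, both of which lie in $\Be(\sigma)$ by construction of these states. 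This shows that each of $S_\alpha(\rhomaxeps(\sigma)) - S_\alpha(\sigma)$ and $S_\alpha(\sigma) - S_\alpha(\rhomineps(\sigma))$ is dominated by the expression in \eqref{eq:Ras_Renyi_bound}, hence so is their maximum.

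No serious obstacle arises once \Cref{thm:max-min-ball} and \Cref{cor:max-min-ball-Schurconcave} are in hand; the argument is essentially automatic. The one subtlety worth flagging is that for $\alpha > 1$ the R\'enyi entropy $S_\alpha$ fails to be concave, so one cannot invoke concavity directly — but Schur-concavity holds uniformly over $\alpha\in (0,1)\cup(1,\infty)$ through the $(h,\phi)$ framework, and Schur-concavity is precisely the hypothesis needed for \Cref{cor:max-min-ball-Schurconcave}. Unlike in \Cref{prop:local_cont_bound}, I would not attempt to characterize equality cases in the outer inequality, since Rastegin's bound is not known to be tight.
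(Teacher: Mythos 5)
Your proposal is correct and follows exactly the route the paper has in mind: the paper explicitly states that the proof of \Cref{prop:local_cont_bound_Renyi} is ``analogous to that of \Cref{prop:local_cont_bound}'', and you have carried out precisely that analogy — \Cref{cor:max-min-ball-Schurconcave} with $\varphi = S_\alpha$ for the sandwich in \eqref{eq:local_cont_bound_Renyi}, and Rastegin's global bound applied to the pairs $(\sigma,\rhomaxeps(\sigma))$ and $(\sigma,\rhomineps(\sigma))$, both in $\Be(\sigma)$, for \eqref{eq:Ras_Renyi_bound}. Your remark that for $\alpha>1$ one must lean on Schur-concavity rather than concavity is the right point of care, and your decision not to pursue equality conditions is also appropriate since, unlike the Audenaert--Fannes bound, Rastegin's bound is not known to be saturated.
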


	Our characterization of maximum entropy states originates from the following theorem, which is a condensed form of \Cref{thm:main_result}, our main mathematical result.
	Given a suitable function $\varphi:\cD(\cH) \to \R$,
	and any state $\sigma \in \cD(\cH)$, the theorem provides a necessary and sufficient condition under which a state maximizes $\varphi$ in the $\eps$-ball (of positive definite states), $\Bep(\sigma)$, of the state $\sigma$. 

	\begin{theorem} \label{thm:main_convex_result}
	Let $\sigma\in \cD(\cH)$, $\epsilon \in (0,1]$, and $\varphi:\cD(\cH) \to \RR$ be a concave, continuous function which is G\^ateaux-differentiable\footnote{For the definition of Gateaux-differentiability and Gateaux gradient see \Cref{tools-convex}.} on $\cD_+(\cH)$.  A state $\rho\in \Bep(\sigma)$, satisfies $\rho\in \argmax_{\Be(\sigma)} \varphi$ if and only if \emph{both} of the following conditions are satisfied. Here $L_\rho := \nabla \varphi(\rho)$ denotes the G\^ateaux-gradient of $\varphi$ at $\rho$.
		\begin{enumerate}
		\item Either $\frac{1}{2}\|\rho-\sigma\|_1= \eps$ or $L_\rho = \lambda \one$ for some $\lambda \in \R$, and
		\item  we have 
		\begin{align}\label{eq-eval}
		\pi_\pm L_\rho \pi_\pm &= \lambda_\pm(L_\rho) \pi_\pm,
		\end{align}
where $\pi_\pm$ is the projection onto the support of $\Delta_\pm$, and where $\Delta= \Delta_+ - \Delta_-$ is the Jordan decomposition of $\Delta:= \rho-\sigma$.
	\end{enumerate}
	\end{theorem}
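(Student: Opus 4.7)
The plan is to recast $\rho \in \argmax_{\Be(\sigma)} \varphi$ as a constrained convex optimization problem---minimize the convex function $-\varphi$ over the closed convex set $\Be(\sigma)$---and apply Fermat's rule. Since $\varphi$ is G\^ateaux-differentiable at the interior point $\rho \in \D_+$, the optimality condition reduces to $L_\rho \in N_{\Be(\sigma)}(\rho)$, where $N$ denotes the normal cone. The positivity constraint $\omega \geq 0$ is inactive at $\rho>0$, so only the affine constraint $\tr\omega = 1$ and the non-smooth constraint $\tfrac12\|\omega - \sigma\|_1 \leq \eps$ contribute. Applying the subdifferential sum rule (Slater's condition is satisfied by convex combinations of $\sigma$ with $\tau$) yields
\[
L_\rho \;=\; \lambda\one + \mu\,g, \qquad g \in \partial\!\left(\tfrac12\|\cdot - \sigma\|_1\right)\!(\rho),\quad \lambda\in\R,\ \mu\geq 0,
\]
with complementary slackness: $\mu = 0$ whenever $\tfrac12\|\rho-\sigma\|_1 < \eps$. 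This dichotomy is precisely condition~(1) of the theorem.

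The central computation is then $\partial\|\cdot\|_1$ at $\Delta := \rho - \sigma$. By trace-norm/operator-norm duality on $\Bsa$, $\partial\|\cdot\|_1(\Delta) = \{B \in \Bsa : \|B\|_\infty \leq 1,\ \tr(B\Delta) = \|\Delta\|_1\}$. In the block decomposition induced by $\pi_+$, $\pi_-$, and $\pi_0 := \one - \pi_+ - \pi_-$, the equality constraint forces $\pi_+ B \pi_+ = \pi_+$ and $\pi_- B \pi_- = -\pi_-$; a short $2\times 2$ block-matrix computation shows that the norm bound $\|B\|_\infty\leq 1$ then forces every cross-block $\pi_\pm B \pi_0$, $\pi_+ B \pi_-$, and their adjoints to vanish. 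Hence every subgradient has the shape $g' = \pi_+ - \pi_- + \pi_0 W \pi_0$ with $W = W^*$, $\|W\|_\infty \leq 1$. Substituting $g = \tfrac12 g'$ into the Fermat relation and reading off the blocks of $L_\rho$ gives $\pi_+ L_\rho \pi_+ = (\lambda + \mu/2)\pi_+$, $\pi_- L_\rho \pi_- = (\lambda - \mu/2)\pi_-$, all cross-blocks zero, and $\|\pi_0 L_\rho \pi_0 - \lambda\pi_0\|_\infty \leq \mu/2$. Consequently the maximum and minimum eigenvalues of $L_\rho$ are exactly $\lambda \pm \mu/2$, attained on $\pi_\pm$; this is condition~\eqref{eq-eval}.

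The converse direction I would handle by reconstruction: given conditions (1) and (2), set $\mu := \lambda_+(L_\rho) - \lambda_-(L_\rho)$ and $\lambda := \tfrac12(\lambda_+(L_\rho)+\lambda_-(L_\rho))$, define $W := \tfrac{2}{\mu}(\pi_0 L_\rho \pi_0 - \lambda\pi_0)$ when $\mu > 0$ (otherwise $W=0$), and verify both that $L_\rho - \lambda\one = \tfrac{\mu}{2}(\pi_+ - \pi_- + \pi_0 W \pi_0)$ exhibits $L_\rho$ as an element of the normal cone $\R\one + \R_{\geq 0}\cdot\partial(\tfrac12\|\cdot-\sigma\|_1)(\rho)$, and that complementary slackness holds (the alternative $L_\rho = \lambda\one$ in condition~(1) corresponds to $\mu = 0$). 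Concavity of $\varphi$ then yields, for every $\omega \in \Be(\sigma)$, $\varphi(\omega) - \varphi(\rho) \leq \braket{L_\rho, \omega-\rho}_\text{HS} \leq 0$, the last inequality from $L_\rho \in N_{\Be(\sigma)}(\rho)$.

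The main obstacle will be rigorously pinning down the subdifferential of the trace norm at an arbitrary self-adjoint $\Delta$---in particular the vanishing of the cross-blocks of every element of $\partial\|\cdot\|_1(\Delta)$, which is what underpins the clean identification $\lambda_\pm(L_\rho) = \lambda \pm \mu/2$---and carefully treating the degenerate case $\Delta = 0$, where $\pi_\pm = 0$, $\pi_0 = \one$, condition~(2) is vacuous, and $\partial\|\cdot\|_1(0)$ is the whole operator-norm unit ball in $\Bsa$, so that Fermat's rule collapses to $L_\rho = \lambda\one$ in agreement with the alternative in condition~(1).
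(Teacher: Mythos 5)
Your approach is correct and takes a genuinely different route from the paper. Both start from Fermat's rule together with a sum rule for subdifferentials (the paper works with indicator functions $\delta_\cD,\delta_T$ and Moreau--Rockafellar; you work with the normal cone and Slater's condition), but the central computation diverges. The paper characterizes the relevant subgradient implicitly as $\partial\delta_T(\rho)=\{u: 2\eps\|u\|_\infty=\braket{u,\rho-\sigma}\}$, reduces the stationarity relation $0=-L_\rho+G+x\one$ to the scalar solvability of $\alpha=\|L_\rho-x\one\|_\infty$ in $x$, uses a small lemma about $q(x)=\max_{\lambda\in\spec L_\rho}|\lambda-x|$ to turn this into $\alpha\geq(\lambda_+-\lambda_-)/2$, and then proves a separate, purely algebraic lemma translating equality in the trace inequality into conditions (1) and (2). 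You instead characterize $\partial\bigl(\tfrac12\|\cdot-\sigma\|_1\bigr)(\rho)$ structurally as $\bigl\{\tfrac12(\pi_+-\pi_-+\pi_0 W\pi_0): W=W^*,\ \|W\|_\infty\leq 1\bigr\}$ and read the block form of $L_\rho$ directly from the Lagrangian relation, which makes the identification $\lambda_\pm(L_\rho)=\lambda\pm\mu/2$ transparent. Your approach buys a more explicit description of the optimizer's structure and avoids a detour through the numerical threshold; the paper's buys modularity, since its inequality $\tr(L\Delta)\leq\eps(\lambda_+(L)-\lambda_-(L))$ and its equality analysis are proved once and re-used.

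One step in your converse needs to be made explicit. To write $L_\rho-\lambda\one=\tfrac{\mu}{2}(\pi_+-\pi_-+\pi_0 W\pi_0)$ you are asserting that the cross-blocks $\pi_+L_\rho\pi_-$ and $\pi_\pm L_\rho\pi_0$ vanish, but condition (2) as stated only constrains the compressions $\pi_\pm L_\rho\pi_\pm$. The missing observation is that $\lambda_+(L_\rho)$ is the top eigenvalue of the self-adjoint $L_\rho$, so $\lambda_+(L_\rho)\one-L_\rho\geq 0$; condition (2) gives $\pi_+\bigl(\lambda_+(L_\rho)\one-L_\rho\bigr)\pi_+=0$, and a positive semidefinite operator whose compression to a subspace vanishes must annihilate that subspace, hence $L_\rho\pi_+=\lambda_+(L_\rho)\pi_+$ and all cross-blocks involving $\pi_+$ vanish (similarly for $\pi_-$). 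This is the same saturation argument you invoke in the forward direction to pin down the cross-blocks of $B\in\partial\|\cdot\|_1(\Delta)$; stating it once more for $L_\rho$ closes the gap, after which $\|W\|_\infty\leq 1$ follows from the trivial bound $\lambda_-(L_\rho)\one\leq L_\rho\leq\lambda_+(L_\rho)\one$ on $\pi_0\cH$.
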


\smallskip

	\begin{figure}[ht]
	\centering
	\includegraphics{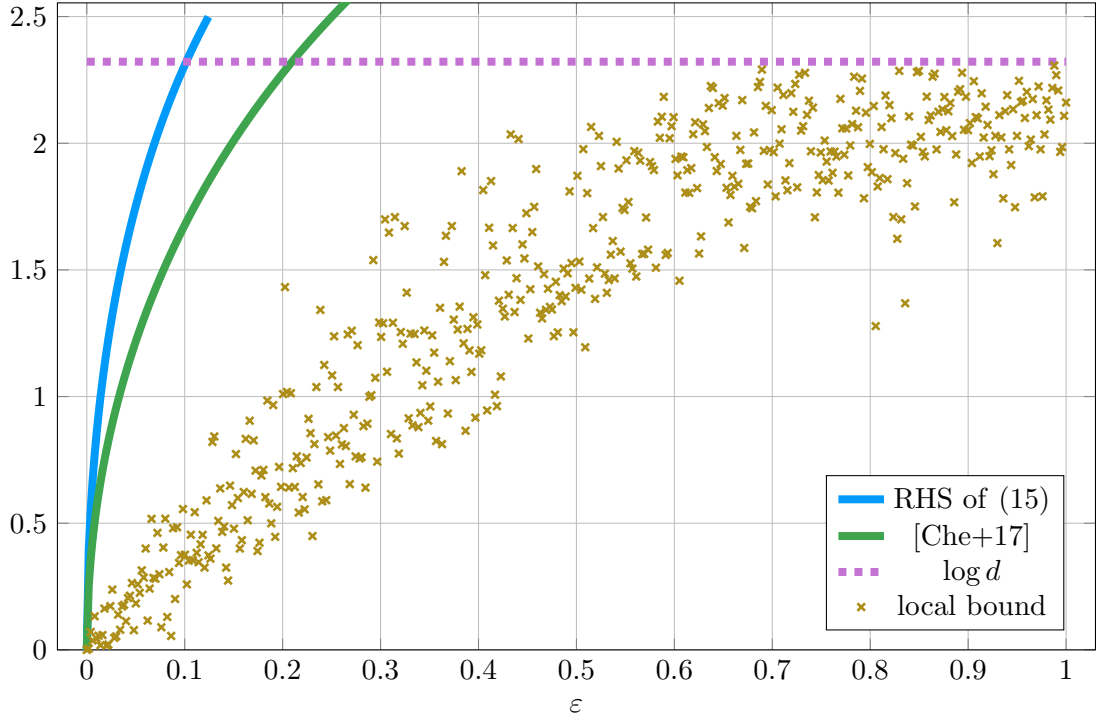}
	\caption{
	Local bounds versus the global bound for $H_\text{max}$. In dimension $d=6$, the eigenvalues of 500 quantum states $\sigma$ and $\eps\in (0,1]$ were chosen uniformly randomly. For each pair $(\sigma, \eps)$, the local bound given in the right-hand side of \eqref{eq:local_cont_bound_Renyi} is plotted with a cross. The global bound given by the right-hand side of \eqref{eq:Ras_Renyi_bound} is plotted in blue as a function of $\eps \in (0,\frac{1}{8})$, and the global bound of \cite{Renyi-CMNF} in green for $\eps\in(0,1)$.  The trivial bound $\log d$ is plotted as a dashed line. \label{fig:local-Renyi-vs-global}} 
	\end{figure}

A corollary of this theorem concerns the conditional entropy $S(A|B)_\rho$ of a bipartite state $\rho_{AB}$. It corresponds to the choice $\varphi(\rho_{AB}) = S(A|B)_\rho$ whose G\^ateaux derivative $L_\rho := \nabla \varphi(\rho_{AB})$ is given by $L_{\rho_{AB}} = - (\log \rho_{AB} -\one_A \otimes \log \rho_B) $ (see \Cref{cor:Lrho_CE}).

	\begin{corollary} \label{cor:main_thm_for_CE}
	Given a state $\sigma_{AB}\in \cD_+(\cH_A\otimes\cH_B)$ and $\eps\in (0,1]$, a state $\rho_{AB}\in \Bep(\sigma_{AB})$ has maximum conditional entropy if and only if
	\[
	S(A|B)_\rho - S(A|B)_\sigma + D(\sigma_{AB}\|\rho_{AB}) - D(\sigma_B\|\rho_B) = \eps (\lambda_+(L_\rho) - \lambda_-(L_\rho))
	\]
	where $L_\rho = \one_A \otimes \log \rho_B-\log \rho_{AB}  $.

	\end{corollary}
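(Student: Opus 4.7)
The plan is to apply \Cref{thm:main_convex_result} to the function $\varphi(\rho_{AB}) := S(A|B)_\rho$. The conditional entropy is concave, continuous, and G\^ateaux-differentiable on $\cD_+(\cH_A \otimes \cH_B)$, with gradient $L_\rho = \one_A \otimes \log \rho_B - \log \rho_{AB}$ (the content of \Cref{cor:Lrho_CE}); thus $\varphi$ fits the hypotheses of the theorem. It then suffices to show that the displayed identity in the present corollary is equivalent to the conjunction of conditions~(1) and~(2) of \Cref{thm:main_convex_result} for this $\varphi$.

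My next step would be to condense those two conditions into the single scalar equality
\[
\tr[L_\rho(\rho_{AB}-\sigma_{AB})] \;=\; \eps\bigl(\lambda_+(L_\rho)-\lambda_-(L_\rho)\bigr).
\]
To prove this equivalence, write $\Delta:=\rho_{AB}-\sigma_{AB}=\Delta_+-\Delta_-$ in its Jordan decomposition; since $\tr\Delta=0$, one has $\tr\Delta_+=\tr\Delta_-=\tfrac{1}{2}\|\Delta\|_1$. The operator inequalities $\lambda_+(L_\rho)\one \geq L_\rho \geq \lambda_-(L_\rho)\one$, paired against the positive operators $\Delta_\pm$, give
\[
\tr[L_\rho\Delta] \;\leq\; \bigl(\lambda_+(L_\rho)-\lambda_-(L_\rho)\bigr)\tfrac{1}{2}\|\Delta\|_1 \;\leq\; \eps\bigl(\lambda_+(L_\rho)-\lambda_-(L_\rho)\bigr),
\]
with equality in the first inequality precisely when $\pi_\pm L_\rho \pi_\pm = \lambda_\pm(L_\rho)\pi_\pm$ (condition~(2)), and in the second precisely when $\tfrac{1}{2}\|\Delta\|_1=\eps$ or $\lambda_+(L_\rho)=\lambda_-(L_\rho)$ (condition~(1); in the degenerate case $L_\rho=\lambda\one$ both sides vanish and no distance constraint is needed).

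Finally I would compute $\tr[L_\rho\Delta]$ directly. Using the partial-trace identity $\tr_{AB}[(\one_A\otimes\log\rho_B)X_{AB}] = \tr_B[(\log\rho_B)X_B]$ on the first summand of $L_\rho$, and the relation $\tr[\sigma_X\log\rho_X] = -S(\sigma_X)-D(\sigma_X\|\rho_X)$ for $X\in\{B,AB\}$ on each of the four resulting terms, the expression regroups into $S(A|B)_\rho - S(A|B)_\sigma$ together with the relative-entropy combination appearing in the corollary. Equating the result with $\eps(\lambda_+(L_\rho)-\lambda_-(L_\rho))$ yields the stated identity. The main obstacle I anticipate is the careful sign bookkeeping in this final algebraic rearrangement, since the two relative-entropy terms enter with opposite signs inherited from the two summands of $L_\rho$, and a sign slip would produce an inequivalent optimality condition.
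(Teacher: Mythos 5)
Your approach is essentially the paper's: apply the main optimization theorem to the conditional entropy, use \Cref{cor:Lrho_CE} for the gradient, and expand $\tr[L_\rho\Delta]$ into entropies and relative entropies. One small redundancy: the paper cites \Cref{thm:main_result} (the uncondensed version), which already contains the scalar-equality form of the optimality condition, so your intermediate step of re-deriving
\[
\tr[L_\rho(\rho_{AB}-\sigma_{AB})] \;=\; \eps\bigl(\lambda_+(L_\rho)-\lambda_-(L_\rho)\bigr)
\]
from conditions (1) and (2) is exactly the paper's own Lemma in Section~6.3 (the equivalence of its conditions~1 and~3); citing \Cref{thm:main_result} directly would let you skip it.

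Your caution about the final sign bookkeeping is well placed, and you should not leave it as an unchecked assertion that ``the expression regroups into\,\dots\,the relative-entropy combination appearing in the corollary.'' Carrying it through with $L_\rho = \one_A\otimes\log\rho_B - \log\rho_{AB}$ gives
\[
\tr[L_\rho(\rho_{AB}-\sigma_{AB})]
\;=\;
S(A|B)_\rho - S(A|B)_\sigma \;-\; D(\sigma_{AB}\|\rho_{AB}) \;+\; D(\sigma_B\|\rho_B),
\]
whereas the corollary as printed has $+\,D(\sigma_{AB}\|\rho_{AB}) - D(\sigma_B\|\rho_B)$. The paper's own proof also arrives at $-D(\sigma_{AB}\|\rho_{AB}) + D(\sigma_B\|\rho_B)$ (it just works with the opposite sign convention for $L_\rho$ and with $\tr[L_\rho(\sigma-\rho)]$, which cancels out of the comparison since $\lambda_+ - \lambda_-$ is invariant under $L\mapsto -L$). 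So the relative-entropy terms in the corollary's displayed identity appear to carry a sign typo, and your plan, once completed, would expose it rather than reproduce it.
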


	\FloatBarrier

\section{Geometry of the $\eps$-ball} \label{sec:geometry_trace_ball}
In this section, we consider the $\eps$-ball $\Be(\sigma)$ around a state $\sigma \in \D(\H)$, and motivate the construction of maximal and minimal states in the majorization order given in (\ref{eq:min-max-maj-cond}) of \Cref{sec:const_min_state_maj_order}. We prove \Cref{thm:max-min-ball} by reducing it to the classical case of discrete probability distributions on $d$ symbols, and then constructing explicit states $\rhomaxeps(\sigma)$ (in \Cref{sec:const_min_state_maj_order}), and $\rhomineps(\sigma)$ (in \Cref{sec:const_max_state_maj_order}), whose eigenvalues are respectively given by the probability distributions which are minimal and maximal in majorization order.

The reduction to the classical case is immediate: the state majorization $\rho\prec \sigma$ by definition means that the eigenvalue majorization $\vec \lambda(\rho)\prec \vec \lambda(\sigma)$ holds. Thus, instead of the set of density matrices $\cD$, we consider the simplex of probability vectors
 \[
 \Delta := \{ r = (r_i)_{i=1}^d \in \R^d: r_j\geq 0 \text{ for each } j=1,\dotsc,d, \text{ and } \sum_{i=1}^d r_i = 1\}.
 \]
 Note that $\Delta$ is the polytope (i.e.~the convex hull of finitely many points) generated by $(1,0,\dotsc,0)$ and its permutations. Instead of the ball $\Be(\sigma)$, we consider the $1$-norm ball around $q = (q_i)_{i=1}^d := \vec \lambda (\sigma)$,
 \[
 \Be(q) := \{p = (p_i)_{i=1}^d\in \Delta:  \frac{1}{2}\|p-q\|_1 := \frac{1}{2}\sum_{i=1}^d |p_i - q_i| \leq \eps\}.
 \] 
The set $\{ x \in \R^d: \|x\|_1 \leq 1\}$ can be written
\[
\{ x \in \R^d: \|x\|_1 \leq 1\} = \conv\{e_1,-e_1,\dotsc,e_d,-e_d\},
\]
where  $\conv(A)$ denotes the convex hull of a set $A$, and $e_1,\dotsc,e_d$ are the vectors of the standard basis (e.g. $e_j = (0,\dotsc,0,1,0,\dotsc,0)$ with $1$ in the $j$th entry), and is therefore a polytope, called the \emph{$d$-dimensional cross-polytope} (see e.g.~\cite[p.~82]{matousek2002lectures}). As a translation and scaling of the $d$-dimensional cross-polytope, the set $\{p \in \R^d: \frac{1}{2}\|p -q\|_1 \leq \eps\}$ is a polytope as well. As $B_\epsilon(q)$ is the intersection of this set and $\Delta$, it too is a polytope (see \Cref{fig:Mwa} for an illustration of $\Be(q)$ in a particular example).

The existence of $\rhomaxeps(\sigma)$ and $\rhomineps(\sigma)$  in $\Be(\sigma)$ satisfying \eqref{eq:min-max-maj-cond} is equivalent to $\pmaxeps(q)$ and $\pmineps(q)$ in $\Be(q)$ satisfying
\begin{equation}\label{class-maj}
\pmaxeps(q) \prec w \prec \pmineps(q) 
\end{equation}
for any $w\in \Be(q)$. Using Birkhoff's Theorem (e.g.~\cite[Theorem 12.12]{NC}), the set of vectors majorized by $w \in \Delta$ can be shown to be given by 
	\begin{equation}
	M_{w}:=\{p\in \Delta: w \succ p\} = \conv \{ \pi (w): \pi \in S_d\}, \label{eq:def_Mw_char_conv_hull_of_perm}
	  \end{equation} where $S_d$ is the symmetric group on $d$ letters (see \cite[p.~34]{marshall2011inequalities}). Let us illustrate this with an example in $d=3$. Let us choose $q = \left(0.21, 0.24, 0.55\right)$ and $\epsilon = 0.1$. The simplex $\Delta$ and ball $\Be(q)$ are depicted in \Cref{fig:Mwa}. A point $w = \left(0.14, 0.28, 0.58\right) \in \Be(q)$ is shown in \Cref{fig:Mwb}, and the set $M_w$ in \Cref{fig:Mwc}. 

\begin{figure}[ht]
\centering
\begin{minipage}[b]{.3\linewidth}
\centering \includegraphics{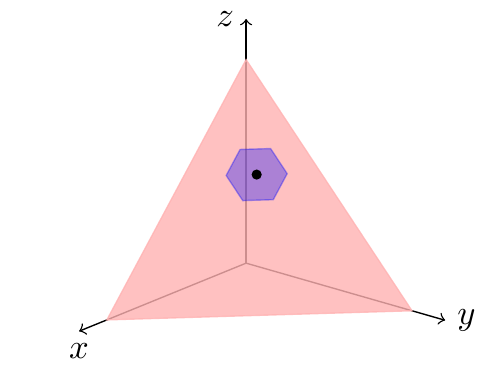}
\subcaption{$B_\epsilon(q)$, with $q$ in black}\label{fig:Mwa}
\end{minipage}\qquad
\begin{minipage}[b]{.3\linewidth}
\centering \includegraphics{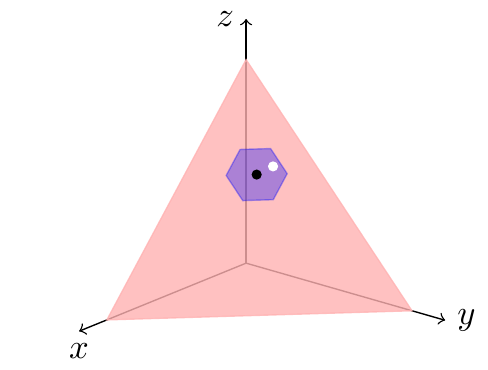}
\subcaption{A point $w\in \Be(q)$ in white.}\label{fig:Mwb}
\end{minipage}\qquad
\begin{minipage}[b]{.3\linewidth}
\centering \includegraphics{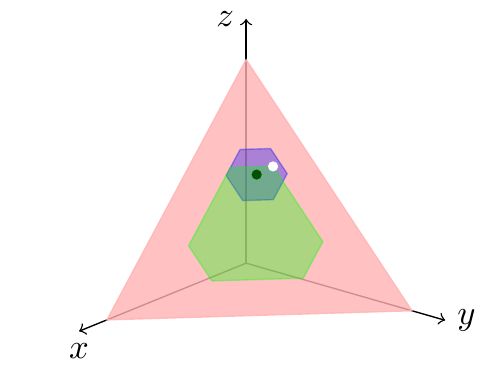}
\subcaption{The set $M_w$ in green.}\label{fig:Mwc}
\end{minipage}%
\caption{In dimension $d=3$, the simplex, $\Delta$, of probability vectors is the shaded triangle shown in (a), along with the ball $B_\epsilon(q)$ which is the hexagon shown in blue, centered at $q = \left(0.21, 0.24, 0.55\right)$ (depicted by a black dot) with  $\epsilon = 0.1$. In (b), a point $w = \left(0.14, 0.28, 0.58\right) $ is depicted in white, and in (c), the set $M_w$ is shown in green. \label{fig:Mw}}
\end{figure}
The geometric characterization \eqref{eq:def_Mw_char_conv_hull_of_perm}, depicted in \Cref{fig:Mw}, requires that $\Be(q) \subseteq M_{\pmineps(q)}$, and conversely, $\pmaxeps(q) \in M_p$ for each $p\in \Be(q)$.  \Cref{fig:Mwc} shows that for the point $w$, $\Be(q) \not \subseteq M_w$, implying that $w\neq \pmineps(q)$. Moreover, one can check that that e.g. $w \not \in M_q$, and hence $w\neq \pmaxeps(q)$.

Next we consider Schur concave functions on $\Delta$, in order to gain insight into the probability distributions $\pmineps(q)$ and $\pmaxeps(q)$ which arise
in the majorization order (\ref{class-maj}). In particular, let us consider Shannon entropy $H(p):= -\sum_{i=1}^d p_i \log p_i$ of a probability distribution $p = (p_i)_{i=1}^d$. It is known to be strictly Schur concave. Hence, if $\pmineps(q) \in \Be(q)$ satisfying \eqref{class-maj} exists, it must satisfy
\[
H(\pmineps(q)) \leq H(w)
\]
for any $w\in \Be(q)$. Thus, $\pmineps(q)$ must be a minimizer of $H$, which is a concave function, over $\Be(q)$, a convex set. Similarly, $\pmaxeps(q)$ must be a maximizer of $H$ over $\Be(q)$. Properties of maximizers of  concave functions over a convex sets are  well-understood; in particular, any local maximizer is a global maximizer.

The task of minimizing a concave function over a convex set is a priori more difficult; in particular, local minima need not be global minima. There is, however, a {\em{minimum principle}} which asserts that the minimum occurs on the boundary of the set; this is formulated more precisely in e.g. \cite[Chapter 32]{Rockafellar_book}.  Since $\Be(q)$ is a polytope,  $H$ is minimized on one of the finitely many verticies of $\Be(q)$. This fact yields a simple solution to the problem of minimizing $H$ over $\Be(q)$, as described below by example, and in generality in \Cref{sec:const_max_state_maj_order}.

Let us return to the example of \Cref{fig:Mw}. We see $\Be(q)$ has six vertices; these are $\{ q + \pi((\epsilon,- \epsilon,0)): \pi \in S_d\}$. 
\begin{figure}[ht]
\centering
\begin{minipage}[b]{.46\linewidth}
\centering \includegraphics{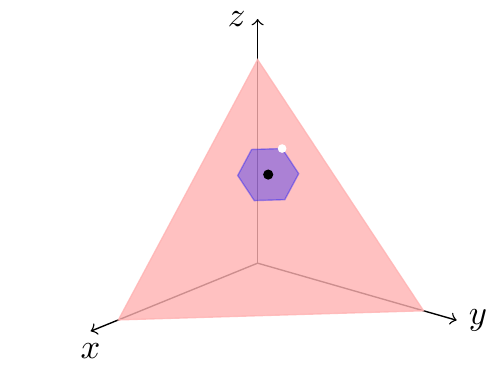}
\subcaption{A minimum $v$ of $H$ over $\Be(q)$, in white.}\label{fig:min-max-exampleb}
\end{minipage}\qquad
\begin{minipage}[b]{.46\linewidth}
\centering \includegraphics{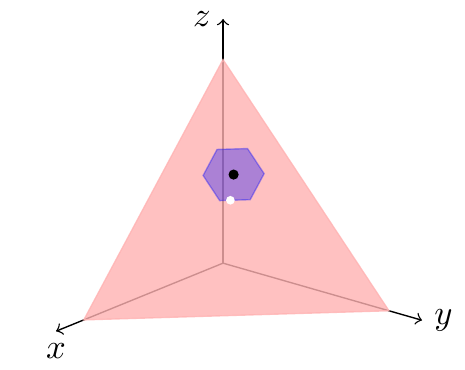}
\subcaption{The maximum $m$ of $H$ over $\Be(q)$, in white.}\label{fig:min-max-exampleb}
\end{minipage}%
 \caption{For the example of \Cref{fig:Mw}, the (unique) maximum and minimum of the Shannon entropy $H$ over $\Be(q)$ are shown. Both $v$ and $m$ occur on the boundary of $\Be(q)$.
 }\label{fig:min-max-example}
\end{figure}
The vertex which minimizes $H$ is $v:=\left(0.21- \epsilon, 0.24, 0.55+ \epsilon\right)$, where the smallest entry is decreased and the largest entry is increased, as shown in \Cref{fig:min-max-exampleb}. Moreover, one can check that $w \prec v$ for any $w\in \Be(q)$.  This leads us to the conjecture that the vertex corresponding to decreasing the smallest entry and increasing the largest entry will yield $\pmineps(q)$ satisfying \eqref{class-maj} in general. We see in \Cref{sec:const_max_state_maj_order} that this is indeed true, although in some cases more than one entry needs to be decreased.

On the other hand, finding the probability distribution $\pmaxeps(q)$ in $\Be(q)$ which satisfies \eqref{class-maj} is more than a matter of checking the verticies of $\Be(q)$, as shown by \Cref{fig:min-max-exampleb}: in this example, $\pmaxeps(q)$ is not a vertex of $\Be(q)$. Interestingly, useful insight into this probability distribution can be obtained by using results from convex optimization theory. This is discussed in the following section.

\subsection{Constructing the minimal state in the majorization order \eqref{eq:min-max-maj-cond}\label{sec:const_min_state_maj_order}}

Let us assume \Cref{thm:main_convex_result} holds, and that the von Neumann entropy $S$ satisfies the requirements of the function $\varphi$ of the theorem, with $L_\rho = - \log \rho - \tfrac{1}{\log_{\mathrm{e}}(2)}\one$. The proof of these facts are given in \Cref{sec:convexity}. Using \Cref{thm:main_convex_result}, we deduce properties of and the form of a maximizer of $S$ in the $\eps$-ball. 

Since $S$ is continuous and $\Be(\sigma)$ is compact, $S$ achieves a maximum over $\Be(\sigma)$. Moreover, since $S$ is strictly concave, the maximum is unique; otherwise, if $\rho_1,\rho_2\in \Be(\sigma)$ were maximizers, $\rho = \frac{1}{2}\rho_1 + \frac{1}{2}\rho_2\in \Be(\sigma)$ would have strictly higher entropy. Let $\rho$ be the maximizer. Condition 1 of \Cref{thm:main_convex_result} yields that either $\log \rho \propto \one$, so $\rho = \tau = \frac{\one}{d}$, or else $\frac{1}{2}\|\rho - \sigma\|_1 = \epsilon$. Since $\tau$ is the global maximizer of $S$ over $\cD$, we have $\rho = \tau$ whenever $\tau\in \Be(\sigma)$. If $\tau\not\in \Be(\sigma)$, then this condition yields the first piece of information about the maximizer: it is on the boundary of $\Be(\sigma)$, in that $\frac{1}{2}\|\rho - \sigma\|_1 = \epsilon$, just as shown in \Cref{fig:min-max-exampleb} in the classical setup.

By \Cref{lem:eig_upperbounded_trace-dist}, working in the basis in which $\sigma = \Eig^\downarrow(\sigma)$, we have 
\[
\frac{1}{2}\|\Eig^\downarrow(\rho) - \sigma\|_1 \leq \frac{1}{2}\|\rho-\sigma\|_1 \leq \epsilon
\]
and therefore $\Eig^\downarrow(\rho) \in \Be(\sigma)$. Since $S$ is unitarily invariant, $S(\Eig^\downarrow(\rho)) = S(\rho)$, and by uniqueness of the maximizer, we have $\rho = \Eig^\downarrow(\rho)$. Hence, the maximizer $\rho$ commutes with $\sigma$, and hence with $\Delta$, and the sums of its eigenprojections $\pi_\pm$. Since $[L_\rho,\rho]=0$ as well, \Cref{thm:main_convex_result} yields
\begin{equation}
L_\rho \pi_\pm = \lambda_\pm(L_\rho) \pi_\pm \label{eq:VN_L_rho_motivation}
\end{equation}
For any $\psi \in \pi_\pm \cH$, we have $L_\rho \psi = \lambda_\pm(L_\rho) \psi$, so \eqref{eq:VN_L_rho_motivation} is an eigenvalue equation for $L_\rho$. Since $\rho = \exp\left(-(L_\rho +\one)\right)$ is a function of $L_\rho$, it shares the same eigenprojections. In particular, 
\[
 \rho \pi_\pm = \exp\left(- (\lambda_{\pm}(L_\rho) +1) \right)\pi_\pm
 \] serves as an eigenvalue equation for $\rho$. Since $[\rho,\sigma]=0$, we can discuss how each acts on each (shared) eigenspace. By definition, $\rho$ and $\sigma$ act the same on $\ker \Delta = \ker(\rho-\sigma)$. On the other hand, on the subspaces where the eigenvalues of $\rho$ are greater than those of $\sigma$, i.e. on $\pi_+\cH$, we see that $\rho$ has the constant eigenvalue $\alpha_1:= \exp\left(- (\lambda_{+}(L_\rho) +1)\right)$, and on $\pi_-\cH$, $\alpha_2:= \exp\left(- (\lambda_{-}(L_\rho) +1)\right)$. Note that as $x\mapsto - \log x - 1$ is monotone decreasing on $\R$,  the subspace $\pi_+\cH$ where $L_\rho$ has its largest eigenvalue, $\rho$ has its smallest eigenvalue, and vice-versa. That is, $\lambda_+(\rho) =\alpha_2$ and occurs on the subspace $\pi_-\cH$, and $\lambda_-(\rho) = \alpha_1$, and occurs on $\pi_+\cH$.

Let us summarize the above observations. In $\ker \Delta$, the maximizer $\rho$ has the same eigenvalues as $\sigma$. On the subspace $\pi_-\cH$, $\rho$ has the constant eigenvalue $\alpha_2$, which is the largest eigenvalue of $\rho$, and $\rho \pi_- = \alpha_2 \pi_- \leq \sigma \pi_-$. In the subspace  $\pi_+\cH$, $\rho$ has the constant eigenvalue $\alpha_1$, which is its smallest eigenvalue, and $\rho \pi_+ = \alpha_1 \pi_+ \geq \sigma \pi_+$. It remains to choose subspaces corresponding to $\pi_\pm$, and the associated eigenvalues $\alpha_1$ and $\alpha_2$. Recall that as $\frac{1}{2}\|\rho-\sigma\|_1 = \epsilon$, we have $\tr[(\rho-\sigma)_+]=\tr[(\rho-\sigma)_-] = \epsilon$.

As the entropy is minimized on pure states and maximized on the completely mixed state $\tau := \frac{1}{d}\one$, one can guess that to increase the entropy, one should raise the small eigenvalues of $\sigma$, and lower the large eigenvalues of $\sigma$. That is, $\pi_+$ should correspond to the eigenspaces of the $n$ smallest eigenvalues of $\sigma$, and $\pi_-$ should correspond to the eigenspaces of the $m$ largest eigenvalues of $\sigma$, for some $n,m\in \{1,\dotsc,d-1\}$. Moreover, as $\alpha_2$ is the largest eigenvalue of $\rho$, and $\alpha_1$ is the smallest one, we must have $\alpha_1 \leq \mu \leq \alpha_2$ for any eigenvalue $\mu$ of $\sigma$ with corresponding eigenspace which is a subspace of $\ker \Delta$. \Cref{fig:rholevels} illustrates these ideas in an example. In \Cref{lem:char_of_alpha1}, we prove there exists unique $\alpha_1,\alpha_2,n$ and $m$ which respect these considerations. 

\begin{figure}[ht]
\centering
\includegraphics{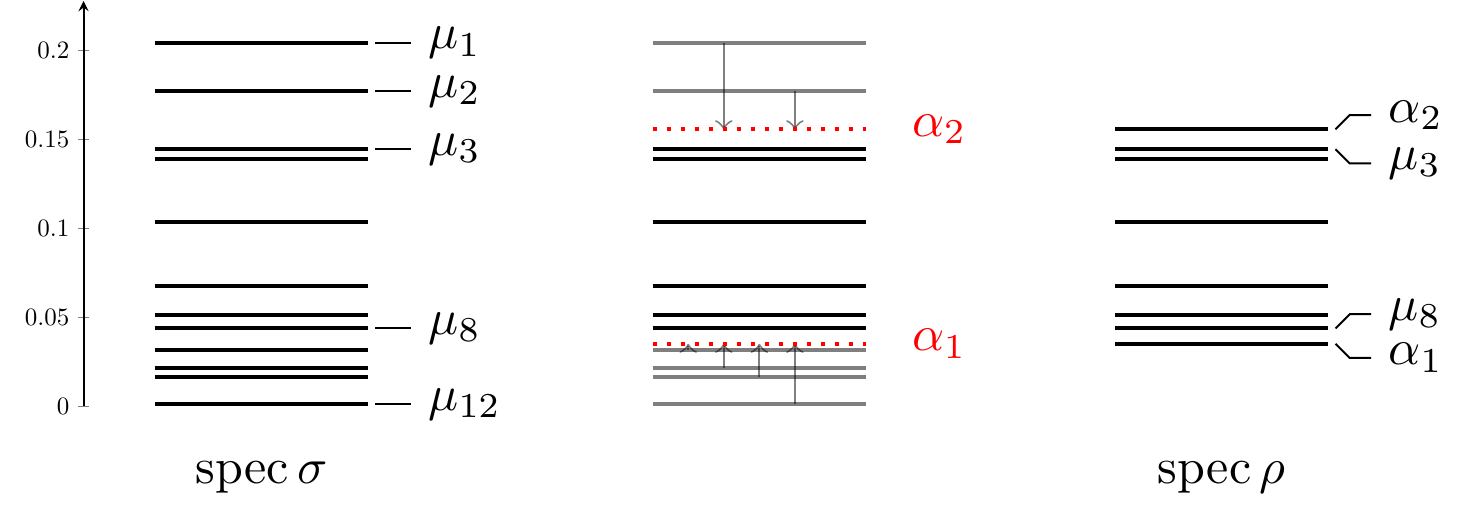}
\caption{We choose $d=12$, a state $\sigma \in \cD$, and $\eps=0.07$. Left: the eigenvalues $\mu_1 \geq \mu_2\geq \dotsc \geq \mu_d$ of $\sigma$ were plotted. Center: the smallest four eigenvalues of $\sigma$ are increased to a value $\alpha_1$, and the largest two eigenvalues of $\sigma$ decreased to $\alpha_2$, such that $\sum_{i=1}^2 (\mu_i - \alpha_2)= \sum_{j=d-3}^d (\mu_j - \alpha)= \epsilon$. Right: the eigenvalues of $\rho$ are $\alpha_2$ with multiplicity two, $\mu_3,\mu_4,\dotsc,\mu_{d-4}$, and $\alpha_1$ with multiplicity four. \label{fig:rholevels}}
\end{figure}
 
\FloatBarrier

Let us recall the task is to construct a state $\rho^*_\eps(\sigma)$ satisfying \eqref{eq:min-max-maj-cond}. Instead, we have constructed a state in order to maximize the entropy $S$. However, since any state $\rho^*_\eps(\sigma)$ satisfying \eqref{eq:min-max-maj-cond} must maximize $S$ over $\Be(\sigma)$, and the maximizer is unique, we instead check that the state resulting from this construction satisfies \eqref{eq:min-max-maj-cond} in \Cref{sec:max_of_pmaxeps}.

	Fix $\eps\in (0,1]$. Let the eigen-decomposition of a state $\sigma \in {\mathcal{D}}(\mathcal{H})$, with $\rm{dim {\mathcal{H}}} =d$, be
	$\sigma = \sum_{i=1}^d \mu_i \ketbra{i}{i}$, with 
	\[
	 \mu_1\geq \mu_2\geq \dotsm \geq \mu_d.
	 \]
	We give an explicit construction of the maximizer $\rho^* \in \argmax_{\rho\in\Be(\sigma)}S(\rho)$ as follows. We delay proof of the following lemma until \Cref{sec:proofs_VN} for readability.

	\begin{lemma} \label{lem:char_of_alpha1}\label{lem:char_of_alpha2}
	Assume $\frac{1}{2}\|\sigma-\tau\|_1 > \eps$, where $\tau = \one/d$, i.e.~$\tau \not\in B_\eps(\sigma)$. There is a unique pair $(\alpha_1,n) \in [0,1]\times \{1,\dotsc,d-1\}$ such that
	\begin{equation}
	\sum_{i=d-n+1}^{d} | \alpha_1 - \mu_i| = \eps \quad\text{and}\quad \mu_{d-n+1} < \alpha_1 \leq \mu_{d-n}, \label{eq:def_alpha1_n1}
	\end{equation}
	and similarly,  a unique pair $(\alpha_2,m) \in [0,1]\times \{1,\dotsc,d-1\}$ with
	\begin{gather}
	\sum_{j = 1}^m |\alpha_2 - \mu_i| = \eps \quad\text{and}\quad  \mu_{m+1} \leq \alpha_2  < \mu_{m}. \label{eq:def_alpha2_m}
	\end{gather}
	We define the index sets
	\begin{equation} \label{eq:def_IH_IM_IL}
	\begin{aligned}	
	I_H &= \{1,\dotsc,m \},  & I_M&=\{m+1,\dotsc,d-n \}, & I_L &= \{d-n+1,\dotsc,d\}
	\end{aligned}
	\end{equation}
	corresponding to the ``highest'' $m$ eigenvalues, ``middle'' $d-n-m$ eigenvalues, and ``lowest'' $n$ eigenvalues of $\sigma$, respectively.

	We have the following properties. The numbers $\alpha_1$ and $\alpha_2$ are such that $\alpha_1 < \frac{1}{d}< \alpha_2$, and we have the following characterizations of the pairs $(\alpha_1,n)$ and $(\alpha_2,m)$: For any $n \in {\mathbb{N}}$,defining
	\begin{align}\label{eq-alpha}
	\alpha_1(n) :=\frac{1}{n}\left(  \sum_{j\in I_L}\mu_j + \eps \right),& \qquad \alpha_2(n):= \frac{1}{n}\left( \sum_{j\in I_H}\mu_j - \eps \right),
	\end{align}
	we have $\alpha_1 = \alpha_1(n)$ and $\alpha_2 = \alpha_2(m)$, where $n$ and $m$ are, respectively, the unique solutions of the following:
	\begin{gather}
	 \mu_{d-n'+1} < \alpha_1 \leq \mu_{d-n'}\quad:\quad n' \in \{1,\dotsc,d-1\}, \label{eq:n1_as_alpha1(n1)_inequality} \\ 
	  \mu_{m'+1} \leq \alpha_2  < \mu_{m'} \quad:\quad m' \in \{1,\dotsc,d-1\}, \nonumber
	\end{gather}
	and satisfy
	\begin{gather}
	n = \min \{ n'\in \{1,\dotsc,d-r\}: \alpha_1(n') \leq \mu_{d-n'} \}, \label{eq:n1_as_min}\\
	m = \min \{ m'\in \{1,\dotsc,r\}: \alpha_2(m') \geq \mu_{m'+1} \}, \nonumber
	\end{gather}
	where $r$ is defined by 
	\begin{align}\label{def-r}
	\mu_{r+1} < \frac{1}{d}\leq \mu_{r}.
	\end{align}
	\end{lemma}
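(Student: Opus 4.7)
The plan is to reduce each of the joint existence problems to a one-parameter search, analyse a convex-combination recursion, and use the hypothesis $\tau \notin B_\eps(\sigma)$ to seed an admissible point. Concretely, fix $n' \in \{1,\dotsc,d-1\}$. The constraint $\mu_{d-n'+1} < \alpha_1 \leq \mu_{d-n'}$ forces $\alpha_1 \geq \mu_j$ for every $j \in I_L = \{d-n'+1,\dotsc,d\}$, so every absolute value in $\sum_{j \in I_L}|\alpha_1 - \mu_j| = \eps$ unfolds with the same sign and yields the unique value $\alpha_1 = \alpha_1(n')$ of \eqref{eq-alpha}. The task thus reduces to locating the $n'$ for which $\mu_{d-n'+1} < \alpha_1(n') \leq \mu_{d-n'}$.

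The driving tool is the identity
\[
\alpha_1(n'+1) \;=\; \tfrac{n'}{n'+1}\,\alpha_1(n') \;+\; \tfrac{1}{n'+1}\,\mu_{d-n'},
\]
obtained by peeling off the $j = d-n'$ term of the sum, which exhibits $\alpha_1(n'+1)$ as a convex combination of $\alpha_1(n')$ and $\mu_{d-n'}$. Two facts fall out. First, the admissible set $S := \{n' \leq d-r : \alpha_1(n') \leq \mu_{d-n'}\}$ is upward-closed, since $\alpha_1(n') \leq \mu_{d-n'}$ implies $\alpha_1(n'+1) \leq \mu_{d-n'} \leq \mu_{d-n'-1}$. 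Second, rewriting the identity as $\alpha_1(n') = \tfrac{n'-1}{n'}\alpha_1(n'-1) + \tfrac{1}{n'}\mu_{d-n'+1}$, a strict inequality $\alpha_1(n'-1) > \mu_{d-n'+1}$ propagates to $\alpha_1(n') > \mu_{d-n'+1}$.

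For existence, I seed $S$ at $n' = d - r$: the hypothesis $\eps < \tfrac{1}{2}\|\sigma-\tau\|_1 = \tfrac{d-r}{d} - \sum_{i > r}\mu_i$ yields $\alpha_1(d-r) < 1/d \leq \mu_r$, so $d-r \in S$. Hence $n := \min S$ exists, automatically satisfies \eqref{eq:n1_as_min} and the upper bound in \eqref{eq:def_alpha1_n1}, and satisfies the lower bound by the propagation argument: at $n = 1$ directly since $\alpha_1(1) = \mu_d + \eps > \mu_d$, and at $n > 1$ since minimality of $n$ gives $\alpha_1(n-1) > \mu_{d-n+1}$. Uniqueness of the pair follows by iterating the first fact: for any $n' > n$ one shows by induction that $\alpha_1(n') \leq \mu_{d-n'+1}$ (base case at $n'=n+1$ from the convex combination; inductive step using $\alpha_1(n') \leq \mu_{d-n'+1} \leq \mu_{d-n'}$), which is incompatible with the strict lower bound in \eqref{eq:def_alpha1_n1}. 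Finally $\alpha_1 < 1/d$ because either $n < d-r$ (whence $\mu_{d-n} < 1/d$) or $n = d-r$ (already handled in the seed computation).

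The pair $(\alpha_2, m)$ is treated by a symmetric argument: $\alpha_2(m')$ is the unique solution on the range $\mu_{m'+1} \leq \alpha_2 < \mu_{m'}$, the analogous recursion $\alpha_2(m'+1) = \tfrac{m'}{m'+1}\alpha_2(m') + \tfrac{1}{m'+1}\mu_{m'+1}$ makes $\{m' \leq r : \alpha_2(m') \geq \mu_{m'+1}\}$ upward-closed, and $r$ itself is an admissible seed because $\alpha_2(r) > 1/d > \mu_{r+1}$ by the same identity. The main obstacle is purely bookkeeping — tracking strict versus weak inequalities consistently through the several symmetric sub-arguments — rather than any single hard step.
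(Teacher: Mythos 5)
Your overall approach is the same as the paper's: the paper's ``iff'' identity $\alpha_2(m'-1)<\mu_{m'}\iff\alpha_2(m')<\mu_{m'}$ is precisely your convex-combination recursion in disguise, the seed at $d-r$ (resp.~$r$) corresponds to the paper's Step~1 non-emptiness argument, and the propagation/upward-closure arguments match the paper's Steps~2 and~3. The reduction of the joint problem to a one-parameter search via unfolding the absolute values is also the paper's opening observation. So the structure and the key lemma are the same; your write-up is merely organized around the admissible sets rather than around iterated iff's.

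However, there is one genuine gap, and it is \emph{not} mere bookkeeping: the strict inequality $\alpha_2>1/d$ does not follow from the "symmetric argument" as you outline it. For $\alpha_1<1/d$ your reasoning works because when $n<d-r$ you have $d-n\geq r+1$, so $\alpha_1\leq\mu_{d-n}\leq\mu_{r+1}<1/d$ with the strictness coming from $\mu_{r+1}<1/d$ in the definition of $r$. The mirror image for $\alpha_2$ gives, when $m<r$, only $\alpha_2\geq\mu_{m+1}\geq\mu_r\geq 1/d$ --- a \emph{weak} inequality, because the definition of $r$ is $1/d\leq\mu_r$ on that side. You cannot bookkeep your way from $\geq$ to $>$. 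You need a further observation: either argue as the paper does in its Step~4, that $\alpha_2(m)\leq 1/d$ together with $\mu_{m+1}\geq\cdots\geq\mu_r\geq 1/d\geq\alpha_2(m)\geq\mu_{m+1}$ forces $\mu_{m+1}=\cdots=\mu_r=1/d$, which collapses $\sum_{j\leq m}(\mu_j-1/d)\leq\eps$ into $\frac{1}{2}\|\sigma-\tau\|_1\leq\eps$, contradicting the hypothesis; or, more in your own style, note that on the admissible set $\{m,\dotsc,r\}$ the recursion $\alpha_2(m'+1)=\frac{m'}{m'+1}\alpha_2(m')+\frac{1}{m'+1}\mu_{m'+1}$ together with $\alpha_2(m')\geq\mu_{m'+1}$ makes $m'\mapsto\alpha_2(m')$ non-increasing, so $\alpha_2(m)\geq\alpha_2(r)>1/d$ using your seed. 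Either patch closes the hole, but one of them is needed.
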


	Finally, if $\eps = 0$, set $\rhomaxeps(\sigma) = \sigma$. For $\eps>0$, define
	\begin{equation} \label{eq:def_rho_eps_sigma}
	\rho_\eps^*(\sigma) := \begin{cases}
	  \sum_{i\in I_L} \alpha_1 \ketbra{i}{i} + \sum_{i\in I_M} \mu_i \ketbra{i}{i} + \sum_{i\in I_H}\alpha_2 \ketbra{i}{i} & \text{ if }\frac{1}{2}\|\sigma-\tau\|_1 > \eps\\
	  \tau & \text{ if } \frac{1}{2}\|\sigma-\tau\|_1 \leq \eps
	\end{cases}
	\end{equation}
	where $(\alpha_1,n)$  and $(\alpha_2,m)$ are defined by Lemma~\ref{lem:char_of_alpha1}. For the case $\frac{1}{2}\|\sigma-\tau\|_1 > \eps$, from the construction of the state $\rho_\eps^*(\sigma)$ we can verify that its spectrum $\spec \rho$ lies in the interval $(0,1]$ and
	\[
	\tr [\rho_\eps^*(\sigma)] = \frac{n}{n}\left(\sum_{j\in I_L} \mu_j - \eps\right) + \sum_{j\in I_M} \mu_j + \frac{m}{m}\left(\sum_{j\in I_H}\mu_j - \eps\right) = \sum_{j=1}^d \mu_j=1,
	\]
	as well as
	\begin{equation}
	\|\rho_\eps^*(\sigma)-\sigma\|_1 = \sum_{j\in I_L}|\alpha_1-\mu_j| + \sum_{j\in I_H}|\alpha_2 - \mu_j| = \eps+\eps = 2 \eps; \label{eq:td_rho-eps-sigma}
	\end{equation}
	so for any $\epsilon\in[0,1]$, we have $\rho_\eps^*(\sigma)\in \Bep(\sigma)$. We briefly summarize two properties of $\rho_\eps^*(\sigma)$.
	\begin{proposition} \label{prop:VN_semigroup_property}
	The maximizer $\rho_\eps^*(\sigma)$ satisfies a semigroup property: if $\eps_1,\eps_2\in (0,1]$, we have
	\[
	\rho^*_{\eps_1 + \eps_2}(\sigma) = \rho^*_{\eps_1} ( \rho^*_{\eps_2}(\sigma)).
	\]
	\end{proposition}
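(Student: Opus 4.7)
The strategy is to exploit the uniqueness clause of Theorem~\ref{thm:max-min-ball}: among all states in $\Be(\sigma)$, $\rho^*_\eps(\sigma)$ is the only one majorized by every element of $\Be(\sigma)$. Writing $\sigma':=\rho^*_{\eps_2}(\sigma)$, it therefore suffices to verify that $\rho^*_{\eps_1}(\sigma')$ has this same property with respect to $B_{\eps_1+\eps_2}(\sigma)$. Membership $\rho^*_{\eps_1}(\sigma')\in B_{\eps_1+\eps_2}(\sigma)$ is immediate from the triangle inequality for the trace norm, since $\tfrac12\|\rho^*_{\eps_1}(\sigma')-\sigma'\|_1\le \eps_1$ and $\tfrac12\|\sigma'-\sigma\|_1\le\eps_2$.

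The crux is the trace-distance identity
\[
\tfrac12\|\rho^*_{\eps_1+\eps_2}(\sigma) - \sigma'\|_1 = \eps_1,
\]
equivalently $\rho^*_{\eps_1+\eps_2}(\sigma)\in B_{\eps_1}(\sigma')$. Granting this, the defining majorization property of $\rho^*_{\eps_1}(\sigma')$ yields $\rho^*_{\eps_1}(\sigma') \prec \rho^*_{\eps_1+\eps_2}(\sigma)$; combined with $\rho^*_{\eps_1+\eps_2}(\sigma)\prec \omega$ for all $\omega\in B_{\eps_1+\eps_2}(\sigma)$ and transitivity of $\prec$, this shows $\rho^*_{\eps_1}(\sigma')\prec \omega$ for every such $\omega$, and uniqueness in Theorem~\ref{thm:max-min-ball} forces $\rho^*_{\eps_1}(\sigma')=\rho^*_{\eps_1+\eps_2}(\sigma)$.

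To establish the identity I would work in the generic case $\tau\notin B_{\eps_1+\eps_2}(\sigma)$, where the explicit formula \eqref{eq:def_rho_eps_sigma} applies to both states. Writing the water-filling parameters of Lemma~\ref{lem:char_of_alpha1} as $(m_2,n_2,\alpha_2^{(2)},\alpha_1^{(2)})$ at $\eps_2$ and $(m,n,\alpha_2,\alpha_1)$ at $\eps_1+\eps_2$, the minimality characterization \eqref{eq:n1_as_min} together with the formulas \eqref{eq-alpha} yield the monotonicity $m\ge m_2$, $n\ge n_2$, $\alpha_2\le \alpha_2^{(2)}$, $\alpha_1\ge \alpha_1^{(2)}$. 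Hence $\rho^*_{\eps_1+\eps_2}(\sigma) - \sigma'$ is diagonal in the eigenbasis of $\sigma$ with entries of constant sign on disjoint blocks: nonpositive on indices $1,\dotsc,m$, zero on $m+1,\dotsc,d-n$, and nonnegative on $d-n+1,\dotsc,d$. On the top block, the trace of its absolute value equals
\[
\sum_{i=1}^{m_2}(\alpha_2^{(2)}-\alpha_2) + \sum_{i=m_2+1}^m (\mu_i-\alpha_2) = m_2 \alpha_2^{(2)} + \sum_{i=m_2+1}^m \mu_i - m\alpha_2 = \eps_1,
\]
after substituting $m_2\alpha_2^{(2)} = \sum_{i=1}^{m_2}\mu_i - \eps_2$ and $m\alpha_2 = \sum_{i=1}^m \mu_i - (\eps_1+\eps_2)$ from the defining equation \eqref{eq:def_alpha2_m}; symmetric reasoning contributes the same $\eps_1$ on the bottom block, giving the identity. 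The remaining cases reduce trivially to $\rho^*_{\eps_1+\eps_2}(\sigma)=\rho^*_{\eps_1}(\sigma')=\tau$: if $\tau\in B_{\eps_2}(\sigma)$ then $\sigma'=\tau$, while if $\tau\in B_{\eps_1+\eps_2}(\sigma)\setminus B_{\eps_2}(\sigma)$ the triangle-inequality saturation stated in Theorem~\ref{thm:max-min-ball} gives $\tfrac12\|\sigma'-\tau\|_1 = \tfrac12\|\sigma-\tau\|_1 - \eps_2 \le \eps_1$, so $\tau\in B_{\eps_1}(\sigma')$. The main obstacle is proving the monotonicity of the water-filling parameters under enlargement of $\eps$, which I would derive directly from \eqref{eq:n1_as_min} by incrementing $m$ (or $n$) whenever the defining inequality $\mu_{m+1}\le \alpha_2(m)$ would otherwise fail at the larger value of $\eps$.
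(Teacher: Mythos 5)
Your proof is correct and takes a genuinely different route from the paper's. Both proofs dispose of the cases where the completely mixed state $\tau$ enters $\Be$ in the same way, using \Cref{prop:VN_opt_saturates_triangle} (the triangle-inequality saturation), so the comparison is really about the generic case $\tau\notin B_{\eps_1+\eps_2}(\sigma)$. The paper's proof there is a head-on computation: it applies \Cref{lem:char_of_alpha1} to $\sigma':=\rho^*_{\eps_2}(\sigma)$ with step $\eps_1$, producing parameters $(\beta_1,n')$, $(\beta_2,m')$, and then shows these coincide with the parameters $(\gamma_1,n'')$, $(\gamma_2,m'')$ obtained by applying the lemma to $\sigma$ with step $\eps_1+\eps_2$; this requires the inclusions $m\le m' \le d-n$ and a direct algebraic identification of $\beta_2$ with $\gamma_2(m')$. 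You instead prove only the single scalar identity $\tfrac12\|\rho^*_{\eps_1+\eps_2}(\sigma)-\sigma'\|_1=\eps_1$ and then invoke the uniqueness clause of \Cref{thm:max-min-ball}: membership of $\rho^*_{\eps_1}(\sigma')$ in $B_{\eps_1+\eps_2}(\sigma)$ follows from the triangle inequality, the trace-distance identity places $\rho^*_{\eps_1+\eps_2}(\sigma)$ inside $B_{\eps_1}(\sigma')$, and the chain $\rho^*_{\eps_1}(\sigma')\prec\rho^*_{\eps_1+\eps_2}(\sigma)\prec\omega$ together with transitivity and the uniqueness of the majorization-minimal element forces the two states to agree. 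This is a cleaner logical skeleton — it avoids recomputing the water-filling parameters of $\rho^*_{\eps_1}(\sigma')$ entirely — and it exploits \Cref{thm:max-min-ball} as a structural black box rather than reaching back into the explicit construction. The cost is that you shift the technical work to a monotonicity lemma (that $m,n$ are nondecreasing and $\alpha_2,\alpha_1$ move monotonically toward $1/d$ as $\eps$ grows), which you state but only sketch. That lemma does go through as you indicate: $\alpha_2(m';\eps)$ is strictly decreasing in $\eps$ at fixed $m'$, so the set $\{m':\alpha_2(m')\ge\mu_{m'+1}\}$ in \eqref{eq:n1_as_min} shrinks as $\eps$ grows and its minimum $m$ is nondecreasing; and once $m\ge m_2$, either $m=m_2$ (then $\alpha_2<\alpha_2^{(2)}$ directly from the formula) or $m>m_2$ (then $\alpha_2<\mu_m\le\mu_{m_2+1}\le\alpha_2^{(2)}$). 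The paper's proof of the generic case needs an analogous pair of index inequalities ($m'\ge m$ and $m'\le d-n$), so neither route escapes this bookkeeping, but your framing is more conceptual and would likely generalize more easily if one wanted to prove semigroup behaviour for other ball geometries in which an analogue of \Cref{thm:max-min-ball} holds.
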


	\begin{proposition} \label{prop:VN_opt_saturates_triangle} The state $\rho_\eps^*(\sigma)$ saturates the triangle inequality for the completely mixed state $\tau := \frac{1}{d}\one$ and $\sigma$, in that
	\[
	 \frac{1}{2}\|\sigma-\tau\|_1 =\frac{1}{2}\| \tau -  \rho_\eps^*(\sigma) \|_1+\frac{1}{2}\| \rho_\eps^*(\sigma) - \sigma\|_1  .
	\]
	\end{proposition}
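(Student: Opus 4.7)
The plan is to exploit the fact that, by construction, $\rho_\eps^*(\sigma)$ is diagonal in the eigenbasis of $\sigma$, and $\tau = \frac{1}{d}\one$ commutes with everything. Thus all three operators $\sigma$, $\rho_\eps^*(\sigma)$, and $\tau$ are simultaneously diagonalizable, so each trace norm in the claim reduces to an $\ell^1$-norm on eigenvalues in the common basis. Writing $\rho_i$ for the $i$-th diagonal entry of $\rho_\eps^*(\sigma)$ in the basis where $\sigma = \sum_i \mu_i \ketbra{i}{i}$, it therefore suffices to establish the numerical identity
\[
\sum_{i=1}^d \bigl|\mu_i - \tfrac{1}{d}\bigr| \;=\; \sum_{i=1}^d |\mu_i - \rho_i| \;+\; \sum_{i=1}^d \bigl|\rho_i - \tfrac{1}{d}\bigr|.
\]

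The core of the argument is the pointwise observation that, for each index $i$, the value $\rho_i$ lies in the closed real interval with endpoints $\mu_i$ and $1/d$; equivalently, $|\mu_i - 1/d| = |\mu_i - \rho_i| + |\rho_i - 1/d|$. I would verify this interval containment by splitting according to the three regimes of \eqref{eq:def_rho_eps_sigma}. The case $\tfrac{1}{2}\|\sigma-\tau\|_1 \leq \eps$ is immediate since $\rho_\eps^*(\sigma) = \tau$ and the proposition becomes a triviality. In the complementary case, on $I_M$ the equality holds trivially since $\rho_i = \mu_i$; on $I_H$, \Cref{lem:char_of_alpha1} gives $1/d < \alpha_2 < \mu_m \leq \mu_i$, so $\rho_i = \alpha_2$ lies between $1/d$ and $\mu_i$; and on $I_L$, the same lemma yields $\mu_i \leq \mu_{d-n+1} < \alpha_1 < 1/d$, so $\rho_i = \alpha_1$ lies between $\mu_i$ and $1/d$. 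In each case the identity $|\mu_i - 1/d| = |\mu_i - \rho_i| + |\rho_i - 1/d|$ is a one-line consequence of the fact that three real numbers in a given order satisfy the saturated triangle inequality.

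Summing the pointwise identity over $i \in \{1,\dots,d\}$ and translating back into trace norms via the common diagonalization produces $\|\sigma - \tau\|_1 = \|\sigma - \rho_\eps^*(\sigma)\|_1 + \|\rho_\eps^*(\sigma) - \tau\|_1$, and dividing by $2$ gives the claim. There is no serious obstacle: the whole argument is the geometric picture that $\rho_\eps^*(\sigma)$ is obtained from $\sigma$ by moving the top $m$ eigenvalues down to $\alpha_2$ and the bottom $n$ eigenvalues up to $\alpha_1$, never overshooting $1/d$; the only care required is to invoke the strict inequalities $\alpha_1 < 1/d < \alpha_2$ and the ordering constraints on $(\alpha_1,n)$ and $(\alpha_2,m)$ from \Cref{lem:char_of_alpha1}.
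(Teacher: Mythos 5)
Your proof is correct and takes essentially the same approach as the paper: both reduce to the eigenvalues via simultaneous diagonalization and then use the ordering constraints $\mu_j \leq \mu_{d-n+1} < \alpha_1 < \frac{1}{d} < \alpha_2 < \mu_m \leq \mu_j$ from \Cref{lem:char_of_alpha1}. Your organization is slightly cleaner in that you reduce everything to the pointwise identity $|\mu_i - \tfrac{1}{d}| = |\mu_i - \rho_i| + |\rho_i - \tfrac{1}{d}|$ for each $i$, whereas the paper computes $\|\tau - \rho_\eps^*(\sigma)\|_1$ in closed form using the explicit formulas $n\alpha_1 = \sum_{j\in I_L}\mu_j + \eps$ and $m\alpha_2 = \sum_{j\in I_H}\mu_j - \eps$ and identifies the remainder as $\|\tau - \sigma\|_1 - 2\eps$; your version avoids plugging in these formulas, needing only the inequalities, but the content of the two arguments is the same.
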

	These results are proven in \Cref{sec:proofs_VN}. 

	In the next section, we prove that $\rhomaxeps(\sigma)$ satisfies the majorization order \eqref{eq:min-max-maj-cond}.
	\subsection{Minimality of $\rho^*_\eps(\sigma)$ in the  majorization order \eqref{eq:min-max-maj-cond}}\label{sec:max_of_pmaxeps}
Given $q = \vec \lambda(\sigma)$, we consider the vector $p_\epsilon^*(q)$, defined by 
\begin{equation}
(p_\eps^*(q))_j = \begin{cases}
\alpha_2 & j\in I_H \\
q_j & j \in I_M\\
\alpha_1 & j\in I_L,
\end{cases}
\end{equation}
which are the eigenvalues of $\rho^*_\eps(\sigma)$, defined in \eqref{eq:def_rho_eps_sigma}, and where $I_H$, $I_M$, and $I_L$ are defined by \eqref{eq:def_IH_IM_IL}. Let $p \in \Be(q)$, and consider its entries arranged in non-increasing order, 
\[
p_1\geq \dotsm \geq p_d.
\]
Let $q_1\geq \dotsm \geq q_d$ be the entries of $q$ in non-increasing order, and $p^*_1\geq \dotsm \geq p^*_d$ be the entries of $p_{\epsilon}^*(q)$ in non-increasing order. In this section, we show $p_{\epsilon}^* \prec p$. 

\begin{enumerate}
	\item First, we establish that $p^*_1 \leq p_1$.

To prove this, let us assume the contrary: $p^*_1 > p_1$. 
Then, since $p^*_1=p^*_2=\dotsm = p^*_{m} = \alpha_2(m)$,
\begin{align*}	
m\alpha_2(m)  = \sum_{i=1}^{m}p^*_i > m p_1 \geq \sum_{i=1}^{m}p_i.
\end{align*}

We conclude this step with the following lemma.
\begin{lemma} \label{lem:n2sum_contradiction}
If $m\alpha_2(m) > \sum_{i=1}^{m}p_i$, then $p\not\in \Be(q)$.
\end{lemma}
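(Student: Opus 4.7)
The plan is to unwind the hypothesis using the explicit formula for $\alpha_2(m)$, and then to invoke a standard subset-sum lower bound for the trace distance between probability vectors. This is essentially a bookkeeping step, so I don't expect any real obstacle; the one thing to be careful about is that the $p_i$ here denote the non-increasingly ordered entries of some $p \in \Be(q)$, so the notation $p \in \Be(q)$ implicitly refers to the sorted vector. Since $q$ is already sorted and by \Cref{lem:eig_upperbounded_trace-dist} sorting $p$ can only decrease $\tfrac{1}{2}\|p-q\|_1$, any conclusion about the sorted vector transfers to the original.

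First I would use the definition \eqref{eq-alpha} with $I_H = \{1,\dotsc,m\}$ (and $q_j=\mu_j$), giving the identity
\[
m\,\alpha_2(m) = \sum_{j=1}^{m} q_j - \eps.
\]
Substituting this into the hypothesis $m\,\alpha_2(m) > \sum_{i=1}^{m} p_i$ immediately yields
\[
\sum_{i=1}^{m}(q_i - p_i) \;>\; \eps.
\]

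Next, I would apply the standard variational identity for the $\ell^1$-distance between probability vectors. Since $\sum_i (q_i - p_i) = 0$, we have
\[
\tfrac{1}{2}\|p-q\|_1 \;=\; \sum_{i:\,q_i > p_i}(q_i - p_i),
\]
and hence for any subset $S \subseteq \{1,\dotsc,d\}$,
\[
\sum_{i \in S}(q_i - p_i) \;\leq\; \sum_{i \in S,\,q_i > p_i}(q_i - p_i) \;\leq\; \tfrac{1}{2}\|p-q\|_1,
\]
the first inequality because we drop indices where $q_i \leq p_i$, the second because we extend the sum to all indices where $q_i > p_i$. Applying this with $S = \{1,\dotsc,m\}$ together with the displayed bound above gives $\tfrac{1}{2}\|p-q\|_1 > \eps$, so $p \notin \Be(q)$, completing the proof.
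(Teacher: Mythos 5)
Your proof is correct and follows essentially the same route as the paper: expand $m\alpha_2(m)$ via \eqref{eq-alpha} to obtain $\sum_{i=1}^m(q_i-p_i) > \eps$, then invoke the standard subset-sum bound $\sum_{i\in S}(q_i-p_i) \leq \sum_i (q_i-p_i)_+ = \tfrac{1}{2}\|q-p\|_1$ to conclude. Your additional remark on why one may pass to the sorted version of $p$ via \Cref{lem:eig_upperbounded_trace-dist} is a reasonable clarification of something the paper leaves implicit.
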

\begin{proof}	
Multiplying each side by $-1$ and adding $\sum_{i=1}^{m} q_i$, we have
\begin{equation}
\sum_{i=1}^{m}(q_i - p^*_i) < \sum_{i=1}^{m} (q_i - p_i) \leq \sum_{i=1}^{m} (q_i - p_i)_+ \leq \sum_{i=1}^{d} (q_i - p_i)_+ = \frac{1}{2}\|q - p\|_1. \label{eq:LB_norm_dist}
\end{equation}
Using $\alpha_2(m) = \frac{1}{m}\left(\sum_{i=1}^{m} q_i - \epsilon\right)$, the far left-hand side is
\[
\sum_{i=1}^{m}q_i - m\alpha_2(m)= \sum_{i=1}^{m}q_i  -\left(\sum_{i=1}^{m}q_i  - \epsilon\right) =\epsilon.
\]
Then \eqref{eq:LB_norm_dist} becomes
\[
\epsilon <  \frac{1}{2}\|q - p\|_1,
\]
contradicting that $p\in \Be(q)$.
\end{proof}

\item Next, for $k\in \{1,\dotsc,m\}$, we have $\sum_{i=1}^k p_i^* \leq \sum_{i=1}^k p_i$.

We prove this recursively: assume the property holds for some $k\in \{1,\dotsc,m-1\}$ but not for $k+1$. Note we have proven the base case of $k=1$ in the previous step. Then

\begin{equation}
\sum_{i=1}^{k} p_i^* \leq \sum_{i=1}^{k}p_i, \quad \text{and} \quad \sum_{i=1}^{k+1}p_i^* > \sum_{i=1}^{k+1} p_i. \label{eq:n2-assump}
\end{equation}
Subtracting the first inequality in (\ref{eq:n2-assump}) from the second, we have
\[
p_{k+1}^* > p_{k+1}
\]
yielding $\alpha_2(m)> p_{k+1} \geq p_{k} \geq \dotsm \geq p_{m}$. Summing the inequalities $p_{k+\ell}^*=\alpha_2(m) > p_{k+\ell}$ for $\ell=2,3,\dotsc,m-k$, we have
\[
\sum_{j=k+2}^{m} p_j^* > \sum_{j=k+2}^{m} p_j.
\]
Adding this to the second inequality of \eqref{eq:n2-assump}, we have
\[
\sum_{i=1}^{m}p_i^* > \sum_{i=1}^{m} p_i.
\]

We thus conclude by \Cref{lem:n2sum_contradiction}.

\item Next, let $k\in\{m+1,\dotsc,d-n\}$. Assume $\sum_{i=1}^k p_i^* > \sum_{i=1}^k p_i$. 
Then
\[
\sum_{i=1}^k q_i - p_i^* < \sum_{i=1}^k q_i - p_i
\]
However, the left-hand side is
\[
\sum_{i=1}^k q_i - p_i^* = \sum_{i=1}^{m} q_i - m \alpha_2(m) = \epsilon.
\]
Hence,
\[
 \epsilon < \sum_{i=1}^k q_i - p_i \leq \sum_{i=1}^d (q_i-p_i)_+ = \frac{1}{2}\|p-q\|_1
\]
which is a contradiction.

\item Finally, we finish with a recursive proof similar to step 2. Assume the property holds for  $k \in \{d-n,\dotsc,d-1\}$, but not for $k+1$. For this case too we have proven the base case $k=d-n$ in the previous step. We therefore assume
\begin{equation}
\sum_{i=1}^{k} p_i^* \leq \sum_{i=1}^{k} p_i, \qquad \text{and} \qquad \sum_{i=1}^{k+1} p_i^* > \sum_{i=1}^{k+1} p_i. \label{eq:contradiction>n1}
\end{equation}
Subtracting the two equations, we have
\[
p_{k+1}^* > p_{k+1}.
\]
Since $\alpha_1(n) = p_{k+1}^*$ we have  $\alpha_1(n) > p_{k+1}\geq  p_{k+1} \geq \dotsm \geq p_d$. Summing $p_{k+\ell}^* = \alpha_1(n) > p_{k+\ell}$ for $\ell=2,3,\dotsm,d$, we have
\[
 \sum_{i=k+2}^d p_i^* > \sum_{i=k+2}^d p_i.
\]
Adding to the second inequality of \eqref{eq:contradiction>n1}, we find
\[
1 = \sum_{i=1}^d p_i^* > \sum_{i=1}^d p_i
\]
which contradicts the assumption that $p\in \Delta$.\hfill\proofSymbol
\end{enumerate}

Thus, given $\sigma\in \cD$ and $\epsilon\in (0,1]$,  the state $\rho^*_\eps(\sigma)$ defined via \eqref{eq:def_rho_eps_sigma} has $\rho^*_\eps(\sigma)\prec \omega$ for any $\omega\in \Be(\sigma)$, proving the first majorization relation of \Cref{thm:max-min-ball}. Let us check the uniqueness of $\rho^*_\eps(\sigma)$. Assume another state $\tilde \rho \in \Be(\sigma)$ has
\[
\tilde \rho \prec \omega \qquad \forall\, \omega\in \Be(\sigma).
\]
Therefore, $\tilde \rho \prec \rho^*_\eps(\sigma) \prec \tilde \rho$, so $\tilde\rho$ and $\rho^*_\eps(\sigma)$ must be unitarily equivalent.
Moreover $\hat \rho :=\frac{1}{2}(\tilde \rho + \rho^*_\eps(\sigma)) \in \Be(\sigma)$ by convexity of the $\eps$-ball. By the strict concavity of the von Neumann entropy, $S(\hat \rho) > \frac{1}{2}(S(\tilde \rho) + S(\rho^*_\eps(\sigma))) = S(\rho^*_\eps(\sigma))$ using the unitary invariance of $S$. This contradicts that $\rho^*_\eps(\sigma)\prec \hat \rho$. Note that \eqref{eq:td_rho-eps-sigma} establishes the statement that $\rho_\eps(\sigma)$ lies on the boundary of $\Be(\sigma)$ when $\rho_\eps^*(\sigma)\neq \tau$. The remaining properties of $\rho_\eps^*(\sigma)$ stated in \Cref{thm:max-min-ball} are \Cref{prop:VN_semigroup_property} and \Cref{prop:VN_opt_saturates_triangle}, which are proved in \Cref{sec:proofs_VN}.

\subsection{Constructing the maximal state in the majorization order \eqref{eq:min-max-maj-cond}\label{sec:const_max_state_maj_order}}
	
As mentioned earlier, the minimum principle tells us that the entropy is minimized on a vertex of $\Be(q)$. In the example of \Cref{fig:Mw} in $d=3$, with $q=  ( 0.21, 0.24,0.55)$ and $\epsilon = 0.1$, we considered the set $M_w$ of points of $\Delta$ majorized by $w\in \Be(q)$. In the same example in \Cref{fig:min-max-example}, the minimum of the entropy over $\Be(q)$ was found to be $p_* =( 0.21 - \epsilon, 0.24,0.55+ \epsilon )$. In \Cref{fig:simplex_Mqmin_Mp}, we see that in fact $\Be(q) \subseteq M_{p_*}$. That is, $p \prec p_*$ for any $p\in \Be(q)$. In this section, we provide a general construction of $p_*$, and show that this property holds. As in the example, the construction will proceed by forming $p_*$ by decreasing the smallest entries of $q$ and increasing the largest entry. Intuitively, one ``spreads out'' the entries of $q$ to form $p_*$ so that $M_{p_*}$ covers the most area, in order to cover $\Be(q)$.

	\begin{figure}[ht]
	\centering
	\includegraphics{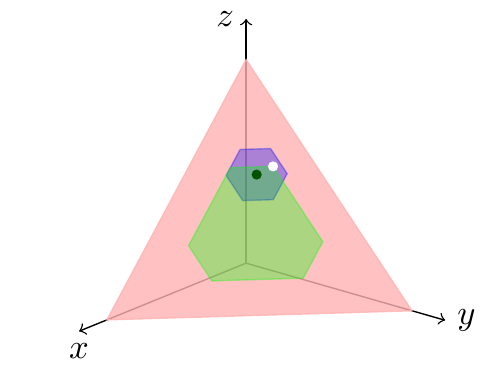}\qquad \includegraphics{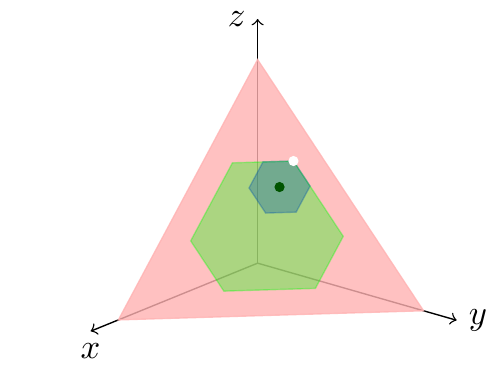}
	\caption{Left: The set $M_w$ for the point $w\in \Be(q)$ shown in white. Right: The set $M_{p_*}$ for $p_*$ the minimizer of $H$ over $\Be(q)$.\label{fig:simplex_Mqmin_Mp}}
	\end{figure}

	Let $\eps\in (0,1]$. We construct a probability vector $p_{*,\eps}(q)$ which we show has $M_{p_{*,\eps}(q)} \supseteq \Be(q)$ by using the definition of majorization given in \eqref{def:majorize}.
	\begin{definition}[$p_{*,\eps}(q)$]
	\label{def:pstarepsq}
	If $q_d > \epsilon$, let $p_{*,\eps}(q) = (q_1+\epsilon,q_2,\dotsc,q_d- \epsilon)$. Otherwise, let $\ell \in \{1,\dotsc,d-1\}$ be the largest index such that the sum of the $\ell$ smallest entries $Q_\ell := \sum_{j=d-\ell+1}^d q_j$ has $Q_\ell \leq \eps$. If $\ell=d-1$, set $p_{*,\eps}(q) = (1,0,\dotsc,0)$.
	 Otherwise, choose $p_{*,\eps}(q) = (p_{j*})_{j=1}^d$ for
	\begin{equation} \label{eq:def_pjstar}
	p_{j*} := \begin{cases}
	q_1 + \epsilon & j=1 \\
	q_j & 2\leq j \leq d-\ell-1\\
	q_{d-\ell+1} - (\epsilon - Q_\ell) & j=d-\ell\\
	0 & j\geq d-\ell+1.
	 \end{cases}
	\end{equation}
	\end{definition}
	We write $p_*$ for $p_{*,\eps}(q)$ in the following.
	Note that the condition $q_d > \epsilon$ is equivalent to every vector $p\in \Be(q)$ having strictly positive entries. This is the case of \Cref{fig:min-max-example}, and therefore $\pmineps(q)$ reduces to $(q_1+\epsilon,q_2,\dotsc,q_d- \epsilon)$.

	Using \Cref{def:pstarepsq}, we verify that $p_*\in \Be(q)$, as follows. First, if $q_d > \epsilon$, then
	\[
	\frac{1}{2}\|p_* - q\| = \frac{1}{2}(|q_1+\epsilon - q_1| + |q_d - \epsilon - q_d|) = \epsilon.
	\]
	 If $\ell = d-1$, then $p_* = (1,0,\dotsc,0)$ and
	\[
	 \frac{1}{2}\|p_* - q\| = \sum_{j=1}^d (q_j-p_{j*})_+ = \sum_{j=2}^{d} q_j = Q_\ell \leq \eps
	 \]yielding $p_*\in \Be(q)$. Otherwise, $\left(p_{j*}\right)_{j=1}^d$ is defined via \eqref{eq:def_pjstar}.
	For $j\neq d-\ell$, that $p_{j*} \geq 0$ is immediate, and by maximality of $\ell$, we have $p_{(d-\ell)*} = Q_{\ell+1} - \eps > 0$. Additionally, $\sum_{j=1}^d p_{j*} =   \sum_{j=1}^{d-\ell-1} q_j + Q_\ell -\eps + \eps = \sum_{j=1}^d q_j = 1$. Furthermore,
	\begin{align*}	
	\sum_{j=1}^d |p_{j*} - q_j| &= |q_1+\eps-q_1| +\sum_{j=2}^{d-\ell-1} |q_j - q_j| + |q_{d-\ell} - (\eps - Q_\ell) - q_{d-\ell}|  + \sum_{j=d-\ell+1}^d |q_j- 0| \\
	&= Q_\ell + (\eps - Q_\ell) + \eps = 2 \eps,
	\end{align*}
	so $p_* \in \Be(q)$.

	The following lemma shows that $p_*$ is indeed the maximal distribution in the majorization order \eqref{class-maj}.
	\begin{lemma}  \label{lem:pstar_succ_p}
	We have that $p_* \succ p$ for any $p\in \Be(q)$.
	\end{lemma}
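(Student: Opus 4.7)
My plan is to verify the partial-sum characterisation of majorisation directly. Since majorisation depends only on the multiset of entries, I may assume without loss of generality that the entries of an arbitrary $p\in \Be(q)$ are already ordered $p_1\geq \dotsm \geq p_d$. Writing $p_{*1},\dotsc,p_{*d}$ for the entries of $p_{*,\eps}(q)$ as defined in \Cref{def:pstarepsq}, the task reduces to showing (i) that $p_*$ is itself sorted in non-increasing order, and (ii) that
\begin{equation}\label{eq:PS-goal}
\sum_{j=1}^k p_{*j} \;\geq\; \sum_{j=1}^k p_j \qquad \forall\,k\in\{1,\dotsc,d-1\},
\end{equation}
since $\sum_j p_{*j}=\sum_j p_j = 1$ is automatic. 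The fundamental input that powers \eqref{eq:PS-goal} is the classical shadow of \Cref{lem:eig_upperbounded_trace-dist}: rearrangement reduces $\ell^1$-distance, so $\tfrac{1}{2}\|p^\downarrow - q\|_1 \leq \tfrac{1}{2}\|p-q\|_1 \leq \eps$, which in turn yields
\[
\sum_{j=1}^k (p_j - q_j) \;\leq\; \sum_{j:\, p_j>q_j}(p_j-q_j) \;=\; \tfrac{1}{2}\|p^\downarrow - q\|_1 \;\leq\; \eps
\]
for every $k$.

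The ordering of $p_*$ in (i) is a direct check in each of the three sub-cases of \Cref{def:pstarepsq}, where the only non-trivial comparisons are $q_{d-\ell-1}\geq q_{d-\ell}-(\eps - Q_\ell)$ (holding since $\eps\geq Q_\ell$) and the non-negativity $q_{d-\ell}-(\eps - Q_\ell)\geq 0$, which follows from the maximality of $\ell$ in the form $Q_{\ell+1} = q_{d-\ell} + Q_\ell > \eps$. For (ii) I carry out the same case split. In the case $q_d>\eps$, one has $\sum_{j=1}^k p_{*j} = \sum_{j=1}^k q_j + \eps$ for every $k\leq d-1$, so the classical bound finishes \eqref{eq:PS-goal} at once. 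In the case $q_d\leq \eps$ with $\ell = d-1$, $p_* = (1,0,\dotsc,0)$ trivially majorises every probability vector. In the remaining case ($q_d \leq \eps$, $\ell \leq d-2$) I split into two ranges of $k$: for $1\leq k\leq d-\ell-1$ the identity $\sum_{j=1}^k p_{*j} = \sum_{j=1}^k q_j + \eps$ still holds verbatim and the same bound applies; for $d-\ell \leq k\leq d-1$ a direct computation using $Q_\ell + \sum_{j=1}^{d-\ell}q_j = 1$ gives
\[
\sum_{j=1}^{d-\ell} p_{*j} \;=\; \sum_{j=1}^{d-\ell-1} q_j + \bigl(q_{d-\ell} - \eps + Q_\ell\bigr) + \eps \;=\; 1,
\]
and all subsequent partial sums remain equal to $1$ because $p_{*j}=0$ for $j > d-\ell$, so they dominate $\sum_{j=1}^k p_j \leq 1$ trivially.

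The delicate point is the transition index $k=d-\ell$ in the third case: the inequality there is an equality on the $p_*$-side, and it holds \emph{precisely} because of the defining property $Q_\ell \leq \eps < Q_{\ell+1}$ of $\ell$ together with the conservation of total mass. Any other choice of cutoff would either produce a negative entry of $p_*$ or violate the partial-sum inequality at that very index. The remaining ingredient that deserves care is the rearrangement step $\tfrac{1}{2}\|p^\downarrow - q\|_1 \leq \tfrac{1}{2}\|p - q\|_1$, but this is a standard fact for sorted vectors and, as noted, is the classical analogue of \Cref{lem:eig_upperbounded_trace-dist} already used in the paper.
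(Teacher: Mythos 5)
Your proof is correct and follows essentially the same route as the paper's: reduce to the partial-sum criterion, use the bound $\sum_{j=1}^k(p_j-q_j)\leq\eps$ coming from the $\eps$-ball condition together with the fact that $\sum_{j=1}^k p_{*j}=\sum_{j=1}^k q_j+\eps$ for $k\leq d-\ell-1$ and $\sum_{j=1}^k p_{*j}=1$ for $k\geq d-\ell$. The only substantive difference is that you make explicit two points the paper leaves implicit — namely that one may WLOG sort $p$ (via the rearrangement bound, since $q$ is already sorted) and that $p_*$ is itself non-increasing — which are worthwhile clarifications but do not change the argument.
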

	\begin{proof}	
	If $p_* = (1,0,\dotsc,0)$, then the result is immediate. If $p_* = (q_1+\epsilon,\dotsc,q_d - \epsilon)$, then consider $\ell=0$ and $Q_0 = 0$ in the following. Now, $p_* = (p_{j*})$ for $p_{j*}$ defined via \eqref{eq:def_pjstar}. Our task is to show that for any $p=\left(p_{j}\right)_{j=1}^d \in \Delta $,
	\begin{equation} \label{eq:proof_def_maj}
	\sum_{j=1}^k p_j \leq \sum_{j=1}^k p_{j*}
	\end{equation}
	holds for each $k\in\{1,\dotsc,d-1\}$. Equality in (\ref{eq:proof_def_maj}) obviously holds for $k=d$ since $p,p_* \in \Delta$.

	Since $\sum_{j=1}^d (p_j - q_j)_+ = \sum_{j=1}^d (p_j - q_j)_- \leq \eps$, in particular $\sum_{j=1}^k (p_j - q_j) \leq \eps$
	and therefore
	\[
	\sum_{j=1}^k p_j \leq \eps +\sum_{j=1}^k q_j.
	\]
	For $k \leq d-\ell-1$, we have $\sum_{j=1}^k p_{j*} = \eps +\sum_{j=1}^k q_j$, yielding \eqref{eq:proof_def_maj} in this case.
	On the other hand, for $k \geq d-\ell$ we have $\sum_{j=1}^k p_{j*} = \sum_{j=1}^d p_{j*} = 1$. Since $\sum_{j=1}^k p_j \leq 1$, this completes the proof.
	\end{proof}

	Given $\epsilon\in (0,1]$ and a quantum state $\sigma\in \cD$, with eigen-decomposition $\sigma =  \sum_{i=1}^d q_i\ketbra{i}{i}$ in the sorted eigenbasis for which $\sigma = \Eig^\downarrow(\sigma)$, we define
	\begin{equation} \label{eq:def_rho_eps_lowerstar}
	\rho_{\eps,*}(\sigma) = \sum_{i=1}^d p_{i*}(q) \ketbra{i}{i}
	\end{equation}
	where $p_{*,\eps}(q)$ is defined via \Cref{def:pstarepsq} and $q = \vec \lambda(\sigma)$. \Cref{lem:pstar_succ_p} therefore proves $\rho_{\eps,*}(\sigma) \succ \omega$ for any $\omega\in \Be(\sigma)$, proving the second majorization relation of \Cref{thm:max-min-ball}. The state $\rho_{\eps,*}(\sigma)$ is unique up to unitary equivalence as follows. If another state $\tilde \rho\in \Be(\sigma)$ had $\tilde \rho \succ \omega$ for all $\omega \in \Be(\sigma)$, then in particular, $\tilde \rho \succ \rho_{\eps,*}(\sigma) \succ \tilde \rho$, which implies that $\tilde \rho$ and $\rho_{\eps,*}(\sigma)$ are unitarily equivalent.
	
\subsection{Proofs for Section \ref{sec:const_min_state_maj_order}\label{sec:proofs_VN}}

	\paragraph{Proof of Lemma \ref{lem:char_of_alpha1}}
	Here we prove the results pertaining to the pair $(\alpha_2,m)$. The results for the pair $(\alpha_1,n)$ can be obtained analogously. Note that if any pair $(\alpha_2,m) \in [0,1]\times \{1,\dotsc,d-1\}$ satisfies \eqref{eq:def_alpha1_n1} then we have
		\begin{equation*} 
		\eps = \sum_{i=1}^{m}|\alpha_2 - \mu_i| =  \sum_{i=1}^{m}( \mu_i-\alpha_2 ) = \sum_{i=1}^{m}\mu_i-m \alpha_2 ,
		\end{equation*}
		implying $\alpha_2 = \frac{1}{m}\left( \sum_{i=1}^{m}\mu_i - \eps \right) = \alpha_2(m)$.
		Conversely, if for some $m'\in \{1,\dotsc,d-1\}$, the corresponding value $\alpha_2(m')$ satisfies $\mu_{m'+1} \leq \alpha_2(m') < \mu_{m'}$, then
		\[
		 \eps = \sum_{i=1}^{m'}\mu_i- m' \alpha_2(m') = \sum_{i=1}^{m'}( \mu_i-\alpha_2(m'))=\sum_{i=1}^{m'}|\alpha_2(m')- \mu_i| 
		\]
		and in particular
		\[
		\sum_{i=1}^{m'}|\alpha_2(m')- \mu_i| = \eps.
		\]
		Hence, the existence and uniqueness of $(\alpha_2,m)$ satisfying \eqref{eq:def_alpha1_n1} is equivalent to the existence and uniqueness of $m\in \{1,\dotsc,d-1\}$ such that $\alpha_2(m)$ satisfies $\mu_{m+1} \leq \alpha_2(m) < \mu_{m}$.

		Next, we show that
		\[
		m=\min \{m'\in \{1,\dotsc,r\}: \alpha_2(m')\geq \mu_{m'+1} \}
		\]
		by checking that the minimum exists and uniquely solves $\mu_{m+1} \leq \alpha_2(m) < \mu_{m}$. The proof is then completed by showing that we must have $\alpha_2(m) > \frac{1}{d}$.
	The steps of the construction are elucidated below.

	~\begin{enumerate}[label={Step }\arabic*.]
		
		\item $ \{ m'\in \{1,\dotsc,r\}:\alpha_2(m')\geq \mu_{m'+1}\}\neq \emptyset$.

	Let us assume the contrary. Then, in particular, that $\alpha_2(r) < \mu_{r+1}$. By substituting $\alpha_2(r) = \frac{1}{r}\sum_{i=1}^r \mu_i - \frac{1}{r}\eps$ in this inequality, we have 
	\begin{equation} \label{eq:lem_proofalpha1n1_contradiction1}
	\sum_{i=1}^r \mu_i< r \mu_{r+1} + \eps 
	\end{equation}
	Using $\tr(\sigma-\tau) = 0$, we have $\frac{1}{2}\|\sigma-\tau\|_1 = \tr( \tau-\sigma)_-$. Using the definition of $r$, \cref{def-r}, this can
	be written as
	\begin{equation}	
	\frac{1}{2}\|\sigma-\tau\|_1 = \sum_{i=1}^r (\mu_i-\frac{1}{d}) = \sum_{i=1}^r \mu_i-\frac{r}{d } \label{eq:tracedist_sigma-pi}.
	\end{equation}
	Employing \eqref{eq:lem_proofalpha1n1_contradiction1}, and using the definition of $r$, we find that
	\begin{align}
	\frac{1}{2}\|\sigma-\tau\|_1 <  r \mu_{r+1} + \eps -\frac{r}{d}  =  \eps + r ( \mu_{r+1} - \frac 1d) <  \eps.
	\end{align}
	That is, $\tau\in B_\eps(\sigma)$, which contradicts our assumption.

	\item The value $x := \min \{ m'\in [r]: \alpha_2(m') \geq \mu_{m'+1}\}$ solves $\mu_{x+1} \leq \alpha_2(x) <  \mu_{x}$.

	If $x=1$, then using \cref{eq-alpha} we see that $\mu_1 > \mu_1 - \eps = \alpha_2(1)$, and hence $\mu_{x+1} \leq \alpha_2(x) <  \mu_{x}$.
	Otherwise, by minimality of $x$, we have $\alpha_2(x-1) < \mu_x$.
	We first establish  that for $m'\in \{2,3,\dotsc,d\}$,
	\begin{equation} 
	\label{eq:alpha_id}
	\alpha_2(m'-1) < \mu_n \iff \alpha_2(m')  < \mu_n
	\end{equation}
	and the result follows by taking $m'=x$. To prove \eqref{eq:alpha_id}, we write
	\begin{align*}	
	\alpha_2(m'-1)< \mu_n & \iff\frac{1}{m'-1}[\sum_{j=1}^{m'-1}\mu_j - \eps]< \mu_n \\
	&\iff \sum_{j=1}^{m'-1}\mu_j - \eps< (m'-1)\mu_n = m' \mu_n - \mu_n \\
	&\iff \sum_{j=1}^{m'}\mu_j - \eps<  m' \mu_n\\
	&\iff \frac{1}{m'}[\sum_{j=1}^{m'}\mu_j - \eps]<  \mu_n\\
	&\iff \alpha_2(m') < \mu_n.
	\end{align*}

	\item Uniqueness of $m$ satisfying $\mu_{m+1} \leq \alpha_2(m)< \mu_{m}$.

	By substituting the inequality $\mu_{m'-1} \geq \mu_n$ into \eqref{eq:alpha_id}, we find for $m'\in \{ 2,3,\dotsc,d\}$,
	\begin{equation} 
	\label{eq:alpha_id2}
	 \mu_n > \alpha_2(m')  \implies  \mu_{m'-1} > \alpha_2(m'-1).
	\end{equation}
	Now, assume that there exists a $y$ satisfying $y>x>0$ for which $\mu_y > \alpha_2(y) \geq \mu_{y+1}$. By applying the implication \eqref{eq:alpha_id2} a total of $(y-x-1)$ times, we see that
	\[
	\mu_y > \alpha_2(y) \implies \mu_{y-1} > \alpha_2(y-1) \implies \dotsm \implies \mu_{x+1} > \alpha_2(x+1)
	\]
	which by \eqref{eq:alpha_id} is equivalent to $\alpha_2(x) < \mu_{x+1}$.
	This contradicts the assumption that $\alpha_2(x) \geq \mu_{x+1}$. Hence, such a $y$ cannot exist.

	\item $\alpha_2(m) > \frac{1}{d}$.

	We prove this by showing that if $\alpha_2(m) \leq \frac{1}{d}$ then we obtain a contradiction to the assumption $\frac{1}{2}\|\sigma-\tau\|_1 >  \eps$
	of \Cref{lem:char_of_alpha1}.

	Assume  $\alpha_2(m) \leq \frac{1}{d}$. Then,
	\begin{align*}
	\alpha_2(m) = \frac{1}{m}\sum_{j=1}^{m} \mu_j - \frac{\eps}{m}	 \leq \frac{1}{d}  \iff
	\sum_{j=1}^{m} \left(\mu_j- \frac{1}{d} \right) \leq \eps.
	\end{align*}
	Now, since $m\leq r$ by \eqref{eq:n1_as_min}, $\mu_j -\frac{1}{d}\geq 0$ for each $j\in [m]$. 

	If $m=r$, then 
	\begin{equation}
	\sum_{j=1}^{r}  \left(\mu_j- \frac{1}{d} \right)\leq \eps. \label{eq:pos_part_leq_eps}
	\end{equation}
	On the other hand, if $m <r$, then $m+1 \leq r$. Then, using the assumption $\alpha_2(m)\leq \frac{1}{d}$,
	\[
	\frac{1}{d} \geq \alpha(m) \geq \mu_{m+1} \geq \dotsm \geq \mu_r \geq \frac{1}{d},
	\]
	so $\mu_{m+1} = \mu_{m+2} =\dotsm = \mu_r = \frac{1}{d}$. Then,
	\eqref{eq:pos_part_leq_eps}
	holds in this case as well. Then by \eqref{eq:tracedist_sigma-pi},
	\[
	\frac{1}{2}\|\sigma - \tau\|_1 =  \sum_{j=1}^r \left(\mu_j- \frac{1}{d} \right)\leq \eps,
	\]
	which contradicts the assumption that $\tau \not\in B_\eps(\sigma)$.\hfill\proofSymbol
	\end{enumerate}

	\paragraph{Proof of Proposition \ref{prop:VN_semigroup_property}} We first treat the cases in which $\rho_{\eps_1+ \eps_2}(\sigma) = \tau$; that is, when $\frac{1}{2}\|\sigma-\tau\|_1 \leq \eps_1+\eps_2$.
	If $\frac{1}{2}\|\sigma-\tau\|_1 \leq \eps_2$, then  $ \rho_{\eps_2}(\sigma) = \tau$ as well; since $\rho_{\eps_1}(\tau)=\tau$, we have $\rho_{\eps_1 + \eps_2}(\sigma) = \rho_{\eps_1}(\rho_{\eps_2}(\sigma))$. Next, consider the case  $\eps_2 < \frac{1}{2}\|\sigma-\tau\|_1 \leq \eps_1 +\eps_2$.  To show $ \rho_{\eps_1} ( \rho_{\eps_2}(\sigma)) = \tau$, we need $\frac{1}{2}\|\rho_{\eps_2}(\sigma) - \tau\|_1 \leq \eps_1$.
	By Proposition \ref{prop:VN_opt_saturates_triangle},
	\[
	 \frac{1}{2}\|\tau - \rho_{\eps_2}(\sigma)\|_1  =   \frac{1}{2}\|\sigma-\tau\|_1 - \frac{1}{2}\| \rho_{\eps_2}(\sigma) - \sigma\|_1  \leq \eps_1 + \eps_2 -  \frac{1}{2}\| \rho_{\eps_2}(\sigma) - \sigma\|_1
	\]
	but since $ \frac{1}{2}\| \rho_{\eps_2}(\sigma) - \sigma\|_1 = \eps_2$, using that $\rho_{\eps_2}(\sigma)\neq \tau$, we have $ \frac{1}{2}\|\tau - \rho_{\eps_2}(\sigma)\|_1  \leq \eps_1$ as required. This completes the proof of the cases for which $\rho_{\eps_1+\eps_2}(\sigma)=\tau$.

	Now, assume that $\frac{1}{2}\|\sigma-\tau\|_1 > \eps_1 + \eps_2$. Let $\sigma = \sum_{i=1}^d \mu_i \ketbra{i}{i}$ and write
	\[
	\rho_{\eps_2}(\sigma) = \sum_{i\in I_L} \alpha_1 \ketbra{i}{i} + \sum_{i\in I_M} \mu_i \ketbra{i}{i} + \sum_{i\in I_H}\alpha_2 \ketbra{i}{i}
	\]
	for $I_L = \{d-n+1,\dotsc,d\}$, $I_M = \{m+1,\dotsc,d-n\}$, and $I_H = \{1,\dotsc,m\}$, and  $(\alpha_1,n)$ and $(\alpha_2,m)$ are determined by $\sigma$ and $\eps_2$ via \Cref{lem:char_of_alpha1}.

	Now, consider
	\begin{align}\label{rho-1-2}
	\rho_{\eps_1}( \rho_{\eps_2}(\sigma) ) =\sum_{i\in I_{L'}} \beta_1 \ketbra{i}{i} + \sum_{i\in I_{M'}} \mu_i \ketbra{i}{i} + \sum_{i\in I_{H'}} \beta_2 \ketbra{i}{i} 
	\end{align}
	with $I_{L'} = \{d-n'+1,\dotsc,d\}$, $I_{M'} = \{m'+1,\dotsc,d-n'\}$, and $I_{H'} = \{1,\dotsc,m'\}$, and where $(\beta_1,n')$ and $(\beta_2,m)$ are determined by $ \rho_{\eps_2}(\sigma)$ and $\eps_1$ via \Cref{lem:char_of_alpha1}.

	We aim to compare the expression (\ref{rho-1-2}) of $\rho_{\eps_1}( \rho_{\eps_2}(\sigma) ) $ to the following:
	\[
	\rho_{\eps_1 + \eps_2} (\sigma) = \sum_{i\in I_{L''}} \gamma_1 \ketbra{i}{i} + \sum_{i\in I_{M''}}\mu_i \ketbra{i}{i} + \sum_{i\in I_{H''}} \gamma_2 \ketbra{i}{i}
	\]
	with $I_{L''} = \{d-n''+1,\dotsc,d\}$, $I_{M''} = \{m''+1,\dotsc,d-n''\}$, and $I_{H''} = \{1,\dotsc,m''\}$, and where $(\gamma_1,n'')$ and $(\gamma_2,m'')$ are determined by $(\sigma,\eps_1+\eps_2)$ via \Cref{lem:char_of_alpha1}. That is, we wish to show $(\beta_1,n') = (\gamma_1,n'')$, and $(\beta_2,m') = (\gamma_2,m'')$. We only consider the first equality here; the second is very similar.

	 Let $(\nu)_{i=1}^d$ be the eigenvalues of  $\rho_{\eps_2}(\sigma)$. That is, $\nu_i = \alpha_1$ for $i\in I_L$, $\nu_i = \mu_i$ for $i \in I_M$, and $\nu_i = \alpha_2$ for $i\in I_H$. Equation~\eqref{eq:n1_as_alpha1(n1)_inequality} for $\rho_{\eps_1}(\rho_{\eps_2}(\sigma))$ in Lemma~\ref{lem:char_of_alpha1} yields
	\begin{equation}
	\nu_{m'} > \beta_2 = \frac{1}{m'}\left(\sum_{i\in I_{H'}} \nu_i - \eps_1\right) \geq  \nu_{m'+1}. \label{eq:defd_beta1_(proof)}
	\end{equation}
	Thus,
	\begin{itemize}
		\item $m' \geq m$. Otherwise, $\beta_2 = \alpha_2 - \frac{\eps_1}{m'} < \alpha_2 = \nu_{m'+1}$.
		\item $m' \leq d- n$. Otherwise, $\beta_2 < \nu_{m'} = \alpha_1 < \frac{1}{d}$, contradicting that $\beta_2 > \frac{1}{d}$ by Lemma~\ref{lem:char_of_alpha1}.
	\end{itemize}
	Hence,
	\begin{align*}	
	\beta_2 &= \frac{1}{m'}\left(\sum_{i\in I_H} \nu_i + \sum_{m+1}^{m'} \nu_i - \eps_1 \right) =\frac{1}{m'}\left(m\alpha_2 + \sum_{m+1}^{m'} \mu_i - \eps_1 \right)\\
	&=\frac{1}{m'}\left(\sum_{i\in I_H}\mu_i - \eps_2 + \sum_{m+1}^{m'} \mu_i - \eps_1 \right) = \frac{1}{m'} \left( \sum_{i\in I_{H'}} \mu_i - \eps_1 -\eps_2 \right).
	\end{align*}
	That is, $\beta_2 = \gamma_2(m')$. It remains to show $m' = m''$.

	\begin{itemize}
	\item If $m'=m$, then $\nu_{m'} = \alpha_2 < \mu_{m'}$ by equation \eqref{eq:def_alpha2_m} for $\rho_{\eps_2}(\sigma)$.
	\item If $m' > m$, then $ \nu_{m'} =\mu_{m'}$. 
	\end{itemize}

In either case, $\nu_{m'} \leq \mu_{m'}$. Hence, \eqref{eq:defd_beta1_(proof)} becomes
	\[
	\mu_{m} \geq \nu_{m'}  > \beta_2 \geq \nu_{m+1}.
	\]
	Either $\nu_{m+1} = \mu_{m+1}$, or $\nu_{m+1} = \alpha_1 > \mu_{m+1}$; in either case, $\beta_2 \geq \mu_{m+1}$. Thus, writing $\beta_2 = \gamma_2(m')$, we have
	\[
	\mu_{m'} > \gamma_2(m') \geq \mu_{m'+1}.
	\]
	Equation~\eqref{eq:n1_as_alpha1(n1)_inequality} in \Cref{lem:char_of_alpha1} defines $m''$ as the unique solution of 
	\[
	 \mu_{m''} >  \gamma_2(m'') \geq \mu_{m''+1} 
	\]
	and therefore $m'' = m'$. \hfill\proofSymbol

\paragraph{Proof of Proposition \ref{prop:VN_opt_saturates_triangle}}
	Note \[
	\frac{1}{2}\| \rho_\eps^*(\sigma) - \sigma\|_1 = \min\left(\eps, \frac{1}{2}\|\sigma-\tau\|_1\right),
	\]
	 from the construction of $\rho_\eps^*(\sigma)$.
	Now, if  $\rho_\eps^*(\sigma)=\tau$, we have the result immediately. Otherwise, 
	\begin{align*}  
	\|\tau -\rho_{ \eps}(\sigma)\|_1 
	&=  n\left( \frac{1}{d} - \alpha_1 \right) +   m\left( \alpha_2-\frac{1}{d}  \right) + \sum_{j\in I_M} | \tfrac{1}{d}-\mu_j|\\
	&= \frac{n}{d} - \sum_{j\in I_L} \mu_j -  \eps +\sum_{j\in I_H} \mu_j - \frac{m}{d} -  \eps + \sum_{j\in I_M} | \tfrac{1}{d}-\mu_j|\\
	&= \sum_{j=1}^{d} | \tfrac{1}{d}-\mu_j| - 2 \eps\\
	&= \| \tau - \sigma\|_1 - 2  \eps = \| \tau - \sigma\|_1 - \|\rho_\eps^*(\sigma)- \sigma\|_1 . \tag*{\proofSymbol}
	\end{align*}

\section{Proof of the local continuity bound (\Cref{prop:local_cont_bound})}

Our local continuity bound for the von Neumann entropy, given by the inequality (\ref{eq:local_cont_bound}) of \Cref{prop:local_cont_bound}, is
an immediate consequence of \Cref{cor:main_thm_for_CE}, which in turn follows directly from \Cref{thm:main_convex_result}. This can be seen by noting that 
\[
	 S(\rho_{*,\eps}) \leq S(\omega)  \leq S(\rho_\eps^*(\sigma)) ,
	\]
	by Schur concavity of the von Neumann entropy, and the minimality of $\rho_\eps^*(\sigma)$ and maximality of $\rho_{*,\eps}$ in the majorization order (\ref{eq:min-max-maj-cond}). 

The inequality (\ref{eq:AF_from_local}) follows by applying the (AF)-bound (\Cref{lem:AF_equality}) to the pairs of states $\left(\rho_\eps^*(\sigma), \sigma\right)$ and $\left(\sigma, \rho_{*,\eps}\right)$ as follows.
The (AF)-bound gives
	\begin{equation}
	 S(\rho_\eps^*(\sigma)) - S(\sigma) \leq \AF(\epsilon), \qquad  S(\sigma) - S(\rho_{*,\eps}(\sigma)) \leq \AF(\epsilon) \label{eq:AF_bounds_each}
	\end{equation}
	for 
	\begin{equation}
		 \AF(\epsilon) := \begin{cases}
 \epsilon \log (d-1) + h(\epsilon) & \text{if } \epsilon < 1 - \tfrac{1}{d} \\
	 \log d & \text{if } \epsilon \geq 1 - \tfrac{1}{d}.
 \end{cases}  \label{def:AFeps}
	\end{equation}
What remains to be established is the necessary and sufficient condition for equality in (\ref{eq:AF_from_local}). A sufficient condition for equality in the (AF)-bound was obtained by Audenaert, and is stated 
in \Cref{lem:AF_equality}.
Here we prove that this condition is also necessary. In order to do this, it is helpful to recall the proof of the (AF)-bound in detail.

\paragraph{Proof of \Cref{lem:AF_equality}}
Without loss of generality, assume $S(\rho) \geq S(\sigma)$.

Let us first consider $\epsilon\geq 1 - \frac{1}{d}$. As $0\leq S(\omega) \leq \log d$ for any state $\omega\in \cD$, the left-hand side of \eqref{eq:Audenaert-Fannes_bound} is bounded by $S(\rho) - S(\sigma) \leq \log d - 0= \log d$, which is the right-hand side of the inequality \eqref{eq:Audenaert-Fannes_bound} in this range of values of $\eps$.  Moreover, equality is achieved if and only if $S(\rho) = \log d$ and $S(\sigma)=0$. As only the completely mixed state, $\tau = \frac{1}{d}\one$, achieves $S(\tau) = \log d$, and only pure states have zero entropy, we have established the equality conditions for $\epsilon \geq 1- \frac{1}{d}$. Note that $\tau \in B_\eps(\sigma)$ for any state $\sigma \in \D$ and $\epsilon\geq 1 - \frac{1}{d}$.
\smallskip

Next, let $\epsilon < 1 - \frac{1}{d}$. Note that if $\rho = \diag(1- \epsilon, \frac{\epsilon}{d-1},\dotsc, \frac{\epsilon}{d-1})$, then $S(\rho) = h(\epsilon) + \epsilon \log (d-1)$. If $\sigma$ is a pure state, then $S(\sigma)=0$. Hence, these choices of $\rho$ and $\sigma$ are sufficient to attain equality in \eqref{eq:Audenaert-Fannes_bound}. It remains to show that in this range of values of $\eps$, equality is achieved in the (AF)-bound only for this pair of states. To show this, we first reduce the problem to a classical one. This is because, by comparing the $1$-norm on $\R^d$ to the trace norm between mutually diagonal matrices, we have
\begin{equation}
\|\vec \lambda(\rho) - \vec \lambda(\sigma)\|_1 = \| \Eig^\downarrow(\rho) - \Eig^\downarrow(\sigma)\|_1 \leq \|\rho-\sigma\|_1 \leq \epsilon \label{eq:AF_lambda-rho_lambda-sigma}
\end{equation}
where the first inequality follows from  \Cref{lem:eig_upperbounded_trace-dist}.
Since $H(\vec \lambda (\rho)) = S(\Eig^\downarrow(\rho)) = S(\rho)$, and $H(\vec \lambda(\sigma)) = S(\Eig^\downarrow(\sigma)) = S(\sigma)$, we can reduce to the classical case of probability distributions on $p,q\in\Delta$. This argument is due to \cite{Audenaert07}. Thus the (AF)-bound is equivalently stated in terms of probability distributions as in the following lemma.

\begin{lemma} \label{lem:AF_classical_H}
Let $\epsilon\in (0,1]$ and $p,q\in \Delta$ such that $\frac{1}{2}\|p-q\|_1 \leq \epsilon$, then
\begin{equation}
|H(p) - H(q)| \leq \begin{cases}
 \epsilon \log (d-1) + h(\epsilon) & \text{if } \epsilon < 1 - \tfrac{1}{d} \\
	 \log d & \text{if } \epsilon \geq 1 - \tfrac{1}{d}.
 \end{cases}  \label{eq:AF_classical_H}
\end{equation}
Without loss of generality, assume $H(p) \geq H(q)$. Then equality in \eqref{eq:AF_classical_H} occurs if and only if for some $\pi \in S_d$, we have $\pi(q) = (1,0,\dotsc,0)$, and either
\begin{enumerate}
	\item  $\epsilon < 1 - \frac{1}{d}$,  and $\pi(p) = ( 1- \epsilon, \frac{\epsilon}{d-1},\dotsc, \frac{\epsilon}{d-1})$, or
	\item $\epsilon \geq 1- \frac{1}{d}$, and $p = ( \frac{1}{d},\dotsc, \frac{1}{d})$.
\end{enumerate}
\end{lemma}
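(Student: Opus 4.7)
The plan is to reduce to Fano's inequality (\Cref{lem:Fano}) through a maximal coupling of $p$ and $q$, which yields both the bound and the equality case in one shot. The regime $\epsilon \geq 1 - \tfrac{1}{d}$ is dispatched immediately: since $0 \leq H \leq \log d$ on $\Delta$, we have $|H(p) - H(q)| \leq \log d$, with equality (WLOG $H(p) \geq H(q)$) forcing $p = (\tfrac{1}{d},\dotsc,\tfrac{1}{d})$ and $q$ a permutation of $(1,0,\dotsc,0)$; these distributions automatically lie at $\ell_1$-distance $2(1 - \tfrac{1}{d}) \leq 2\epsilon$, so they are in $\Be(q)$.

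For $\epsilon < 1 - \tfrac{1}{d}$, I would work with a \emph{maximal coupling} $P_{XY}$ of $p$ and $q$, i.e., a joint distribution on $\{1,\dotsc,d\}^2$ with the specified marginals and $P(X = Y = i) = \min(p_i, q_i)$, so that $\epsilon' := \Pr(X \neq Y) = \tfrac{1}{2}\|p-q\|_1 \leq \epsilon$. Assuming WLOG $H(p) \geq H(q)$, chaining elementary entropy inequalities yields
\[
H(p) = H(X) \leq H(X,Y) = H(q) + H(X \mid Y) \leq H(q) + \epsilon' \log(d-1) + h(\epsilon'),
\]
where the first step uses $H(Y \mid X) \geq 0$ and the last is Fano's inequality. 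A short derivative calculation gives $f'(t) = \log\bigl((d-1)(1-t)/t\bigr)$ for $f(t) := t \log(d-1) + h(t)$, so $f$ is strictly increasing on $[0, 1 - \tfrac{1}{d}]$; hence $f(\epsilon') \leq f(\epsilon)$ and \eqref{eq:AF_classical_H} follows.

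The main work lies in the equality analysis. Equality throughout the chain above forces simultaneously (i) $H(Y \mid X) = 0$, so $Y = g(X)$ almost surely for some function $g$; (ii) the Fano equality condition $\Pr(X = i \mid Y = j) = \epsilon'/(d-1) > 0$ for every $i \neq j$ and every $j$ with $q_j > 0$; and (iii) $\epsilon' = \epsilon$, by strict monotonicity of $f$. The key step, and the main obstacle, is to deduce from (i) and (ii) that $q$ has support at a single index. If instead $q_{j_1}, q_{j_2} > 0$ for some $j_1 \neq j_2$, then (ii) gives $\Pr(X = j_2, Y = j_1) = q_{j_1} \epsilon / (d-1) > 0$, which in turn forces $p_{j_2} > 0$, and hence $\Pr(X = Y = j_2) = \min(p_{j_2}, q_{j_2}) > 0$; so conditionally on $X = j_2$, the variable $Y$ takes the two distinct values $j_1, j_2$ with positive probability, contradicting (i). Thus $q$ must be a permutation of $(1,0,\dotsc,0)$, and (ii) applied at its unique nonzero index pins down $p$ as the permuted vector $(1-\epsilon, \tfrac{\epsilon}{d-1}, \dotsc, \tfrac{\epsilon}{d-1})$. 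The converse direction is a direct verification.
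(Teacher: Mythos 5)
Your proposal is correct and follows the paper's framework closely: both use a maximal coupling of $(p,q)$, Fano's inequality, strict monotonicity of $t \mapsto t\log(d-1)+h(t)$ on $[0,1-\tfrac1d]$, and the equality conditions of Fano. The genuine difference is in the key step of the equality analysis, deducing that $q$ is a point mass. You argue by contradiction: assuming two indices $j_1 \neq j_2$ with $q_{j_1},q_{j_2}>0$, the Fano equality condition gives $\Pr(X=j_2,Y=j_1)>0$, the maximal-coupling property $\Pr(X=Y=i)=\min(p_i,q_i)$ gives $\Pr(X=Y=j_2)>0$, and together these contradict $H(Y\mid X)=0$ (which follows from equality in $H(X)\leq H(X,Y)$). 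The paper instead never invokes $H(Y\mid X)=0$: starting from a single index $k$ with $q(k)>0$, it uses the explicit coupling formula plus the Fano equality condition to get $p(k)=(1-\epsilon)q(k)$ and $p(i)=q(i)+\tfrac{\epsilon}{d-1}$ for $i\neq k$, then sums over $i\neq k$ and algebraically solves $q(k)=1$; the condition $H(Y\mid X)=0$ is checked only afterwards as a consistency remark. Your contradiction is arguably more conceptual (it makes clear why the support must collapse), while the paper's computation is more self-contained in that it extracts the full answer from the Fano condition alone. Both are valid, and one small note: your use of $\Pr(X{=}Y{=}i)=\min(p_i,q_i)$ is justified for \emph{any} maximal coupling, since $\Pr(X{=}Y{=}i)\leq\min(p_i,q_i)$ termwise and the totals agree, so it is not specific to the explicit construction.
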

Using this result, by setting $p:= \vec \lambda (\rho)$, and $q:= \vec \lambda(\sigma)$, by \eqref{eq:AF_lambda-rho_lambda-sigma} and \Cref{lem:AF_classical_H}, we obtain inequality~\eqref{eq:Audenaert-Fannes_bound}. Moreover, to attain equality in \eqref{eq:Audenaert-Fannes_bound}, we require equality in \eqref{eq:AF_classical_H}. This fixes $\rho = \diag (1 -\epsilon, \frac{\epsilon}{d-1},\dotsc, \frac{\epsilon}{d-1})$ in some basis, and $\sigma$ a pure state.  \hfill\proofSymbol

\medskip
\noindent It remains to prove the equality condition in \Cref{lem:AF_classical_H} for the range $\epsilon < 1 - \frac{1}{d}$.
\paragraph{Proof of \Cref{lem:AF_classical_H}} To establish the inequality \eqref{eq:AF_classical_H}, we recall the proof presented in \cite{Winter16} in detail\footnote{This proof is originally due to Csisz\'ar (see \cite[Theorem 3.8]{PetzQITbook}).}. This also helps us to investigate when equality occurs, and deduce the form of $p$ and $q$. 

Without loss of generality, assume $H(p) \geq H(q)$. A coupling of $(p,q)$ is a probability measure $\omega$ on $[d]\times [d]$, for $[d] = \{1,\dotsc,d\}$,  such that $\sum_{i\in [d]} \omega(i,j) = q(j)$ and $\sum_{j\in [d]} \omega(i,j) = p(i)$. For any coupling $\omega$, it is known that, if $(X,Y)$ are a pair of random variables with joint measure $\omega$, i.e.~$(X,Y)\sim \omega$, then $ \Pr[X\neq Y] \geq \frac{1}{2}\|p-q\|_1.$
Moreover, there exist optimal couplings $\omega^*$ which achieve equality: if $(X,Y)\sim \omega^*$, then $\Pr[X\neq Y] = \frac{1}{2}\|p-q\|_1$.
 In fact, we can choose 
 \begin{equation}
 \omega^*(i,j) := \begin{cases}
 \min(p(i),q(i)) & \text{if } i=j\\
 \frac{(p(i)-q(i))_+ \cdot (q(j) - p(j))_+}{\frac{1}{2}\|p-q\|_1} & \text{otherwise},
 \end{cases} \label{eq:def_omegastar}
 \end{equation}
 using the notation $(z)_+ = \max(z,0)$ for $z\in \RR$.

 \begin{lemma}
 $\omega^*$ is a maximal coupling of $(p,q)$.
 \end{lemma}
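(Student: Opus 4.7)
The plan is to verify two things: that $\omega^*$ is a valid coupling of $(p,q)$ (nonnegative, with the correct marginals), and that the induced pair $(X,Y) \sim \omega^*$ achieves $\Pr[X\neq Y] = \frac{1}{2}\|p-q\|_1$, which is the known lower bound for any coupling and hence witnesses maximality.

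Set $\epsilon := \frac{1}{2}\|p-q\|_1$. Nonnegativity is immediate from the formula. For the marginal $\sum_j \omega^*(i,j) = p(i)$, I would split into two cases according to the sign of $p(i) - q(i)$. Using the identity $\sum_j (q(j) - p(j))_+ = \epsilon$ (since the positive and negative parts of the zero-sum vector $p - q$ have the same total mass $\epsilon$), the off-diagonal sum telescopes: if $p(i) \geq q(i)$, then $(q(i) - p(i))_+ = 0$, so $\sum_{j\neq i}(q(j)-p(j))_+ = \epsilon$, and one computes $\min(p(i),q(i)) + (p(i)-q(i)) = p(i)$; if $p(i) < q(i)$, the off-diagonal contribution vanishes and only the diagonal term $\min(p(i),q(i)) = p(i)$ survives. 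The marginal in $i$ is symmetric, yielding $\sum_i \omega^*(i,j) = q(j)$.

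For the coupling inequality, I would compute directly
\begin{equation*}
\Pr[X \neq Y] \;=\; \sum_{i \neq j} \omega^*(i,j) \;=\; \frac{1}{\epsilon} \sum_i (p(i)-q(i))_+ \sum_{j\neq i}(q(j)-p(j))_+.
\end{equation*}
The key observation is that whenever $(p(i)-q(i))_+ > 0$, we have $(q(i)-p(i))_+ = 0$, so the inner sum equals the full sum $\sum_j (q(j)-p(j))_+ = \epsilon$. Therefore the expression collapses to $\sum_i (p(i)-q(i))_+ = \epsilon$, matching the lower bound recalled in the paragraph preceding \eqref{eq:def_omegastar} and thus establishing maximality.

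There is no serious obstacle; the only mild care needed is the degenerate case $\epsilon = 0$, where $p = q$ and the formula $\omega^*(i,j) = \min(p(i),q(i))\delta_{ij} = p(i)\delta_{ij}$ already gives the trivial diagonal coupling with $\Pr[X\neq Y] = 0 = \epsilon$, so the definition should be read with the convention that the off-diagonal term is zero when $\epsilon = 0$.
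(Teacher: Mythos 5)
Your proposal is correct and follows essentially the same approach as the paper: verify the marginals by the same case split on the sign of $p(i)-q(i)$, and show the disagreement probability equals $\frac{1}{2}\|p-q\|_1$. The only cosmetic difference is that you sum the off-diagonal entries directly (reusing the observation that $(p(i)-q(i))_+>0$ forces $(q(i)-p(i))_+=0$), whereas the paper computes the diagonal sum $\sum_i\min(p(i),q(i))=1-\frac{1}{2}\|p-q\|_1$ via the identity $\min(a,b)=\frac{1}{2}(a+b-|a-b|)$ and subtracts from one; your explicit handling of the degenerate case $\epsilon=0$ is a small but welcome addition.
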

 \begin{proof}	
 First, $\sum_{i} \omega^*(i,i) = \sum_i \min(p(i),q(i)) = \frac{1}{2}\sum_i [p(i) + q(i) - |p(i) - q(i)|] = 1 - \frac{1}{2}\|p-q\|_1$, and therefore $\Pr[X\neq Y] = 1 - \Pr[X=Y] = \frac{1}{2}\|p-q\|_1$.
Moreover, $\omega^*$ is a coupling of $p$ and $q$. We have
 \begin{equation*}
  \sum_j w^*(i,j) =  \min(p(i),q(i)) +\frac{(p(i)-q(i))_+ }{\frac{1}{2}\|p-q\|_1} \sum_{j\neq i}(q(j) - p(j))_+. \label{eq:coupling_proof}
  \end{equation*} If $p(i) \leq q(i)$, then $(p(i)-q(i))_+=0$, and hence $ \sum_j w^*(i,j) = p(i)$. Otherwise,
 \[
 \sum_j w^*(i,j)  = q(i) + \frac{p(i) - q(i)}{\frac{1}{2}\|p-q\|_1} \sum_{j} (q(j) - p(j))_+.
 \]
 However, $\sum_{j} (q(j) - p(j))_+ = \frac{1}{2}\|p-q\|_1$, yielding $ \sum_j w^*(i,j) = p(i)$. Checking  $ \sum_i w^*(i,j)  = q(j)$ is similar.
 \end{proof}

Let $(X,Y)\sim \omega^*$. As $H(X,Y) \geq H(X)$, we have
 \begin{equation} \label{ieq:X}
 H(X) - H(Y) \leq H(X,Y) - H(Y) = H(X|Y) \leq h(\epsilon) + \epsilon \log(d-1)
 \end{equation}
 where the second inequality is Fano's inequality, \Cref{lem:Fano}. Since $H(X) = H(p)$ and $H(Y) = H(q)$, this concludes the proof of the inequality \eqref{eq:AF_classical_H}.

 Now, assume we attain equality in \eqref{eq:AF_classical_H}. Note first that we must have $\frac{1}{2}\|p-q\|_1 = \epsilon$. Otherwise, if $\frac{1}{2}\|p-q\|_1 = \epsilon' < \epsilon < 1 - \frac{1}{d}$, then by \eqref{eq:AF_classical_H},
 \[
  H(p) - H(q) \leq \epsilon' \log(d-1) + h(\epsilon') < \epsilon \log(d-1) + h(\epsilon)
  \] 
  by the strict monotonicity of $\eps\mapsto \eps\log(d-1) + h(\eps)$  for $\eps\in [0, 1-\frac{1}{d})$, which can be confirmed by differentiation. As $H(X,Y) = H(X) + H(Y|X)$, to have equality in $H(X,Y) \geq H(X)$, we require $H(Y|X)=0$. The second inequality in \eqref{ieq:X} is Fano's inequality, and to obtain equality, we require
\begin{equation} \label{eq:Fano_X}
\frac{\omega^*(i,j)}{q(j)} = \begin{cases}
1- \epsilon & i=j \\
\frac{\epsilon}{d-1} & i \neq j
\end{cases}
\end{equation}
whenever $q(j)\neq 0$, by \Cref{lem:Fano}. Let $k$
be such that $q(k) > 0$. Then by \eqref{eq:def_omegastar} and \eqref{eq:Fano_X},
\[
\omega^*(k,k) = \min(p(k), q(k)) = (1- \epsilon)q(k).
\]
Since $q(k)>0$, we cannot have $q(k) = (1 -\epsilon) q(k)$, and therefore $p(k)  = (1 -\epsilon) q(k)$. Next, by \eqref{eq:Fano_X}, for $i\neq k$,
\[
   q(k)\frac{\epsilon}{d-1} = \omega^*(i,k) = \frac{((p(i) - q(i))_+ \cdot (q(k) - p(k))_+}{\epsilon}
\]
using \eqref{eq:def_omegastar} for the second equality. Since $(q(k) - p(k))_+ = \epsilon q(k)$, we have $(p(i) - q(i))_+ = \frac{\epsilon}{d-1}$, and thus
\begin{equation} \label{eq:pi_qi_proof}
p(i) = q(i) + \frac{\epsilon}{d-1}.
\end{equation}
Now, if we sum \eqref{eq:pi_qi_proof} over all $i$ such that $i\neq k$, we obtain
\[
1 - p(k) = \sum_{i\neq k} p(i) = \sum_{i \neq k} q(i) + \epsilon = 1 - q(k) + \epsilon.
\]
Substituting $p(k) = (1 -\epsilon) q(k)$, we have
\[
 \epsilon q(k) = 1 -q(k) + \epsilon
\]
That is, $q(k) = 1$, and therefore $p(k) = 1- \epsilon$. This fixes $q(i) = 0$ for $i\neq k$, so \eqref{eq:pi_qi_proof} yields $p(i) = \frac{\epsilon}{d-1}$ for $i\neq k$. This completes the proof. 
\smallskip

For completeness, one can note that although we did not use the assumption $H(Y|X)=0$ directly, the deductions above from equality in Fano's inequality yield
\[
\omega^*(i,j) = \begin{cases}
0 & \text{if } i= j \neq k \text{ or } j\neq i = k,\\
\frac{\epsilon}{d-1} & \text{if } i\neq j = k, \\
1- \epsilon & \text{if } i\neq j = k
\end{cases} \implies  \frac{\omega^*(i,j)}{p(i)} = \begin{cases}
0 & \text{if }j\neq k\\
1 & \text{if }j =k,
\end{cases}
\]
and therefore $H(Y|X) =0$, as required.\hfill\proofSymbol

\paragraph{Proof of Proposition~\ref{prop:local_cont_bound}} From the discussion at the beginning of the Section, it only remains to establish the equality conditions for Equation~\eqref{eq:AF_from_local}. By \eqref{eq:AF_bounds_each}, if either $S(\rho^*_\epsilon(\sigma)) - S(\sigma) = \AF(\epsilon)$ or $S(\sigma) - S(\rho_{*,\eps}(\sigma)) = \AF(\epsilon)$, then we achieve equality in \eqref{eq:AF_from_local}. Here, $\AF(\eps)$ is as given by \eqref{def:AFeps}.

	If $\sigma$ is a pure state, in some basis we can write $\sigma = \ketbra{0}{0}$. Then
	\[
	\rho_\eps^*(\sigma) = (1- \eps)\ketbra{0}{0} + \sum_{i=1}^{d-1} \frac{\eps}{d-1} \ketbra{i}{i}
	\]
	yielding $S(\rho_\eps^*(\sigma)) = -(d-1) \frac{\eps}{d-1}\log(\frac{\eps}{d-1}) -(1- \eps)\log(1- \eps) =  \AF(\epsilon)$.
	Likewise, if $\sigma= (1- \eps)\ketbra{0}{0} + \sum_{i=1}^{d-1} \frac{\eps}{d-1} \ketbra{i}{i}$, then $\ketbra{0}{0} \in \Be(\sigma)$. Hence, $\rho_{*,\epsilon}(\sigma) = \ketbra{0}{0}$, and by the same computation, as for the pure state case, we recover $S(\sigma) - S(\rho_{*,\eps}(\sigma)) = \AF(\epsilon)$.

	On the other hand, let us assume equality in \eqref{eq:AF_from_local}. Then in particular, either $S(\rho^*_\epsilon(\sigma)) - S(\sigma) = \AF(\epsilon)$ or $S(\sigma) - S(\rho_{*,\eps}(\sigma)) = \AF(\epsilon)$.
	\begin{enumerate}
		\item Case 1: $S(\rho^*_\epsilon(\sigma)) - S(\sigma) = \AF(\epsilon)$. Since $S(\rho^*_\epsilon(\sigma)) \geq S(\sigma)$, we have by \Cref{lem:AF_equality} that $\sigma$ must be a pure state.
		\item Case 2: $S(\sigma) - S(\rho_{*,\eps}(\sigma)) = \AF(\epsilon)$. Then by \Cref{lem:AF_equality}, as $S(\sigma) \geq S(\rho_{*,\epsilon})$, if $\epsilon < 1 - \frac{1}{d}$, then $\sigma = \diag ( 1 - \epsilon, \frac{\epsilon}{d-1}, \dotsc, \frac{\epsilon}{d-1})$, and if $\epsilon\geq 1 - \frac{1}{d}$, then $\sigma = \tau$.
	\end{enumerate}
	This completes the proof. \hfill\proofSymbol

\section{Proof of \Cref{thm:main_convex_result} \label{sec:convexity}}
\subsection{Tools from convex optimization}\label{tools-convex}
	Let $\cX$ be a  real Hilbert space with inner product $\braket{\cdot,\cdot}$ and  $f: \cX\to \R \cup \{+\infty\}$ be a  function. In our applications, we  take $\cX = \Bsa(\H)$ equipped with the Hilbert-Schmidt inner product $\braket{\cdot,\cdot}_\text{HS}$. Let $\dom f = \{ x\in \cX: f(x) < \infty\}$ and assume $\dom f \neq \emptyset$. Let $\cX^* = \cB(\cX, \R)$ the set of bounded linear maps from $\cX$ to $\R$, equipped with the dual norm $\|\ell\|_* = \sup_{\|x\|=1} |\ell(x)|$ for $\ell \in \cX^*$. Since $\cX$ is a Hilbert space, by the Riesz-Fr\'echet representation, for each $\ell \in \cX^*$ there exists a unique $u_\ell \in \cX$ such that $\ell(x) = \braket{u_\ell, x}$ for all $x\in \cX$. We call $u_\ell$ the dual vector for $\ell$ (in particular the Hilbert-Schmidt dual in the case of $\cX = (\Bsa(\H), \braket{\cdot,\cdot}_\text{HS})$). We say $f$ is \emph{lower semicontinuous} if $\liminf_{x\to x_0} f(x) \geq f(x_0)$ for each $x_0\in \cX$. The \emph{directional derivative} of $f$ at $x\in \dom f$ in the direction $h\in \cX$ is given by
	\begin{equation} \label{eq:def_dir_deriv}
	f'(x;h) = \lim_{t \downarrow 0} \frac{1}{t}[f(x+th) - f(x)].
	\end{equation}
	If $f$ is convex, this limit exists in $\R\cup\{\pm \infty\}$. If the map $\phi_x(h):= \cX\ni h \mapsto f'(x;h)$ is linear and continuous, then $f$ is called \emph{G\^ateaux-differentiable} at $x\in \cX$. Moreover, if $f$ is G\^ateaux differentiable at every $x\in A\subset \cX$, then $f$ is said to be G\^ateaux-differentiable on $A$. If $f$ is convex and continuous at $x$, it can be shown that the map $\phi_x$ is finite and continuous. However, a continuous and convex function may not be G\^ateaux-differentiable. For example, $f:\R\to \R$, $f(x) = |x|$ has $\phi_0(h) = \lim_{t\downarrow 0} \frac{1}{t}[ f(x+th) - f(x)] = \lim_{t\downarrow 0} \frac{1}{t}|th| = |h|$ which is nonlinear. If $f$ is G\^ateaux-differentiable at $x$, we call the dual to $\phi_x$ as the \emph{G\^ateaux gradient} of $f$ at $x$, written $\nabla f (x)\in \cX$. That is, $\braket{\nabla f(x), h} = \phi_x(h)$ for each $h\in \cX$.

	The function $f$ is Fr\'echet-differentiable at $x$ if there is $y\in \cX$ such that
	\begin{equation}
	\lim_{\|r\|\to 0} \frac{1}{\|r\|} | f(x+r) - f(x) - \braket{y,r}| = 0 \label{eq:def_Frechet_deriv}
	\end{equation}
	and in this case, one writes $D f(x)\in \cX^*$ for the map $\cX\ni z\mapsto D f(x) z = \braket{y,z}$.
	 If $f$ is Fr\'echet-differentiable at $x$, then by taking $r = t h$ in \eqref{eq:def_Frechet_deriv} we find $\braket{y,h} = f'(x,h)$ and therefore $f$ is G\^ateaux-differentiable at $x$ with $\braket{\nabla f(x),h}  = D f(x)h$.

	By regarding differentiability as the existence of a linear approximation at a point, we can generalize it by defining a notion of a linear subapproximation at a point.
	\begin{definition}
	The \emph{subgradient} of a function $f:\cX\to \R$ at $x$ is the set
	\begin{equation}\label{def:subgradient}
	\partial f (x) = \{ u \in \cX: f(y) - f(x) \geq \braket{u,y-x} \,\,\forall y\in \cX \} \subset \cX.
	\end{equation}
	\end{definition}
	For a convex function $f$, the following properties hold:
	\begin{itemize}
		\item if $f$ is continuous at $x$, then $\partial f(x)$ is bounded and nonempty.
		\item if $f$ is G\^ateaux-differentiable at $x$, then $\partial f(x) = \{ \nabla f(x)\}$.
	\end{itemize}
	We briefly prove the second point here. Assume $f$ is G\^ateax-differentiable at $x$. Then
	\[
	\braket{\nabla f (x), y-x} = f'(x,y-x) = \lim_{t \downarrow 0} \frac{1}{t}[f((1-t)x+ty) - f(x)].
	\]
	By convexity, $f((1-t)x + ty) \leq (1-t) f(x) + t f(y)$. Therefore, $\braket{\nabla f (x), y-x} \leq f(y) - f(x)$; hence, $\nabla f(x)\in \partial f(x)$.

	On the other hand, given $u\in \partial f$, $h\in \cX$, and $t>0$, we can set $ y = th+x$. Then
	\[
	 \braket{u,th} = \braket{u,y-x} \leq f(y) -f(x)  = f(x+th) - f(x).
	\]
	Dividing by $t$ and taking the limit $t\downarrow 0$ yields $u = \nabla f(x)$.

	Fermat's Rule of convex optimization theory, stated below, provides a simple characterization of the minimum of a function in terms of the zeroes of its subgradient. Moreover, since
the subgradient of a G\^ateaux differentiable function consists of a single element, namely, its G\^ateaux gradient, finding its minimizer amounts to showing that its G\^ateaux gradient
is equal to zero.
	\begin{theorem}[Fermat's Rule] \label{thm:Global_minimizer_convex_fn}
	Consider a function $f: \cX\to \R\cup\{+\infty\}$ with $\dom f \neq \emptyset$. Then $\hat x\in \cX$ is a global minimizer of $f$ if and only if $0_{\cX} \in \partial f (\hat x)$, where $0_{\cX}$ is the zero vector of $\cX$.
	\end{theorem}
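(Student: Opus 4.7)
The plan is to observe that Fermat's Rule is essentially a direct translation of the definition of the subgradient, so the proof reduces to unwinding \eqref{def:subgradient} in both directions. No convexity assumption on $f$ is actually needed here; the statement holds for any function $f:\cX \to \R \cup \{+\infty\}$ with $\dom f \neq \emptyset$ once one adopts the definition of $\partial f$ given in \eqref{def:subgradient}.

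First I would prove the ``if'' direction. Suppose $0_{\cX} \in \partial f(\hat{x})$. By the definition of the subgradient applied to $u = 0_{\cX}$, we have
\[
f(y) - f(\hat{x}) \geq \braket{0_{\cX}, y - \hat{x}} = 0 \qquad \text{for all } y \in \cX,
\]
so that $f(y) \geq f(\hat{x})$ for every $y \in \cX$. This is precisely the statement that $\hat{x}$ is a global minimizer of $f$. In particular, note that if $f(\hat{x}) = +\infty$ then $\partial f(\hat{x})$ is empty (every $y$ with $f(y)<\infty$ would violate the defining inequality), so we may implicitly assume $\hat{x} \in \dom f$.

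For the ``only if'' direction, suppose $\hat{x}$ is a global minimizer, so $f(y) \geq f(\hat{x})$ for all $y \in \cX$. Then for every $y \in \cX$,
\[
f(y) - f(\hat{x}) \geq 0 = \braket{0_{\cX}, y - \hat{x}},
\]
which is exactly the defining condition \eqref{def:subgradient} for $0_{\cX}$ to belong to $\partial f(\hat{x})$.

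There is no real obstacle in this proof; the only subtlety worth mentioning is that the characterization makes the most sense when $f$ is proper (i.e.\ not identically $+\infty$) and $\hat{x} \in \dom f$, which is why the hypothesis $\dom f \neq \emptyset$ appears in the statement. The power of Fermat's Rule comes not from the proof itself but from the fact that $\partial f$ can be computed even when $f$ is non-differentiable; in the G\^ateaux-differentiable case we noted above that $\partial f(x) = \{\nabla f(x)\}$, and Fermat's Rule then reduces to the familiar first-order condition $\nabla f(\hat{x}) = 0$.
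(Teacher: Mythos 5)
Your proof is correct and is essentially the same as the paper's: both simply unwind the definition of the subgradient at $u = 0_{\cX}$ to observe that $0_{\cX} \in \partial f(\hat x)$ is equivalent to $f(y) \geq f(\hat x)$ for all $y$. The paper states this as a single ``if and only if'' chain, while you split it into two directions and add a small remark about $\dom f$, but the content is identical.
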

	\begin{proof}	
	$0_{\cX}\in \partial f(\hat x)$ if and only if 
	\[
	f(y) - f(\hat x) \geq \braket{0,y-\hat x} = 0
	\]
	for every $y \in \cX$,
	i.e. if and only if $\hat x$ is a global minimizer of $f$.
	\end{proof}

The following result proves useful in computing the subgradient of a sum of convex functions (see e.g. \cite[Theorem 3.30]{Peypouquet2015} for the proof).
\begin{theorem}[Moreau-Rockafellar] \label{thm:Moreau-Rockafellar}
Let $f,g: \cB(\H)\to \R\cup \{+\infty\}$ be convex and lower semicontinuous, with non-empty domains. For each $A\in \cB(\H)$, we have
\begin{equation}
\partial f(A) + \partial g(A) := \{a+b: a\in \partial f(A), b\in \partial g(A) \} \subset \partial(f+g)(A).
\end{equation}
Equality holds for every $A\in \cB(\H)$ if $f$ is continuous at some $A_0\in \dom(g)$.
\end{theorem}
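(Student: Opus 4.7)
The plan is to handle the two inclusions separately. The forward inclusion $\partial f(A) + \partial g(A) \subseteq \partial(f+g)(A)$ is immediate from the definition \eqref{def:subgradient}: given $a \in \partial f(A)$ and $b \in \partial g(A)$, the inequalities $f(y) - f(A) \geq \braket{a,y-A}$ and $g(y) - g(A) \geq \braket{b,y-A}$ add directly to produce the subgradient inequality for $a+b$ and $f+g$. This direction needs neither lower semicontinuity nor the interior-point hypothesis.

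For the reverse inclusion under the continuity hypothesis, I would use a geometric Hahn-Banach separation argument in $\cB(\H)\times \R$. Given $c \in \partial(f+g)(A)$, introduce the convex sets
\begin{align*}
E_1 &= \{(y,s) \in \cB(\H)\times \R : s \geq f(A+y) - f(A)\}, \\
E_2 &= \{(y,s) \in \cB(\H)\times \R : s \leq \braket{c,y} - (g(A+y) - g(A))\},
\end{align*}
namely a translated epigraph of $f$ and a translated hypograph of the concave function $y\mapsto \braket{c,y} - g(A+y)+g(A)$. The subgradient inequality $(f+g)(A+y)-(f+g)(A)\geq \braket{c,y}$ translates precisely to $\mathrm{int}(E_1)\cap E_2 = \emptyset$. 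Continuity of $f$ at some $A_0\in \dom g$ ensures $\mathrm{int}(E_1)$ is nonempty (this is exactly where the hypothesis enters), and lower semicontinuity makes both sets closed, so Hahn-Banach produces a separating affine hyperplane with nonzero normal $(\alpha,\beta)\in \cB(\H)\times \R$. Unboundedness of $E_1$ in the $s$-direction forces $\beta \geq 0$, and the case $\beta=0$ is ruled out because the $y$-projection of $\mathrm{int}(E_1)$ contains a neighborhood of the origin, which would then force $\alpha=0$, contradicting nontriviality.

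After normalizing $\beta = 1$ and using that $(0,0) \in E_1 \cap E_2$ to fix the affine constant to zero, specializing the resulting separation inequalities to the boundary curves $s = f(A+y) - f(A)$ on $E_1$ and $s = \braket{c,y} - g(A+y) + g(A)$ on $E_2$ yields $f(A+y) - f(A) \geq \braket{-\alpha,y}$ and $g(A+y) - g(A) \geq \braket{c+\alpha, y}$ for every $y$. These say $-\alpha \in \partial f(A)$ and $c+\alpha \in \partial g(A)$, giving the decomposition $c = (-\alpha) + (c+\alpha)$.

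The main obstacle is ensuring the separating hyperplane is non-vertical, i.e.\ $\beta > 0$; this is precisely what the continuity hypothesis purchases, and without it classical counterexamples show the sum rule can fail (one can exhibit lower semicontinuous convex $f,g$ and a point $A$ where $\partial f(A) = \partial g(A) = \emptyset$ while $\partial(f+g)(A)$ is not, for instance when $\dom f$ and $\dom g$ meet only at a single boundary point). Lower semicontinuity plays the subsidiary role of keeping the epigraph and hypograph sets closed, which streamlines the separation argument but is not individually responsible for the sum rule.
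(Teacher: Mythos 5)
The paper does not prove this theorem; it invokes it directly with a citation (Theorem 3.30 of Peypouquet's book), so there is no in-paper proof to compare against. Your Hahn--Banach / epigraph-separation approach is a legitimate standard route, and the overall architecture --- the easy direction by adding subgradient inequalities, the hard direction by separating a shifted epigraph of $f$ from a shifted hypograph built from $c$ and $g$, then normalizing $\beta=1$ and reading off $-\alpha\in\partial f(A)$ and $c+\alpha\in\partial g(A)$ --- is sound.

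There is, however, a genuine gap in the step that rules out a vertical separating hyperplane. You claim that ``the $y$-projection of $\mathrm{int}(E_1)$ contains a neighborhood of the origin,'' but that projection is $\mathrm{int}(\dom f) - A$, so it contains a neighborhood of $0$ only if $A\in\mathrm{int}(\dom f)$. The theorem asserts the sum rule for \emph{every} $A$, and the qualification hypothesis only places $A_0$ (not $A$) in $\mathrm{int}(\dom f)\cap \dom g$; the boundary points $A\in\partial(\dom f)$ are precisely the delicate ones. The argument should be run at $y_0 := A_0 - A$ instead: since $A_0\in\dom g$, $y_0$ lies in the $y$-projection of $E_2$, and since $A_0\in\mathrm{int}(\dom f)$, a full neighborhood of $y_0$ lies in the $y$-projection of $E_1$. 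If $\beta=0$, the separating functional $\langle\alpha,\cdot\rangle$ is $\geq\gamma$ on the first projection and $\leq\gamma$ on the second, hence equals $\gamma$ at $y_0$; being $\geq\gamma$ on a neighborhood of a point where it equals $\gamma$ forces $\alpha=0$, a contradiction. With that correction the proof closes. Two smaller remarks: in finite dimensions the separation of $\mathrm{int}(E_1)$ from $E_2$ needs neither set to be closed, so lower semicontinuity is not doing any work in the separation step (you half-acknowledge this); and when you ``specialize to the boundary curves'' of $E_1$ and $E_2$ you should only take $y$ in $\dom f - A$ (resp.\ $\dom g - A$), but this costs nothing since the subgradient inequality is vacuous where the right-hand side is $+\infty$.
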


\subsection{G\^ateaux-differentiable functions \label{sec:classF}}
	Let $\cF$ denote the class of functions $\varphi:\cD(\cH) \to \RR$ which are concave and continuous, and G\^ateaux-differentiable on $\cD_+(\cH)$. These include the following:
	\begin{itemize}
		\item The von Neumann entropy $\rho \mapsto S(\rho):= -\tr(\rho \log \rho)$. In \Cref{lem:Lrho_VN}, we show for $\rho>0$,
		\[
		\nabla S (\rho) = - \log \rho - \tfrac{1}{\log_{\mathrm{e}}(2)}\one.
		\]
		\item The conditional entropy $\rho_{AB} \mapsto S(A|B)_\rho := S(\rho_{AB}) - S(\rho_B)$, for which
		\[
		\nabla S(A|B)_{\cdot} (\rho_{AB})=  - (\log \rho_{AB} - \one_A\otimes \log \rho_B)
		\]
		as shown in \Cref{cor:Lrho_CE}. Note that for $\varphi(\rho) := S(A|B)_\rho$, the conditional entropy satisfies the interesting property that
		\begin{equation}
		\varphi(\rho) = \braket{\nabla \varphi (\rho),\rho}_\text{HS}.
		\end{equation}
		\item The $\alpha$-R\'enyi entropy for $\alpha \in (0,1)$. The map $\rho\mapsto S_\alpha(\rho)$ is concave for $\alpha\in (0,1)$, continuous, and has G\^ateaux gradient
		\[
		\nabla S_\alpha (\rho) = \frac{\alpha}{1-\alpha} \frac{1}{\tr [\rho^\alpha]} \rho^{\alpha-1}
		\]
		by \Cref{lem:grad_Renyi_ent}. 
		Note that the $\alpha$-R\'enyi entropy for $\alpha>1$ is not concave.
		\item The function $-T_\alpha$, for $\rho \mapsto T_\alpha(\rho):= \tr [\rho^\alpha]$ and $\alpha > 1$. This map is concave for $\alpha>1$, continuous, and by \Cref{lem:grad_Renyi_ent},  has G\^ateaux gradient
		 \[
		\nabla ({-T_\alpha})(\rho) = -\alpha \rho^{\alpha-1}.
		\]
		A state $\rho^* \in \Be(\sigma)$ maximizes $-T_\alpha$ over $\Be(\sigma)$ if and only if $\rho^*$ minimizes $T_\alpha$ over $\Be(\sigma)$. For $\alpha>1$, minimizing $T_\alpha$ is equivalent to maximizing $S_\alpha(\rho)$, using that $x\mapsto \frac{1}{1-\alpha}\log x $ is decreasing. Therefore, $\rho^*$ maximizes $-T_\alpha$ over $\Be(\sigma)$ if and only if it maximizes $S_\alpha$ over $\Be(\sigma)$. Thus, by considering the function $f(\rho)  =-T_\alpha(\rho)$ for $\alpha>1$ in \Cref{thm:main_convex_result}, one can establish conditions for $\rho^*$ to maximize the $\alpha$-R\'enyi entropy for $\alpha>1$.

	\end{itemize}

\subsection{A convex optimization result and its proof \label{sec:prove_convex_result}}
	\Cref{thm:main_convex_result} is a condensed form of the following theorem, whose proof we include below.
\begin{theorem} \label{thm:main_result}
	Let $f:\cD(\cH) \to \RR$ be a concave, continuous function which is G\^ateaux-differentiable on $\cD_+(\cH)$.  Moreover, given a state $\sigma\in \cD(\cH)$ and $\eps\in (0,1]$, for any $\rho\in \Bep(\sigma)$, 
	\begin{equation} \label{eq:main_result}
	\tr( \nabla f(\rho) (\rho-\sigma)) \leq \eps (\lambda_+(\nabla f(\rho)) - \lambda_-(\nabla f(\rho))). 
	\end{equation}
	Furthermore, the following are equivalent:

	\begin{enumerate}
		\item Equality in \eqref{eq:main_result},
		\item $\rho \in \argmax_{\Be(\sigma)} f$,
		\item 
		\begin{enumerate}
		\item Either $\frac{1}{2}\|\rho-\sigma\|_1= \eps$ or $\nabla f(\rho) = \lambda \one$ for some $\lambda \in \R$, and
		\item  we have 
		\[
		\pi_\pm \nabla f(\rho) \pi_\pm = \lambda_\pm(\nabla f(\rho)) \pi_\pm
		\] where $\pi_\pm$ is the projection onto the support of $\Delta_\pm$, and where $\Delta= \Delta_+ - \Delta_-$ is the Jordan decomposition of $\Delta:= \rho-\sigma$.
	\end{enumerate}
	\end{enumerate}

	\end{theorem}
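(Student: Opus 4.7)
The plan is to establish the inequality (\ref{eq:main_result}) via a Jordan decomposition of $\rho-\sigma$, to read off the equivalence $1\iff 3$ from a careful analysis of when each intermediate bound is tight, to obtain $3\implies 2$ as a short consequence of concavity combined with (\ref{eq:main_result}), and finally to close the loop by proving $2\implies 3$ using Fermat's Rule together with Moreau--Rockafellar.

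For the inequality itself, I would write $\rho-\sigma=\Delta_+-\Delta_-$ in Jordan form with $\Delta_\pm\geq 0$, $\Delta_+\Delta_-=0$, and $\tr\Delta_+=\tr\Delta_-=\tfrac12\|\rho-\sigma\|_1\leq\eps$. Setting $L:=\nabla f(\rho)$, the operator bounds $\lambda_-(L)\one\leq L\leq\lambda_+(L)\one$ together with $\Delta_\pm\geq 0$ give
\[
\tr(L(\rho-\sigma))=\tr(L\Delta_+)-\tr(L\Delta_-)\leq\lambda_+(L)\tr\Delta_+-\lambda_-(L)\tr\Delta_-\leq\eps(\lambda_+(L)-\lambda_-(L)),
\]
using $\tr\Delta_+=\tr\Delta_-\leq\eps$ and $\lambda_+(L)\geq\lambda_-(L)$ in the last step. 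Tightness of each intermediate bound has a clean interpretation: $\tr((\lambda_+(L)\one-L)\Delta_+)=0$ with $\lambda_+(L)\one-L\geq 0$ and $\Delta_+\geq 0$ forces $\Delta_+$ to be supported on the top eigenspace of $L$, equivalently $\pi_+ L\pi_+=\lambda_+(L)\pi_+$, and symmetrically for $\pi_-$; tightness of the final inequality requires $\tfrac12\|\rho-\sigma\|_1=\eps$ or $\lambda_+(L)=\lambda_-(L)$ (so $L=\lambda\one$). These are exactly the two parts of condition 3, yielding $1\iff 3$.

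For $3\implies 2$, I would use the first-order inequality for concave G\^ateaux-differentiable functions, $f(\omega)-f(\rho)\leq\tr(L(\omega-\rho))=\tr(L(\omega-\sigma))-\tr(L(\rho-\sigma))$ for every $\omega\in\cD$. Applying (\ref{eq:main_result}) to the pair $(\omega,\sigma)$ — whose derivation requires only $\omega\in\Be(\sigma)$, not faithfulness — bounds the first summand by $\eps(\lambda_+(L)-\lambda_-(L))$, while condition 3 (equivalently equality in (\ref{eq:main_result})) gives the same value for the second summand with a minus sign. The two contributions cancel, so $f(\omega)\leq f(\rho)$ for every $\omega\in\Be(\sigma)$.

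The main obstacle is $2\implies 3$, which I would tackle by Fermat's Rule. Let $g(\omega):=\tfrac12\|\omega-\sigma\|_1$, so that $\Be(\sigma)=\{g\leq\eps\}\cap\cD$. The maximizer $\rho$ minimizes $-f+\chi_{\{g\leq\eps\}}+\chi_{\cD}$, and faithfulness of $\rho$ makes the positivity constraint inactive at $\rho$. After verifying the hypotheses of Moreau--Rockafellar by a Slater-type argument using the faithful point $\rho\in\Bep(\sigma)$, Fermat's Rule gives $L=\lambda\one+\mu u$ for some $\lambda\in\R$, $\mu\geq 0$, $u\in\partial g(\rho)$, with complementary slackness $\mu(g(\rho)-\eps)=0$. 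The key computation is to identify $\partial g(\rho)$ as the set of self-adjoint operators of the form $\tfrac12(\pi_+-\pi_-+Z)$, where $Z$ is supported on $\ker(\rho-\sigma)$ with $\|Z\|_\infty\leq 1$ — the standard trace-norm subdifferential. Substituting back, the case $\mu=0$ immediately gives $L=\lambda\one$ and condition 3 is trivial; in the case $\mu>0$, complementary slackness gives $g(\rho)=\eps$ while direct spectral analysis shows $\pi_\pm L\pi_\pm=(\lambda\pm\mu/2)\pi_\pm$ with $\lambda_\pm(L)=\lambda\pm\mu/2$, matching condition 3(b). Verifying the hypotheses of Moreau--Rockafellar and pinning down this subdifferential are where the bulk of the technical work lies.
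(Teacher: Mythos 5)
Your proposal is correct, and its overall architecture matches the paper's: the inequality and the equivalence $1\iff 3$ come from the Jordan-decomposition estimate $\tr(L\Delta)\leq\eps(\lambda_+(L)-\lambda_-(L))$ with a careful tightness analysis (the paper isolates exactly this as an unnumbered lemma in Section~5.3), and condition $2$ is tied in via Fermat's Rule and Moreau--Rockafellar. Two aspects of your route genuinely differ. First, you prove $3\implies 2$ by an elementary two-line concavity argument: $f(\omega)-f(\rho)\leq\tr(L(\omega-\sigma))-\tr(L(\rho-\sigma))$, the first term is bounded above by $\eps(\lambda_+(L)-\lambda_-(L))$ because the lemma's inequality holds for the fixed $L=\nabla f(\rho)$ paired with the arbitrary traceless $\Delta=\omega-\sigma$, and the second term equals the same quantity by condition~$1$; this shortcut is not present in the paper, which instead extracts both directions of $1\iff 2$ from the Fermat/Moreau--Rockafellar setup. (One small slip: you cite ``(\ref{eq:main_result}) applied to $(\omega,\sigma)$,'' but \eqref{eq:main_result} would bound $\tr(\nabla f(\omega)(\omega-\sigma))$; what you actually need, and correctly use, is the underlying lemma-level bound with $L=\nabla f(\rho)$ held fixed.) Second, for $2\implies 3$ you compute the normal cone to the sublevel set of $g(\omega)=\tfrac12\|\omega-\sigma\|_1$ by writing it as $\R_{\geq 0}\cdot\partial g(\rho)$ and using the explicit spectral form of the trace-norm subdifferential, $\partial g(\rho)=\{\tfrac12(\pi_+-\pi_-+Z):Z=\pi_0 Z\pi_0,\ \|Z\|_\infty\leq 1\}$, which lets you read off $\pi_\pm L\pi_\pm=(\lambda\pm\mu/2)\pi_\pm$ and $\lambda_\pm(L)=\lambda\pm\mu/2$ directly. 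The paper instead characterizes $\partial\delta_T(\rho)$ implicitly as $\{u:2\eps\|u\|_\infty=\braket{u,\rho-\sigma}\}$ (Lemma~\ref{lem:subgradT}) and then needs an auxiliary analysis of $q(x)=\max_{\lambda\in\spec L}|\lambda-x|$ (Lemma~\ref{lem:q(x)}) to reduce the optimality condition to $\alpha\geq(\lambda_+-\lambda_-)/2$, after which it invokes the $1\iff 3$ lemma to recover the projection condition. Your explicit form of the subdifferential bypasses that $q(x)$ computation, at the cost of having to justify the normal-cone decomposition of the sublevel set (Slater) and the trace-norm subdifferential formula. Both routes are sound; yours is arguably cleaner in the $3\implies 2$ direction and trades one technical lemma for another in the $2\implies 3$ direction.
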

We first prove the inequality \eqref{eq:main_result} by considering the Jordan decomposition of $\Delta:=\rho-\sigma$. Next, we convert the constrained optimization problem $\max_{\rho\in\Be(\sigma)} f(\rho)$ to an unconstrained optimization problem $\min h$ for a non-G\^ateaux differentiable function $h$ defined on all of $\Bsa(\H)$ by adding appropriate indicator functions for the sets $\cD(\cH)$ and $\{A\in \Bsa(\H): \frac{1}{2}\|A-\sigma\|_1 \leq \epsilon \}$.

 The Moreau-Rockafellar Theorem (\Cref{thm:Moreau-Rockafellar}) allows us to compute $\partial h(\rho)$ in terms of $\partial f(\rho) = \{\nabla f(\rho)\}$ and the subgradients of the indicator functions. We then show $0\in \partial h(\rho)$ if and only if equality is achieved in \eqref{eq:main_result}.

\paragraph{Proof of \Cref{thm:main_result}}
	We start with the following general result, which does not use convex optimization.

	\begin{lemma}
	Let $\eps>0$ and $\Delta\in\Bsa$ with $\tr (\Delta) = 0$, $\frac{1}{2}\|\Delta\|_1 \leq \eps$. Let $L\in \Bsa$. Then
	\begin{equation} \label{eq:UB_LDelta}
	 \tr( L \Delta ) \leq \eps (  \lambda_+(L)  -  \lambda_-(L) )
	\end{equation}
	with equality if and only if
		\begin{enumerate}
		\item Either $\frac{1}{2}\|\Delta\|_1= \eps$ or else $L = \lambda \one$ for $\lambda:= \lambda_+(L) = \lambda_-(L)$, and
		\item  we have 
		\[
		\pi_\pm L \pi_\pm = \lambda_\pm(L) \pi_\pm
		\] where $\pi_\pm$ is the projection onto the support of $\Delta_\pm$, and where $\Delta= \Delta_+ - \Delta_-$ is the Jordan decomposition of $\Delta$.
	\end{enumerate}
	\end{lemma}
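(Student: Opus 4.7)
The plan is to reduce this to an elementary operator inequality using the Jordan decomposition of $\Delta$, with no need for the convex-analytic machinery developed earlier in the section. Write $\Delta = \Delta_+ - \Delta_-$ where $\Delta_\pm \ge 0$ have mutually orthogonal supports, and let $\pi_\pm$ denote the corresponding support projections. The trace-zero condition gives $\tr \Delta_+ = \tr \Delta_-$, and since $\|\Delta\|_1 = \tr \Delta_+ + \tr \Delta_-$, we obtain $t := \tr \Delta_+ = \tr \Delta_- = \tfrac{1}{2}\|\Delta\|_1 \le \eps$. Then
\[
\tr(L\Delta) = \tr(L \Delta_+) - \tr(L\Delta_-).
\]

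For the main inequality I would use the operator inequalities $L \le \lambda_+(L)\one$ and $L \ge \lambda_-(L)\one$. Because $\Delta_+$ is supported in the range of $\pi_+$, we have $\tr(L\Delta_+) = \tr(\pi_+ L \pi_+ \Delta_+)$, and $\pi_+ L \pi_+ \le \lambda_+(L)\pi_+$ combined with $\Delta_+\ge 0$ gives $\tr(L\Delta_+) \le \lambda_+(L)\, t$. The analogous lower bound $\tr(L\Delta_-) \ge \lambda_-(L)\, t$ comes from $\pi_- L \pi_- \ge \lambda_-(L)\pi_-$. Subtracting yields $\tr(L\Delta) \le t(\lambda_+(L)-\lambda_-(L)) \le \eps(\lambda_+(L)-\lambda_-(L))$, the second step using $\lambda_+(L)\ge \lambda_-(L)$ and $t\le \eps$.

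For the equality characterization I would trace through which inequalities become tight. The bound $t(\lambda_+(L)-\lambda_-(L)) \le \eps(\lambda_+(L)-\lambda_-(L))$ is tight iff either $t=\eps$ (i.e.\ $\tfrac{1}{2}\|\Delta\|_1 = \eps$) or $\lambda_+(L) = \lambda_-(L)$, and the latter forces $L = \lambda \one$ via the sandwiching $\lambda_-(L)\one \le L \le \lambda_+(L)\one$; this is condition~1. For condition~2, tightness in $\tr(\pi_+ L \pi_+ \Delta_+) \le \lambda_+(L) \tr \Delta_+$ amounts to $\tr\bigl((\lambda_+(L)\pi_+ - \pi_+ L \pi_+)\Delta_+\bigr) = 0$, where the first factor is positive on $\im \pi_+$ and $\Delta_+$ is \emph{strictly} positive on $\im \pi_+$ (by definition of support projection). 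A standard argument in the eigenbasis of $\lambda_+(L)\pi_+ - \pi_+ L \pi_+$ then forces that operator to vanish, giving $\pi_+ L \pi_+ = \lambda_+(L)\pi_+$; the same for the $-$ component. Conversely it is immediate that conditions~1 and 2 together make every inequality an equality. This handles both directions of the iff.

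The only mildly subtle point is the ``faithfulness'' argument used to pass from $\tr(A\Delta_+)=0$ (with $A \ge 0$ on $\im \pi_+$) to $A=0$ on $\im\pi_+$; I would state it as a one-line observation, noting that diagonalising $A$ in $\im\pi_+$ and using $\bra{i}\Delta_+\ket{i}>0$ for every eigenvector of $A$ in $\im\pi_+$ forces each eigenvalue of $A$ to be zero. I do not anticipate any serious obstacle, so the proof should be short.
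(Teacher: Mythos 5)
Your proposal is correct and follows essentially the same route as the paper's own proof: Jordan-decompose $\Delta$, bound $\tr(L\Delta_\pm)$ using $\lambda_-(L)\one \le L \le \lambda_+(L)\one$ (equivalently the projected forms $\pi_\pm L \pi_\pm \le \lambda_+(L)\pi_\pm$ and $\ge \lambda_-(L)\pi_\pm$), combine with $\tr\Delta_+ = \tr\Delta_- \le \eps$, and then characterize equality by forcing each intermediate inequality to be tight, with the positive-definiteness of $\Delta_\pm$ on $\im\pi_\pm$ promoting $\tr\bigl((\lambda_\pm(L)\pi_\pm - \pi_\pm L\pi_\pm)\Delta_\pm\bigr)=0$ to $\pi_\pm L\pi_\pm = \lambda_\pm(L)\pi_\pm$. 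There is no meaningful difference from the paper's argument.
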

	\begin{proof}
	Note $|\Delta| = \Delta_+ + \Delta_-$, and since $\Delta$ has trace zero, $\tr(\Delta_+) = \tr(\Delta_-)$. 
	We expand
	\[
	 \tr(L \Delta) = \tr(L \Delta_+)- \tr(L \Delta_-).
	\]
	Now, we use that $L$ is self-adjoint so that
	\[
	\lambda_-(L)\one \leq L \leq  \lambda_+(L)\one.
	\]
	Since $\Delta_\pm\geq 0$, we have
	\begin{align}	
	\tr(L \Delta_+) - \tr(L \Delta_-) 
	&\leq \tr(L \Delta_+) - \tr( \lambda_-(L)\Delta_-) \label{eq:lambda_min_ieq}\\
	&\leq \tr(\lambda_+(L) \Delta_+) - \tr( \lambda_-(L)\Delta_-) \label{eq:lambda_max_ieq}\\
	&= (\lambda_+(L) - \lambda_-(L) )\tr(\Delta_+), \nonumber
	\end{align}
	where the last line follows from the fact that $\tr(\Delta_+) =\tr(\Delta_-)$. However, $\tr(\Delta_+)\leq \eps$, so we have
	\begin{equation} \label{eq:epsilon_ieq}
	(\lambda_+(L) - \lambda_-(L) )\tr(\Delta_+)  \leq \eps (  \lambda_+(L)  -  \lambda_-(L) ).
	\end{equation}
	Thus, \eqref{eq:UB_LDelta} follows.
	To obtain equality in (\ref{eq:UB_LDelta}), we require equality in \eqref{eq:lambda_min_ieq}, \eqref{eq:lambda_max_ieq} and \eqref{eq:epsilon_ieq}. Equality in \eqref{eq:epsilon_ieq} is equivalent to condition $1$ in the statement of the lemma. We now show that equality in \eqref{eq:lambda_min_ieq} and \eqref{eq:lambda_max_ieq} is equivalent to condition $2$.

	Set $\lambda_+ = \lambda_+(L)$. Then since $\pi_+$ is the projection onto the support of $\Delta_+$, we have $\Delta_+ = \pi_+\Delta_+\pi_+$. Using cyclicity of the trace, we obtain
	\[
	\tr(L \Delta_+) = \tr(L \pi_+ \Delta_+\pi_+ ) = \tr(\pi_+ L \pi_+ \Delta_+).
	\]
	Equality in \eqref{eq:lambda_min_ieq} implies $\tr(L \Delta_+) = \lambda_+ \tr(\Delta_+)$, so
	\[
	\tr((\pi_+ L \pi_+)\Delta_+) = \tr(\lambda_+ \Delta_+)
	\]
	and thus
	\begin{equation}
	\tr(\Delta_+ (\lambda_+ \pi_+ - \pi_+ L \pi_+ )) = 0. \label{eq:tr_product_zero_proof}
	\end{equation}
	Since $\lambda_+$ is the largest eigenvalue of $L$ which is self-adjoint, we have $L \leq \lambda_+ \one$. Since conjugating by any operator preserves the semi-definite order, we have $\lambda_+ \pi_+ - \pi_+ L \pi_+ \geq 0$. Then since $\Delta_+$ restricted to $\pi_+$ is positive definite, \eqref{eq:tr_product_zero_proof} implies $\pi_+ L \pi_+ = \lambda_+\pi_+$.  Similarly, equality in \eqref{eq:lambda_max_ieq} implies $\pi_- L \pi_- = \lambda_-(L) \pi_-$.

	 Conversely, if $\pi_\pm L\pi_\pm = \lambda_\pm \pi_\pm$ for $\pm \in \{+,-\}$,  we immediately obtain equality in \eqref{eq:lambda_min_ieq} and \eqref{eq:lambda_max_ieq}.
	\end{proof}

	This lemma with the choices $\Delta = \rho-\sigma$ and $L \equiv L_\rho := \nabla f(\rho)$ gives the inequality \eqref{eq:main_result} of \Cref{thm:main_result}. It also gives the equivalence between the conditions $1$ and $3$ of the theorem.

	We now turn to the theory of convex optimization to establish the equivalence between conditions $1$ and $2$.
	Recall $f: \cD \to \R$ is a continuous, concave function which is G\^ateaux-differentiable on $\cD_+$. Let us write $\tilde{f} = -f$ which is convex, and note $L_\rho =- \nabla \tilde{f} (\rho)$. 
	With this choice, it remains to be shown that  $\rho^* \in \argmin_{\rho\in \Be(\sigma)} \tilde{f}(\rho)$ if and only if
	\begin{equation}
	\tr (L_{\rho^*} (\rho^* -\sigma)) \geq  \eps (\lambda_+(L_{\rho^*}) - \lambda_-(L_{\rho^*})). \label{eq:remains_to_show}
	\end{equation}

	 The Tietze extension theorem (e.g.~\cite[Theroem 2.2.5]{simon2015a}) allows one to extend a bounded continuous function defined on a closed set of a normal topological space (such as a normed vector space) to the entire space, while preserving continuity and boundedness.  We use this to extend $\tilde{f}$ (which is bounded as it is a continuous function on the compact set $\cD$) to the whole space $\Bsa$, using that $\cD$ is a closed set in $\Bsa$. We consider the closed and convex sets $\cD$ and
	\begin{equation}
	T := \left\{ A \in \Bsa: \|A - \sigma_{AB}\|_1 \leq 2 \eps \right\},\\
	\end{equation}
	and note $\Be = \cD\cap T$.
	We define $h: \Bsa\to \R\cup \{\infty\}$ as
	\[
	h =  	\tilde{f} + \delta_{\cD} + \delta_T
	\]
	where for $S\subset \Bsa$, the indicator function $\delta_S(A) = \begin{cases}
	0 & A\in S\\
	+\infty & \text{otherwise}.
	\end{cases}$

	We have the important fact that, by construction,
	\begin{equation}
	\argmin_{\omega\in\Be(\sigma)} \tilde{f}(\omega) = \argmin_{A\in \Bsa} h(A).
	\end{equation}

	By Theorem \ref{thm:Global_minimizer_convex_fn} $\hat A\in \Bsa$ is a global minimizer of $h$ if and only if $0\in \partial h(\hat A)$. Note each of $\tilde{f}, \delta_{\cD}, \delta_T$ is lower semicontinuous, convex, and has non-empty domain. Moreover, both $\tilde{f}$ and $\delta_T$ are continuous at any faithful state  $\omega \in \Bep(\sigma) \subset \dom \delta_{\cD}$. Hence, by \Cref{thm:Moreau-Rockafellar},
	\[
	\partial h = \partial \tilde{f} + \partial \delta_{\cD} + \partial \delta_T := \{\ell_f + \ell_{\cD} + \ell_T: \ell_f \in \partial \tilde{f},\, \ell_{\cD} \in \partial \delta_{\cD}, \,\ell_T \in \partial\delta_T\}.
	\]
	Hence, to obtain a complete characterization of
	\[
	\argmin h = \{A\in \Bsa: 0\in \partial h(A)\}
	\]
	we need to evaluate the subgradients of $\tilde{f}$, $\delta_T$, and $\delta_{\cD}$.

	Since $\tilde{f}$ is G\^ateaux-differentiable on $\cD_+$, for any $\omega\in \cD_+$, we have $\partial \tilde{f}(\omega) = \{-L_\omega\}$ for $L_\omega := -\nabla \tilde{f}(\omega)$.  The following two results (proven in \Cref{sec:proof_of_lemmas_convex}) describe the other two relevant subgradients.

	\begin{lemma} \label{lem:subgradT}
	For $A\in \Bsa$ with $0 < \frac{1}{2}\|A - \sigma_{AB}\|_1 \leq \eps$,
	\[
	\partial \delta_{T}(A)  =  \left\{ u \in \Bsa:  2 \eps \|u\|_\infty = \braket{u,A - \sigma_{AB}}  \right\}
	\]
	where
	\[
	\|u\|_\infty :=\lambda_+(|u|)= \sup_{A\in \Bsa,\, \|A\|_1=1} |\!\braket{u,A}\!|.
	\]
	\end{lemma}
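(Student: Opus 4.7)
}
The plan is to interpret the subgradient $\partial \delta_T(A)$ as the normal cone of the convex set $T$ at the point $A$, and then compute this normal cone explicitly by using the duality between the trace norm and the operator norm. Since $\frac{1}{2}\|A-\sigma_{AB}\|_1\leq \eps$ we have $A\in T$, so $\delta_T(A)=0$ and the defining inequality \eqref{def:subgradient} for $u\in \partial\delta_T(A)$ reduces to $\braket{u,B-A}_{\mathrm{HS}}\leq 0$ for every $B\in T$ (the inequality is automatic for $B\notin T$, where $\delta_T(B)=+\infty$). Equivalently,
\[
 u\in \partial \delta_T(A)\iff \sup_{B\in T}\braket{u,B}_{\mathrm{HS}}\leq \braket{u,A}_{\mathrm{HS}}.
\]

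Next I would parametrise $T$ as $T=\{\sigma_{AB}+X:X\in\Bsa,\ \|X\|_1\leq 2\eps\}$, so that
\[
 \sup_{B\in T}\braket{u,B}_{\mathrm{HS}}=\braket{u,\sigma_{AB}}_{\mathrm{HS}}+\sup_{\|X\|_1\leq 2\eps}\braket{u,X}_{\mathrm{HS}}=\braket{u,\sigma_{AB}}_{\mathrm{HS}}+2\eps\,\|u\|_\infty,
\]
using the standard duality $\sup_{\|X\|_1\le 1}\braket{u,X}_{\mathrm{HS}}=\|u\|_\infty$ (which is exactly the second characterisation of $\|\cdot\|_\infty$ stated in the lemma, and is attained by taking $X$ proportional to a rank-one operator $\pm\ketbra{\psi}{\psi}$ with $\ket{\psi}$ a top eigenvector of $u$). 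Substituting into the defining inequality gives
\[
 u\in \partial\delta_T(A)\iff 2\eps\,\|u\|_\infty \leq \braket{u,A-\sigma_{AB}}_{\mathrm{HS}}.
\]

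Finally, I would observe that because $A$ itself lies in $T$, the reverse inequality $\braket{u,A-\sigma_{AB}}_{\mathrm{HS}}\leq \sup_{\|X\|_1\leq 2\eps}\braket{u,X}_{\mathrm{HS}}=2\eps\,\|u\|_\infty$ holds for every $u\in\Bsa$. Combining the two inequalities forces equality $2\eps\,\|u\|_\infty=\braket{u,A-\sigma_{AB}}_{\mathrm{HS}}$, which is precisely the claimed characterisation; conversely, any $u$ satisfying this equality satisfies $\braket{u,B-A}_{\mathrm{HS}}\leq 0$ for all $B\in T$ and hence belongs to $\partial\delta_T(A)$. No step looks like a genuine obstacle: the argument is essentially the computation of the normal cone of a trace-norm ball, and the only technical point worth highlighting is the assumption $\frac{1}{2}\|A-\sigma_{AB}\|_1>0$, which only plays the role of ensuring that the identification with the normal cone is the relevant viewpoint (at $A=\sigma_{AB}$ the subgradient would instead be all of $\Bsa$ when $A$ is interior, which is not the regime of interest here).
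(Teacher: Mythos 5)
Your proof is correct and follows essentially the same route as the paper's: both use the trace-norm/operator-norm duality on the Hilbert--Schmidt pairing and then combine the resulting one-sided inequality with the automatic reverse inequality $\braket{u,A-\sigma_{AB}}\le 2\eps\|u\|_\infty$ (coming from $A\in T$) to force equality. Your packaging, computing $\sup_{B\in T}\braket{u,B}$ in one step rather than verifying two inclusions separately, is a little cleaner but not substantively different.

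One factual slip in your closing parenthetical: the subgradient of the indicator $\delta_T$ at an interior point of $T$ (in particular at $A=\sigma_{AB}$) is $\{0\}$, not all of $\Bsa$ --- the normal cone of a convex set at an interior point always collapses to $\{0\}$. In fact, for $\eps>0$ your equality condition $2\eps\|u\|_\infty=\braket{u,A-\sigma_{AB}}$ evaluated at $A=\sigma_{AB}$ correctly returns $\{0\}$, so the hypothesis $\tfrac{1}{2}\|A-\sigma_{AB}\|_1>0$ is not actually needed for the stated formula. This does not affect the main argument.
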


	\begin{lemma} \label{lem:subgradD}
	Let $\omega \in \cD_+$. Then $\partial \delta_{\cD}(\omega) = \left\{ x \one: x\in \R \right\}$.
	\end{lemma}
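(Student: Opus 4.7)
The plan is to unpack the definition of the subgradient \eqref{def:subgradient} and exploit faithfulness of $\omega$ in a direct way. Since $\omega\in \cD$, we have $\delta_\cD(\omega)=0$, and for any $y\notin \cD$ the inequality $\delta_\cD(y)-\delta_\cD(\omega)\geq \braket{u,y-\omega}_{\mathrm{HS}}$ holds automatically because the left-hand side is $+\infty$. Hence
\[
\partial \delta_\cD(\omega) \;=\; \bigl\{\,u\in\Bsa \,:\, \braket{u,y-\omega}_{\mathrm{HS}}\leq 0 \text{ for every } y\in \cD\,\bigr\},
\]
which is equivalent to saying $\omega \in \argmax_{y\in\cD}\tr(uy)$.

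For the inclusion $\{x\one:x\in\R\}\subseteq \partial \delta_\cD(\omega)$, I would simply note that if $u=x\one$, then $\braket{u,y-\omega}_{\mathrm{HS}} = x\,\tr(y-\omega)=x(1-1)=0$ for every $y\in\cD$, so the defining inequality holds (with equality) and $u$ lies in the subgradient.

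For the reverse inclusion, I would use the variational characterization $\max_{y\in\cD}\tr(uy)=\lambda_+(u)$, with the maximizers being exactly those states whose support is contained in the eigenspace of $u$ corresponding to $\lambda_+(u)$. Since $\omega\in \cD_+$ is faithful, $\mathrm{supp}(\omega)=\cH$, forcing the top eigenspace of $u$ to be all of $\cH$, i.e.\ $u=\lambda_+(u)\one$. Setting $x:=\lambda_+(u)$ gives the desired form.

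The main (minor) obstacle is justifying that a faithful state can maximize $y\mapsto \tr(uy)$ only when the functional is constant on $\cD$; this is immediate from the spectral decomposition $u=\sum_i \lambda_i(u)\,P_i$ and the computation $\tr(u\omega)=\sum_i \lambda_i(u)\tr(P_i\omega)$, where $\tr(P_i\omega)>0$ for every spectral projector $P_i$ of $u$ because $\omega>0$, so equality with $\lambda_+(u)=\sum_i\lambda_+(u)\tr(P_i\omega)$ forces $\lambda_i(u)=\lambda_+(u)$ for all $i$.
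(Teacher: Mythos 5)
Your proof is correct, and it takes a mildly different route from the paper's. The paper argues by exhibiting an explicit witness: assuming two eigenvalues of $u$ differ, say $\alpha_k>\alpha_j$, it constructs a specific state $y\in\cD$ that transfers the weight $\braket{j|\omega|j}$ from the eigendirection $\ket{j}$ of $u$ to $\ket{k}$, and verifies $\braket{u,y-\omega}_\text{HS}=(\alpha_k-\alpha_j)\braket{j|\omega|j}>0$, directly contradicting the subgradient inequality. You instead invoke the variational identity $\max_{y\in\cD}\tr(uy)=\lambda_+(u)$ and then observe that $\tr(u\omega)=\sum_i\lambda_i(u)\tr(P_i\omega)$ with $\tr(P_i\omega)>0$ for every spectral projector (by faithfulness), so $\omega$ can attain the maximum only if every $\lambda_i(u)$ equals $\lambda_+(u)$, i.e.\ $u\propto\one$. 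Both arguments exploit faithfulness of $\omega$ in exactly the same way; yours is a bit more systematic and makes the link to the variational principle for $\lambda_+$ explicit, whereas the paper's is entirely self-contained and does not rely on that external fact. Either way the reverse inclusion is established, and the forward inclusion is immediate in both, so the proposal is sound.
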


	Putting together these results, we have, for $\rho^* \in \Bep(\sigma)$,
	\[
	0 \in \partial h(\rho^*) \iff 0 = -L_{\rho^*} + G + x \one
	\]
	for some $G\in \partial \delta_T({\rho^*})$ and $x\in\R$, where $G$ satisfies
	\[
	2 \eps \|G\|_\infty=  \tr(G (\rho^* - \sigma)).
	\]
	Then $G =  L_{\rho^*} - x \one$, which implies that
	\[
	\tr( L_{\rho^*} (\rho^* - \sigma)) = 2 \eps \|L_{\rho^*} - x \one\|_\infty .
	\]
	Set $\alpha = \frac{\tr( L_{\rho^*} (\rho^* - \sigma))}{2 \eps}.$
	Note $\alpha$ and $L_{\rho^*}$ depend on $\rho^*$. Then we have
	\[
	0\in \partial h(\rho^*) \iff \exists\,  x\in \R: \alpha = \|L_{\rho^*}-x \one\|_\infty.
	\]
	Since $\|L_{\rho^*}-x \one\|_\infty = \max_{\lambda \in \spec L_{\rho^*}} |\lambda- x|$, we have that $0\in \partial h({\rho^*})$ if and only if
	\begin{equation}
	\exists\, x\in \R: \alpha =  \max_{\lambda \in \spec L_{\rho^*}} |\lambda- x|. \label{eq:cond_x}
	\end{equation}
	Let 
	\begin{equation}
	 q(x) = \max_{\lambda \in \spec L_{\rho^*}} |\lambda - x| \label{eq:def_qx}
	 \end{equation} 
	 We immediately see that $q$ is continuous, $q(0) = \lambda_+(|L_{\rho^*}|)$, and $\lim_{z\to\pm\infty} q(z) = +\infty$. In fact, we can write a simple form for $q(x)$ as the following lemma, which is proven in \Cref{sec:proof_of_lemmas_convex}, shows. Set $\lambda_+ \equiv \lambda_+(L_{\rho^*})$ and $\lambda_- \equiv \lambda_-(L_{\rho^*})$ in the following.
	\begin{lemma} \label{lem:q(x)}
	 The quantity $q(x)$ defined by \eqref{eq:def_qx} can be expressed as follows:
	\[
	q(x) = \frac{\lambda_+ - \lambda_-}{2} + \left| \frac{\lambda_+ + \lambda_-}{2} - x \right|.
	\]
	\end{lemma}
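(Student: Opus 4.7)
The plan is to first observe that the maximum in the definition of $q(x)$ reduces to a maximum over only the two extremal eigenvalues $\lambda_+$ and $\lambda_-$ of $L_{\rho^*}$, and then to evaluate this two-term maximum by a standard identity. Concretely, for any $\lambda \in \spec L_{\rho^*}$ we have $\lambda_- \leq \lambda \leq \lambda_+$, so $\lambda$ lies in the interval $[\lambda_-,\lambda_+]$. Since the function $t \mapsto |t-x|$ is convex, its maximum over this interval is attained at one of the endpoints, giving
\[
q(x) \;=\; \max\bigl(|\lambda_+ - x|,\,|\lambda_- - x|\bigr).
\]

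Next I would prove the identity
\[
\max\bigl(|a-x|,\,|b-x|\bigr) \;=\; \frac{a-b}{2} + \Bigl|\,x - \tfrac{a+b}{2}\,\Bigr|
\]
for any real numbers $a \geq b$ and any $x \in \R$. This is immediate by splitting cases on the sign of $x - \tfrac{a+b}{2}$: if $x \geq \tfrac{a+b}{2}$ then $|x-b| \geq |x-a|$ and the right-hand side equals $\tfrac{a-b}{2} + x - \tfrac{a+b}{2} = x-b = \max(|a-x|,|b-x|)$; the symmetric calculation handles the opposite case. Applying this with $a = \lambda_+$ and $b = \lambda_-$ (which satisfy $a \geq b$ by definition) yields the claimed formula
\[
q(x) \;=\; \frac{\lambda_+ - \lambda_-}{2} + \Bigl|\,\tfrac{\lambda_+ + \lambda_-}{2} - x\,\Bigr|.
\]

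There is no real obstacle here; the only subtlety is the elementary reduction from a maximum over the whole spectrum to a maximum over the two extremal eigenvalues, which relies solely on the convexity of $|\,\cdot\,|$ (equivalently, on the fact that $|\lambda-x|$ is piecewise-linear and monotone outside $x$). Once that reduction is in place, the lemma is just the one-line identity above.
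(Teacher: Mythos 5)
Your proof is correct and follows essentially the same two-step route as the paper: reduce $q(x)$ to $\max(|\lambda_+-x|,|\lambda_--x|)$, then verify the elementary identity for the max of two absolute values. The only (immaterial) differences are that you invoke convexity for the reduction where the paper does a direct case split on $x<\lambda$ versus $x\ge\lambda$, and you handle the final identity by casing on the sign of $x-\tfrac{\lambda_++\lambda_-}{2}$ where the paper substitutes $\lambda_\pm = m\pm r$ and applies the triangle inequality.
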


	Thus, Lemma \ref{lem:q(x)} implies that the range of the function $q$ is $[\frac{\lambda_+ - \lambda_-}{2}, \infty)$. Hence, \eqref{eq:cond_x} holds if and only if
	\[
	 \frac{\lambda_+ - \lambda_-	}{2} \leq \alpha.
	\]
	Substituting  $\alpha = \frac{\tr( L_{\rho^*} (\rho^* - \sigma))}{2 \eps}$ in the above expression yields \eqref{eq:remains_to_show} and therefore concludes the proof of Theorem \ref{thm:main_result}.\hfill\proofSymbol

	\bigskip

	Below, we collect some results (proven in \Cref{sec:proof_of_lemmas_convex}) relating to the G\^ateaux gradients of relevant entropic functions which are candidates for $f$ in \Cref{thm:main_result}.
	\begin{lemma}\label{lem:Lrho_VN}
	The von Neumann entropy $\rho\mapsto S(\rho):=-\tr (\rho \log \rho)$ is G\^ateaux-differentiable at each $\rho >0$ and $\nabla S(\rho) = - \log \rho - \tfrac{1}{\log_{\mathrm{e}}(2)}\one$.
	\end{lemma}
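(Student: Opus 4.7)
My plan is to compute the directional derivative $S'(\rho; h) := \lim_{t \downarrow 0} \tfrac{1}{t}[S(\rho + th) - S(\rho)]$ directly and verify that it equals $\braket{-\log \rho - \tfrac{1}{\log_{\mathrm{e}}(2)}\one,\, h}_\text{HS}$; the linearity and continuity of this expression in $h$ then imply G\^ateaux differentiability at $\rho$ and identify the gradient. Since $\rho > 0$ and the set of positive-definite operators is open in $\Bsa$, for every $h \in \Bsa$ we have $\rho + th > 0$ for all sufficiently small $|t|$, so $S(\rho + th) = -\tfrac{1}{\log_{\mathrm{e}}(2)} \tr[(\rho + th)\ln(\rho + th)]$ is well defined, and smoothness of the functional calculus in a neighbourhood of $\rho$ guarantees that the limit above exists.

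The main tool is the trace-derivative identity
\begin{equation}
\frac{d}{dt}\, \tr f(A + tH)\bigg|_{t=0} = \tr\!\bigl(f'(A)\, H\bigr), \label{eq:trace-deriv-proposal}
\end{equation}
valid for every $C^1$ scalar function $f$ on an open interval containing $\spec(A)$ and every $A, H \in \Bsa$. Applied to $f(x) = x \ln x$, so that $f'(x) = \ln x + 1$, identity \eqref{eq:trace-deriv-proposal} gives
\[
\frac{d}{dt}\,\tr\!\bigl[(\rho + th)\ln(\rho + th)\bigr]\bigg|_{t=0} = \tr\!\bigl[(\ln\rho + \one)\, h\bigr],
\]
and multiplying by $-1/\log_{\mathrm{e}}(2)$ produces
\[
S'(\rho; h) = \tr\!\left[\Bigl(-\log\rho - \tfrac{1}{\log_{\mathrm{e}}(2)}\one\Bigr)\, h\right] = \braket{-\log\rho - \tfrac{1}{\log_{\mathrm{e}}(2)}\one,\, h}_\text{HS},
\]
exactly as claimed.

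The main obstacle is establishing \eqref{eq:trace-deriv-proposal} rigorously, which is subtle when $A$ has a degenerate spectrum. My preferred route is first-order perturbation theory: diagonalize $A = \sum_i \lambda_i\ketbra{\psi_i}{\psi_i}$, choosing the orthonormal basis within each eigenspace so as to diagonalize the restriction of $H$ there; the eigenvalues $\lambda_i(t)$ of $A + tH$ then admit the expansion $\lambda_i(t) = \lambda_i + t\braket{\psi_i|H|\psi_i} + O(t^2)$, so that $\tr f(A+tH) = \sum_i f(\lambda_i(t)) = \tr f(A) + t\,\tr(f'(A) H) + O(t^2)$. An alternative that bypasses eigenvalue perturbation entirely is to use the integral representation $\ln A = \int_0^\infty \!\bigl[(1+s)^{-1} - (A+s)^{-1}\bigr]\, ds$ (valid for $A > 0$), which yields the Fr\'echet derivative $D\ln(\rho)(h) = \int_0^\infty (\rho+s)^{-1} h\, (\rho+s)^{-1}\, ds$; then cyclicity of the trace together with the telescoping identity $\tfrac{d}{ds}[s(\rho+s)^{-1}] = (\rho+s)^{-1} - s(\rho+s)^{-2}$ (whose integral over $[0,\infty)$ equals $\one$) gives $\tr[\rho \cdot D\ln(\rho)(h)] = \tr(h)$, and the product rule $\tfrac{d}{dt}\tr[(\rho+th)\ln(\rho+th)]|_{t=0} = \tr(h \ln\rho) + \tr[\rho\cdot D\ln(\rho)(h)]$ delivers the same conclusion.
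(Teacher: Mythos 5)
Your overall plan mirrors the paper's: both reduce the claim to the scalar trace--derivative identity $\tfrac{d}{dt}\tr f(A+tH)\big|_{t=0}=\tr(f'(A)H)$ and then apply it to $f(x)=-x\log x$ (or $x\ln x$ up to a constant). Where you differ is in how that identity is established. The paper routes through the Cauchy integral representation of the analytic functional calculus, cites Stickel's formula $D(g)(A)X=\tfrac{1}{2\pi\mathrm{i}}\int_\Gamma g(z)(z\one-A)^{-1}X(z\one-A)^{-1}\d z$, and uses cyclicity of the trace to collapse the two resolvents into $(z\one-A)^{-2}$; this in fact yields the stronger Fr\'echet differentiability of $S$, from which G\^ateaux differentiability follows. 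Your first route is more elementary: degenerate first-order perturbation theory (Rellich), with the basis inside each eigenspace chosen to diagonalize $P_iHP_i$, gives $\lambda_i(t)=\lambda_i+t\langle\psi_i|H|\psi_i\rangle+O(t^2)$, and summing $f(\lambda_i(t))$ directly produces the directional derivative. This avoids complex analysis and external citations, at the cost of only directly giving the G\^ateaux/directional derivative (which is exactly what the lemma asserts). Your second route, via $\ln A=\int_0^\infty[(1+s)^{-1}-(A+s)^{-1}]\,\d s$ with the product rule and the identity $(\rho+s)^{-1}\rho(\rho+s)^{-1}=(\rho+s)^{-1}-s(\rho+s)^{-2}$ integrating to $\one$, is also correct and fully elementary in finite dimensions; it is arguably the cleanest of the three because it uses no contour and no perturbation theory. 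All three are valid; yours trade the paper's single reference for short, self-contained derivations.

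One small point worth spelling out if you write this up formally: in the perturbation-theoretic route, the $O(t^2)$ remainder in $\lambda_i(t)$ is justified by Rellich's theorem on analytic perturbation of Hermitian families (the eigenvalues of $A+tH$ can be chosen analytic in $t$, and your basis choice ensures the first-order coefficients are $\langle\psi_i|H|\psi_i\rangle$); a casual reader might otherwise worry about whether the expansion survives degeneracy of $\spec A$.
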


	\begin{corollary}\label{cor:Lrho_CE}
	The conditional entropy $\rho_{AB}\mapsto S(A|B)_\rho:=S(\rho_{AB}) - S(\rho_B)$ is G\^ateaux-differentiable at each $\rho_{AB} >0$ and $\nabla S(A|B)_{\cdot}(\rho_{AB}) =-(\log \rho_{AB} - \one_A\otimes \log\rho_B)$.
	\end{corollary}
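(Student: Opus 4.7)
My plan is to reduce the claim to the previous lemma via linearity of the Gâteaux gradient and an adjoint computation for the partial trace. Writing $\varphi(\rho_{AB}) = S(\rho_{AB}) - S(\tr_A \rho_{AB})$, the first summand has gradient given directly by \Cref{lem:Lrho_VN}, so the work is to differentiate the composition $\rho_{AB}\mapsto S(\tr_A \rho_{AB})$.

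For any $h\in \Bsa(\H_{AB})$ the directional derivative of the composition in direction $h$ is, by the chain rule and \Cref{lem:Lrho_VN} applied to $\rho_B>0$,
\begin{equation*}
\lim_{t\downarrow 0}\tfrac{1}{t}\bigl[S(\tr_A(\rho_{AB}+th)) - S(\rho_B)\bigr] = \braket{\nabla S(\rho_B),\tr_A h}_{\mathrm{HS}} = \braket{-\log\rho_B - \tfrac{1}{\log_e 2}\one_B,\tr_A h}_{\mathrm{HS}}.
\end{equation*}
Now I use that the Hilbert-Schmidt adjoint of $\tr_A: \cB(\H_{AB})\to\cB(\H_B)$ is the map $X_B\mapsto \one_A\otimes X_B$, which follows from $\tr[(\one_A\otimes X_B) Y_{AB}] = \tr[X_B \tr_A Y_{AB}]$ for all $Y_{AB}$. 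Applying this shifts the expression to $\braket{\one_A\otimes(-\log \rho_B - \tfrac{1}{\log_e 2}\one_B),h}_{\mathrm{HS}}$, which is linear and continuous in $h$, establishing Gâteaux-differentiability of the composition at $\rho_{AB}>0$ with gradient $-\one_A\otimes \log\rho_B - \tfrac{1}{\log_e 2}\one_{AB}$.

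Combining the two gradients by linearity,
\begin{equation*}
\nabla S(A|B)_{\cdot}(\rho_{AB}) = \bigl(-\log\rho_{AB} - \tfrac{1}{\log_e 2}\one_{AB}\bigr) - \bigl(-\one_A\otimes \log\rho_B - \tfrac{1}{\log_e 2}\one_{AB}\bigr) = -(\log\rho_{AB} - \one_A\otimes\log\rho_B),
\end{equation*}
as claimed; the constant $\tfrac{1}{\log_e 2}\one_{AB}$ terms cancel cleanly. The only potentially delicate point is to make sure $\rho_B>0$ whenever $\rho_{AB}>0$ so that \Cref{lem:Lrho_VN} applies to both factors, but this is immediate since positivity of $\rho_{AB}$ implies positivity of its partial trace. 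Everything else is a routine assembly of the lemma and the adjoint relation for $\tr_A$, so I do not anticipate any real obstacle.
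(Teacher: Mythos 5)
Your proof is correct and follows essentially the same route as the paper's: decompose the conditional entropy as $S_{AB} - S_B\circ\tr_A$, apply \Cref{lem:Lrho_VN} to each piece, and use the chain rule together with the Hilbert--Schmidt adjoint $\tr_A^* = \one_A\otimes(\cdot)$ to handle the composition. The only cosmetic difference is that the paper invokes a textbook proposition for the chain rule with a bounded linear map, whereas you carry out the directional-derivative and adjoint computation by hand, which is equivalent.
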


	\begin{lemma}\label{lem:grad_Renyi_ent}

	For $\alpha\in (0,1)\cup (1,\infty)$, the map $T_\alpha$ and the R\'enyi entropy $S_\alpha$ are G\^ateaux-differentiable at each $\rho \in \cD_+$, with G\^ateaux gradients
	\[
	\nabla T_\alpha(\rho) = \alpha \rho^{\alpha-1}, \qquad
	 \nabla S_\alpha (\rho) = \frac{\alpha}{1-\alpha}\frac{1}{\tr [\rho^\alpha]} \rho^{\alpha-1}.
	\]
	\end{lemma}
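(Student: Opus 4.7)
The plan is to first establish G\^ateaux differentiability of $T_\alpha$ at a faithful state $\rho\in\cD_+$, then deduce the result for $S_\alpha$ via the chain rule. Fix $\rho\in\cD_+$ and a self-adjoint direction $h\in\Bsa$. Since $\rho>0$, by continuity of the spectrum there exists $t_0>0$ such that $\rho+th>0$ for all $t\in(-t_0,t_0)$, and hence $T_\alpha(\rho+th)=\tr[(\rho+th)^\alpha]$ is well-defined via functional calculus in this range.

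The main step is to show
\[
\frac{d}{dt}\bigg|_{t=0}\tr\!\left[(\rho+th)^\alpha\right] \;=\; \alpha\,\tr[\rho^{\alpha-1}h],
\]
which is the standard ``trace-gradient'' formula $\tfrac{d}{dt}\tr[f(\rho+th)]|_{t=0}=\tr[f'(\rho)h]$ applied to $f(x)=x^\alpha$. I would justify this by writing the spectral decomposition $\rho=\sum_i\lambda_i P_i$ and invoking analytic (Rellich-type) perturbation theory: for small $t$ there exist real-analytic curves $\lambda_i(t)$ of eigenvalues and an analytic unitary $U(t)$ diagonalizing $\rho+th$, with $\lambda_i(0)=\lambda_i$. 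Then
\[
\tr[(\rho+th)^\alpha]=\sum_i \lambda_i(t)^\alpha,
\]
and differentiating at $t=0$ gives $\sum_i \alpha\lambda_i^{\alpha-1}\dot\lambda_i(0)$. First-order perturbation theory yields $\dot\lambda_i(0)=\tr(P_i h)$ (for degenerate eigenspaces one works in the adapted basis), so the sum equals $\tr[\alpha\rho^{\alpha-1}h]$. An alternative I could use, which avoids the perturbation-theory bookkeeping, is the Dunford integral representation $x^\alpha=\tfrac{1}{2\pi i}\oint_\Gamma z^\alpha(z-x)^{-1}\,dz$ around a contour $\Gamma$ enclosing $\spec(\rho)$ but avoiding the branch cut of $z^\alpha$; differentiating under the integral and using $\tfrac{d}{dt}(z-\rho-th)^{-1}|_{t=0}=(z-\rho)^{-1}h(z-\rho)^{-1}$ together with cyclicity of the trace gives the same conclusion directly. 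The map $h\mapsto\alpha\tr[\rho^{\alpha-1}h]=\braket{\alpha\rho^{\alpha-1},h}_{\mathrm{HS}}$ is manifestly linear and continuous on $\Bsa$, so $T_\alpha$ is G\^ateaux-differentiable at $\rho$ with $\nabla T_\alpha(\rho)=\alpha\rho^{\alpha-1}$.

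For $S_\alpha(\rho)=\tfrac{1}{1-\alpha}\log T_\alpha(\rho)$, I would apply the chain rule. Since $\rho>0$ we have $T_\alpha(\rho)>0$, and $\log$ is differentiable there. Writing $g(t):=T_\alpha(\rho+th)$, we obtain
\[
S_\alpha'(\rho;h)=\frac{d}{dt}\bigg|_{t=0}\frac{\log g(t)}{1-\alpha}=\frac{1}{1-\alpha}\frac{g'(0)}{g(0)}=\frac{\alpha}{(1-\alpha)\tr[\rho^\alpha]}\tr[\rho^{\alpha-1}h],
\]
which again is linear and continuous in $h$, so $S_\alpha$ is G\^ateaux-differentiable at $\rho$ with the claimed gradient.

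The main obstacle is rigorously justifying the interchange of the derivative and the trace/functional-calculus for the non-entire function $x\mapsto x^\alpha$; the Dunford-integral approach handles this cleanly because the resolvent depends analytically on $\rho$ in operator norm and the integrand is uniformly controlled on the compact contour $\Gamma$, so dominated convergence permits differentiation under the integral. Everything else is straightforward bookkeeping.
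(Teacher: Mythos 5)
Your proposal is correct, and you offer two routes, one of which coincides with the paper's and one of which does not. Your fallback route (the Dunford/Cauchy integral $x^\alpha=\frac{1}{2\pi i}\oint_\Gamma z^\alpha(z-x)^{-1}dz$, differentiating under the integral, cyclicity of trace) is essentially the paper's method: the paper invokes the pre-computed identity $\tr(D(g)(A)X)=\tr(g'(A)X)$ coming from exactly this contour-integral representation, and then applies the chain rule. The difference is one of packaging: the paper proves the stronger statement that $T_\alpha$ is Fr\'echet-differentiable on $\cD_+$ (Gateaux-differentiability then follows automatically from the general fact noted earlier in the paper), whereas you compute the directional derivative $\frac{d}{dt}\big|_{t=0}T_\alpha(\rho+th)$ directly and check linearity and continuity in $h$ by hand; both are legitimate for the lemma as stated, which only claims Gateaux-differentiability. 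Your primary route via Rellich analytic perturbation theory is genuinely different from the paper's and works, but the formula $\dot\lambda_i(0)=\tr(P_i h)$ is strictly correct only when $P_i$ is a rank-one projection; for a degenerate eigenvalue $\mu$ with eigenprojection $P_\mu$ of rank $k$, the $k$ analytic branches have first-order corrections summing to $\tr(P_\mu h)$ (they are the eigenvalues of $P_\mu h P_\mu$), and since all branches share the common factor $\alpha\mu^{\alpha-1}$ the total contribution is still $\alpha\mu^{\alpha-1}\tr(P_\mu h)$, giving $\alpha\tr(\rho^{\alpha-1}h)$ upon summing over distinct $\mu$. Your parenthetical acknowledgment of this is a bit glib; if you keep the perturbation-theoretic proof you should spell out the summation over branches, as the Dunford route (and the paper's proof) handles eigenvalue degeneracy with no extra bookkeeping, which is exactly why you flag it as cleaner.
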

	\subsection{Proofs of the remaining lemmas and corollaries \label{sec:proof_of_lemmas_convex}}

	\paragraph{Proof of \Cref{cor:main_thm_for_CE}}
	We write the conditional entropy (defined in \cref{cond-vN}) as the map 
	\begin{equation}
		 \begin{aligned}
	 S(A|B)_\cdot\quad: \quad &\cD(\H_{AB}) \to \R\\
	 & \rho \mapsto S(A|B)_\rho.
	 \end{aligned} \label{eq:CE_as_map}
	\end{equation}
	 By \Cref{cor:Lrho_CE}, for any $\rho_{AB}>0$ we have 
	$$\nabla S(A|B)_{\cdot} (\rho) =  - (\log \rho_{AB} - \one_A\otimes \log \rho_B) = - L_\rho.$$ 
	 Substituting this in the left-hand side of \eqref{eq:main_result}, we obtain
	\begin{align*}	
	\tr (L_\rho (\sigma_{AB} - \rho_{AB})) &=\tr[\sigma_{AB} \log \rho_{AB}] -\tr[\rho_{AB}\log \rho_{AB}] - \tr[\sigma_B\log \rho_B] + \tr[\rho_B\log \rho_B]\\
	&= S(A|B)_\rho + \tr[\sigma_{AB} \log \sigma_{AB}] - \tr[\sigma_{AB} (\log \sigma_{AB} - \log \rho_{AB})] \\
	&\qquad\qquad\quad - \tr[\sigma_B\log\sigma_B] + \tr[\sigma_B(\log \sigma_B - \log \rho_B)]\\
	&= S(A|B)_\rho - S(A|B)_\sigma - D(\sigma_{AB}\|\rho_{AB}) + D(\sigma_B\|\rho_B).
	\end{align*}
	The statement of \Cref{cor:main_thm_for_CE} then follows from \Cref{thm:main_result}.\hfill\proofSymbol

	\paragraph{Proof of Lemma \ref{lem:subgradT}}

	Let $\cA := \left\{ u \in \Bsa:  2 \eps \|u\|_\infty = \braket{u, A - \sigma_{AB}})  \right\}$ and set 
	\[
	C :=\{ B\in \Bsa: \| B -\sigma_{AB} \|_1 \leq \eps\}\subset \Bsa.
	\]
	Let $u \in \cA$. Then for $y\in C$,
	\begin{align*}	
	\braket{u,y - A} &= \braket{u,y - \sigma_{AB}} + \braket{u,\sigma_{AB} - A} = \braket{u,y - \sigma_{AB}} - 2 \eps \|u\|_\infty \\
	&\leq \|u\|_\infty \|y-\sigma_{AB}\|_1 - 2 \eps \|u\|_\infty \\
	&= \|u\|_\infty (\|y-\sigma_{AB}\|_1 - 2 \eps) \leq 0
	\end{align*}
	and thus $u \in \partial \delta_T(A)$. 
	On the other hand, take $u \in \partial \delta_T(A)$. Then
	\[
	\|u\|_\infty = \sup_{\|x\|_1 = 1}|\!\braket{u,x}\!|
	\]
	is achieved at some $x^* \in \Bsa$ since the set $\{x\in \Bsa: \|x\|_1 = 1\}$ is compact, using that $\Bsa$ is finite-dimensional. Then $y = \sigma_{AB} + 2\eps x \in C$. Hence,
	\[
	0 \geq \braket{u,y-A} = \braket{u, \sigma_{AB} - A} + 2 \eps \braket{u,x} = \braket{u,\sigma_{AB} - A} + 2 \eps \|u\|_\infty
	\]
	and thus $\braket{u,A- \sigma_{AB}}  \geq   2 \eps \|u\|_\infty$. 
	Then by the bound
	\[
	\braket{u,A-\sigma_{AB}} \leq \|u\|_\infty \|A-\sigma_{AB}\|_1 \leq 2 \eps \|u\|_\infty
	\] we have $u \in \cA$.\hfill\proofSymbol

\paragraph{Proof of Lemma \ref{lem:subgradD}}

	By definition,
	\begin{align*}	
	\partial \delta_{\cD}(\omega) &= \left\{ u\in \Bsa: \braket{u, y-\omega} \leq 0 \, \text{ for all }y\in \cD  \right\}.
	\end{align*}
	Since $u$ is self-adjoint, we can write its eigen-decomposition as $u = \sum_{k=1}^d \alpha_k \ketbra{k}{k}$. If $\alpha_k = \alpha_j$ for each $j,k$, then $u \propto \one$. Conversely, $\alpha_k \one \in \partial \delta_{\cD}(\omega)$ since $\tr[\alpha_k \one (y-\omega)] = \alpha_k (\tr(y)  - \tr (\omega)) = 0$.

	Otherwise, assume $\alpha_k > \alpha_j$ for some $k,j\in \{1,\dotsc, d\}$.
	Let
	\[
	y := \sum_{i, \, i\neq k,\, i \neq j} \braket{i| \omega| i} \ketbra{i}{i} + ( \braket{k| \omega| k} + \braket{j| \omega| j} )\ketbra{k}{k} \in \Bsa.
	\]
	Then $\tr (y) = \sum_i \braket{i| \omega| i}  = \tr (\omega) = 1$, and $y\geq 0$ since all of its eigenvalues are non-negative. Note $\ket{j}$ is an eigenvector of $y$ with eigenvalue zero. Next,
	\begin{align*}	
	\tr(u ( y -\omega)) &= \sum_i \braket{i| (y-\omega) |i} \\
	&= \sum_{i, \, i\neq k,\, i \neq j} (\braket{i| \omega| i} - \braket{i|\omega| i}) \\
	&\qquad +\alpha_k ( \braket{k| \omega| k} + \braket{j| \omega| j}  - \braket{k|\omega| k}) + \alpha_j ( 0 -\braket{j|\omega| j})\\
	&=(\alpha_k - \alpha_j)\braket{j|\omega| j} > 0.
	\end{align*}
	Thus, $u \not \in \partial \delta_{\cD} (\omega)$.\hfill\proofSymbol

	\paragraph{Proof of Lemma \ref{lem:q(x)}}
	First, we establish
	\begin{equation}
	q(x) = \max(|\lambda_+- x|,|\lambda_--x|). \label{eq:formq1}
	\end{equation}
	Given $\lambda\in \spec L$ and $x\in \R$, if $x < \lambda$, then 
	\[
	|x- \lambda| = \lambda - x \leq \lambda_+ - x = |\lambda_+-x|
	\]
	and otherwise
	\[
	|x- \lambda| = x-\lambda  \leq x -\lambda_- = |\lambda_--x|,
	\]
	yielding \eqref{eq:formq1}.
	Next, set $r := \frac{\lambda_+ - \lambda_-}{2}$ and $m := \frac{\lambda_+ + \lambda_-}{2}$. Then $\lambda_\pm =  m \pm r$. Therefore, for $x\in \R$,
	\[
	|\lambda_\pm - x| = |m \pm r -x| \leq r + |m - x|,
	\]
	yielding $q(x) = r + |m-x|$ as claimed. \hfill\proofSymbol

	\paragraph{Proof of Lemma \ref{lem:Lrho_VN}}

	Let us introduce the Cauchy integral representation of an analytic function. 
	If $g$ is analytic on a domain $G\subset \C$ and $A$ is a matrix with $\spec A \subset G$, then we can write
	\[
	g(A) = \frac{1}{2\i \pi} \int_\Gamma g(z) (z \one - A)\inv \d z,
	\]
	and
	\begin{equation}
	g'(A)= \frac{1}{2\i \pi} \int_\Gamma g(z) (z \one - A)^{-2} \d z, \label{eq:gprimeA}
	\end{equation}
	where $\Gamma\subset G$ is a simple closed curve with $\spec A\subset \Gamma$, and $g': G\to \C$ is the derivative of $g$. \cite{STICKEL1987} shows that with these definitions, the Fr\'echet derivative of $g$ at $A$ exists, and when applied to a matrix $X$ yields
	\begin{equation*}
	D(g)(A)X = \frac{1}{2\i \pi}\int_\Gamma g(z) (z \one - A)\inv X (z\one - A)\inv \d z. \label{eq:DgA_integral_rep}
	\end{equation*}
	Therefore, by cyclicity of the trace,
	\begin{equation} \label{eq:trace-Frechet-deriv-formula}
	\tr(D(g)(A)X) = \tr \left[\frac{1}{2\i \pi}\int_\Gamma g(z) (z \one - A)^{-2} \d z\, X\right] = \tr(g'(A) X)
	\end{equation}
	using \eqref{eq:gprimeA} for the second equality.

	We may write consider von Neumann entropy as the map
	\[
	S: \cD \to \R, \qquad \rho \mapsto S(\rho) = - \tr[ \rho \log \rho].
	\]
	Then we may write $S = \tr \circ \eta$ for $\eta(x) = - x \log x$ which is analytic on $\{\zeta\in \C: \re \zeta > 0 \}$, with derivative $\eta'(x) = - \log x - \frac{1}{\log_{\mathrm{e}}(2)}$. Then for $\rho > 0$,
	\[
	D(S)(\rho) = D(\tr \circ \eta)(\rho) = D(\tr)(\eta(\rho)) \circ D(\eta)(\rho).
	\]
	by the chain rule for Fr\'echet derivatives. By the linearity of the trace, $D(\tr)B(X) = \tr(X)$ for any $X,B\in \Bsa$. So for $X\in \Bsa$,
	\[
	D(S)(\rho) X  = \tr[D(\eta)(\rho)X].
	\]
	By \eqref{eq:trace-Frechet-deriv-formula} for $g = \eta$, we have
	\begin{equation*}
	D(S)(\rho)X = \tr[\eta'(\rho) X] = \tr [ (- \log \rho - \tfrac{1}{\log_{\mathrm{e}}(2)}\one)X].\tag*{\proofSymbol}
	\end{equation*}

	\paragraph{Proof of Corollary \ref{cor:Lrho_CE}}

	We decompose the conditional entropy map (\Cref{eq:CE_as_map}) by writing
	\[
	S(A|B)_\cdot = S_{AB}(\cdot) - S_B \circ \tr_A (\cdot): \D(H_{AB})\to \R
	\]
	where we have indicated explicitly the domain of each function in the subscript. That is, $S_{AB}: \D(\H_{AB})\to \R$ is the von Neumann entropy on system $AB$, and $S_B:\D(\H_B)\to \R$ is the von Neumann entropy on $B$.

	 Since $\tr_A: \Bsa(\H_{AB})\to \Bsa(\H_B)$ is a bounded linear map and $S_B$ concave and continuous, the chain rule for the composition with a linear map (see Prop. Prop.~3.28 of \cite{Peypouquet2015}) gives 
	 \[
	  \nabla (S_B\circ \tr_A)(\omega_{AB}) = \tr_A^* \circ \nabla S_B  (\omega_{B})= \one_A \otimes \nabla S_B(\omega_B)
	  \]
	   for any $\omega_{AB}\in \D(\H_{AB})$, where $\tr_A^*$ is the dual to the map $\tr_A$.
	As $\nabla S_B(\omega_B) = - (\log \omega_B+\tfrac{1}{\log_{\mathrm{e}}(2)} \one_B)$, we have
	\begin{align*}	
	\nabla S(A|B)_\cdot (\omega_{AB}) &= \nabla S_{AB} (\omega_{AB}) - \nabla (S_{B}\circ \tr_A)(\omega_{AB})\\
	  &= - (\log \omega_{AB} + \tfrac{1}{\log_{\mathrm{e}}(2)}\one_{AB})+ \one_A\otimes (\log \omega_B + \tfrac{1}{\log_{\mathrm{e}}(2)}\one_B)\\
	  &= -(\log \omega_{AB} -\one_A\otimes \log \omega_B). \tag*{\proofSymbol}
	\end{align*}

	\paragraph{Proof of Lemma~\ref{lem:grad_Renyi_ent}}

	The $\alpha$-R\'enyi entropy $\alpha\in(0,1)\cup(1,\infty)$ can be described by the map
	\[
	S_\alpha: \cD \mapsto \R, \qquad \rho \mapsto S_\alpha(\rho) :=  \frac{1}{1-\alpha} \log \tr (\rho^\alpha).
	\]
	Let us use the notation $\pow_\alpha : \Bsa\to \Bsa$, $A\mapsto \pow_\alpha(A) := A^\alpha$. Then
	\[
	T_\alpha= \tr \circ \pow_\alpha
	\]
	and $(1-\alpha) S_\alpha = \log \circ T_\alpha$.
	For $\rho \in \cD_+$, the function $\pow_\alpha$ is analytic on $\{\zeta\in \C: \re \zeta > 0\}$. As in the proof of \Cref{lem:Lrho_VN} then, using the chain rule and linearity of the trace we find
	\begin{equation*}	
	D(T_\alpha)(\rho) = \tr [D(\pow_\alpha)(\rho) X].
	\end{equation*}
	Invoking \eqref{eq:trace-Frechet-deriv-formula} for $g(z) = z^\alpha$, with $\Gamma \subset \{\zeta\in \C: \re \zeta > 0\}$  a simple closed curve enclosing the spectrum of $\rho$, we have
	\begin{equation}
	D(T_\alpha)(\rho)X = \alpha \tr (\rho^{\alpha-1} X).
	\end{equation}
	Moreover,
	\begin{align*}	
	D(S_\alpha)(\rho)X &= \frac{1}{1-\alpha}D(\log \circ T_\alpha) (\rho)\\
	&= \frac{1}{1-\alpha} D(\log)(\tr (\rho^\alpha)) \circ D(T_\alpha)(\rho) \\
	&=  \frac{1}{1-\alpha}\frac{\alpha}{\tr [\rho^\alpha]} \tr(\rho^{\alpha-1 }X). \tag*{\proofSymbol}
	\end{align*}

\section{Conclusion and Open Questions}
Given an arbitrary quantum state $\sigma$, we obtain local continuity bounds for the von Neumann entropy 
and its $\alpha$-R\'enyi entropy (with $\alpha \in (0,1) \cup (0,\infty)$). Our bounds are local in the sense that 
they depend on the spectrum of $\sigma$. They are seen to be stronger than the respective global continuity bounds, 
namely the Audenaert-Fannes bound for the von Neumann entropy, and the known bounds on the $\alpha$-R\'enyi entropy \cite{Renyi-CMNF,Ras2011},  cf.~\Cref{fig:local-VN-vs-AF,fig:local-Renyi-vs-global}.

We obtain these bounds by fixing $\eps \in (0,1)$ and explicitly constructing states $\rho^*_\eps(\sigma)$ and 
$\rho_{*,\eps}(\sigma)$ which, respectively, have maximum and minimum entropies among all states which lie in an $\eps$-ball (in trace distance)
around $\sigma$. These states satisfy a majorization order, and the minimum (resp.~maximum) entropy state
is the maximum (resp.~minimum) state in this order, consistent with the Schur concavity
of the entropies considered. The state $\rho_{*,\eps}(\sigma)$ lies on the boundary of the $\eps$-ball, as does
the state $\rho^*_\eps(\sigma)$, unless $\rho^*_\eps(\sigma)$ is the completely mixed state. The state $\rho^*_\eps(\sigma)$ 
is also the unique maximum entropy state. Moreover,
it has certain interesting properties: it satisfies a semigroup property, and saturates the triangle inequality for the completely mixed
state and $\sigma$ (cf.~\Cref{thm:max-min-ball}). The explicit form of these states also allows us to obtain expressions
for smoothed min- and max- entropies, which are of relevance in one shot information theory. 

Our construction of the maximum entropy state is described and motivated by a more general mathematical
result, which employs tools from convex optimization theory (in particular Fermat's Rule), and might be of independent interest. 
To state the result, we introduce the notion of G\^ateaux differentiability, which can be viewed as an extension of
directional differentiability to arbitrary real vector spaces. Examples of G\^ateaux differentiable functions include
the von Neumann entropy, the conditional entropy and the $\alpha$-R\'enyi entropy. Our result (\Cref{thm:main_convex_result}) provides
a necessary and sufficient condition under which a state in the $\eps$-ball maximizes a concave 
G\^ateaux-differentiable function. 
Even though we consider optimization over the $\eps$-ball in trace distance in this paper, the techniques used to prove \Cref{thm:main_convex_result} may be able to be extended to other choices of the $\eps$-ball (e.g.~over the so-called purified distance).

The majorization order, $\rho^*_\eps(\sigma) \prec \rho \prec \rho_{*,\eps}(\sigma)$ $\, \forall \, \rho \in B_\eps(\sigma)$,
established in \Cref{thm:max-min-ball}  has interesting implications regarding the transformations of bipartite purification of these states under local operations and classical
communication (LOCC), via Nielsen's majorization criterion \cite{Nielsen-LOCC}, which we intend to explore in future work.

\section*{Acknowledgements}
We are grateful to Mark Wilde for suggesting an interesting problem which led us to the problems studied in this paper, and to Andreas Winter for his suggestion of looking at local continuity bounds. We would also
like to thank Koenraad Audenaert, Barbara Kraus, Sam Power, Cambyse Rouz\'e and Stephanie Wehner for helpful discussions. We would particularly like to thank Hamza Fawzi for insightful discussions about convex optimization.

\printbibliography
\end{document}